\documentclass[10pt]{article}

\usepackage{amsmath, amssymb, mathrsfs, vmargin, theorem, dsfont, color}
\usepackage[utf8]{inputenc}
\usepackage[T1]{fontenc}
\usepackage{endnotes}
\usepackage{graphicx}
\usepackage[margin=1.25in]{geometry}

\setmarginsrb{30mm}{30mm}{25mm}{30mm}{0pt}{0pt}{0pt}{26pt}

\theoremheaderfont\scshape
\newtheorem{theorem}{Theorem}[section]

\newtheorem{assumption}{Assumption}[section]

\newtheorem{lemma}{Lemma}[section]
\theorembodyfont{\rmfamily}

\newtheorem{remark}{Remark}[section]

\newtheorem{example}{Example}[section]

\newenvironment{proof}[1][]{\noindent\textit{Proof#1.} }{\vskip\baselineskip}

\renewcommand\hat[1]{\widehat{#1}}



\newcommand\qed{\hfill$\Box$}

\newcommand\esssup{\operatorname{ess\;sup}}

\newcommand\bcdot{\ensuremath{%
  \mathchoice%
   {\mskip\thinmuskip\lower0.2ex\hbox{\scalebox{1.5}{$\cdot$}}\mskip\thinmuskip}}%
   {\mskip\thinmuskip\lower0.2ex\hbox{\scalebox{1.5}{$\cdot$}}\mskip\thinmuskip}%
   {\lower0.3ex\hbox{\scalebox{1.2}{$\cdot$}}}%
   {\lower0.3ex\hbox{\scalebox{1.2}{$\cdot$}}}%
}

\numberwithin{equation}{section}

\begin{document}
\title{Utility indifference valuation for non-smooth payoffs\\ with an application to power derivatives\thanks{The authors thank the Finance and Sustainable Development Chair sponsored
by EDF and CACIB for their support. They are also grateful to Umut \c Cetin, Huyên Pham and Anthony Réveillac for fruitful discussions.}}
\author{Giuseppe Benedetti\footnote{CREST (Finance/Insurance Laboratory) and Universit\'e Paris-Dauphine. E-mail: giuseppe.benedetti@ensae.fr.}   \quad   Luciano Campi\footnote{University Paris 13, CREST (Finance/Insurance Laboratory) and FiME. E-mail: campi@math.univ-paris13.fr.}}
\date{\today}
\maketitle

\begin{abstract}
We consider the problem of exponential utility indifference valuation under the simplified framework where traded and nontraded assets are uncorrelated but where the claim to be priced possibly depends on both. Traded asset prices follow a multivariate Black and Scholes model, while nontraded asset prices evolve as generalized Ornstein-Uhlenbeck processes. We provide a BSDE characterization of the utility indifference price (UIP) for a large class of non-smooth, possibly unbounded, payoffs depending simultaneously on both classes of assets. Focusing then on European claims and using the Gaussian structure of the model allows us to employ some BSDE techniques (in particular, a Malliavin-type representation theorem due to \cite{ma.02}) to prove the regularity of $Z$ and to characterize the UIP for possibly discontinuous European payoffs as a viscosity solution of a suitable PDE with continuous space derivatives. The optimal hedging strategy is also identified essentially as the delta hedging strategy corresponding to the UIP. Since there are no closed-form formulas in general, we also obtain asymptotic expansions for prices and hedging strategies when the risk aversion parameter is small. Finally, our results are applied to pricing and hedging power derivatives in various structural models for energy markets.\medskip\\
\textbf{Keywords :} Utility Indifference Pricing, Optimal Investment, Backward Stochastic Differential Equations, Viscosity Solutions, Electricity Markets.\smallskip\\
\textbf{MS Classification (2010) :} 49L25, 49N15, 60H30, 91G80.

\end{abstract}


\section{Introduction}
This paper deals with the pricing and hedging of derivatives in incomplete markets, where the source of incompleteness comes from the fact that some of the assets are assumed not to be traded. As it is well known, such a situation generally prevents from constructing a perfect hedge and therefore to obtain a unique price as a result of classical no-arbitrage arguments (at least when contingent claims also depend on non-traded assets). In the absence of a unique equivalent martingale measure, indeed, arbitrage theory only allows to identify intervals of viable prices, which makes it necessary to develop other criteria to actually choose a unique price. The easiest and most conservative choice would be (for the seller) to pick the super-replicating price, thus eliminating all the risks by transferring to the buyer the entire cost of the incompleteness. Unfortunately this procedure often gives rise to unreasonably high prices which do not usually match with real data, as it is quite unlikely that one counterpart will completely refuse to take any risk at all. For this reason, other paradigms have been introduced in the literature: one example is Local Risk Minimization (see \cite{schweizer.01}) which does not aim at canceling the hedging risk but rather at minimizing it according to some suitable criterion. Another (partial) way out is the idea of introducing in the market some new assets which are correlated to the non-tradable ones and can therefore be exchanged in the hope of improving the quality of the hedge (see \cite{Davis.97}).
Of course when dealing with the optimal balancing of risks, the standard mathematical way to tackle the problem is the introduction of utility functions, which allow to describe in an easy and concise fashion the amount of uncertainty that an agent is willing to bear. This is at the basis of the well established economic principle of the \textit{certainty equivalent}, stating that the price of a claim should be the one that makes the agent indifferent between possessing the claim or its (certain) price. Such a method has the advantage of being both economically sound and mathematically and computationally simple, requiring at most the numerical evaluation of an equation. This procedure, however, does not seem so appropriate when at least some of the assets can be traded on a financial market: in fact, if the agent is in the position of performing some kind of partial hedging, this should be incorporated in the pricing paradigm, and investors can no longer be expected to passively require an equivalent compensation for claims without engaging in any trading activity. This idea is at the heart of the pricing method that we consider in this paper, i.e. \textit{utility indifference pricing}, a subject that has attracted quite a lot of attention in recent years (see Henderson and Hobson's survey \cite{henderson.04}), in particular as a consequence of the important developments in the theory of optimal investment.\medskip\\
In this article we consider a model for traded and nontraded assets, that are supposed to be uncorrelated. This type of model is usually called semi-complete product market model (as in, e.g., \cite{becherer.03}). The prices of traded assets follow a complete multivariate Black-Scholes model, while the prices of non traded ones evolve as generalized Ornstein-Uhlenbeck processes. This is mainly motivated by the recent literature on structural models for electricity markets, which aim at describing electricity prices as a result of the interaction of some underlying structural factors that can be either exchanged on a financial markets (like fuels) or not (like demand and fuel capacities), and which are often supposed to have simple Gaussian dynamics.
\\In our framework the payoff is supposed to be a function of both traded and nontraded assets, contrarily to most of the literature where the payoff depends only on the nontraded assets which are assumed to be correlated to the traded ones, so that one usually works directly with the correlation of the traded assets with the payoff to be hedged (see, for example, \cite{henderson.02}, \cite{becherer.06}, \cite{Ankirchner.07}, \cite{freischw.08}, \cite{imkeller.12}). An exception is \cite{sircar.zari}, where the payoff considered depends on both types of assets in a bidimensional stochastic volatility framework where the payoff is assumed to be smooth and bounded. Relying on correlation can be advantageous in some situations but not, in general, in the context of structural models, where the expressions for correlations usually become quite complex even if the model is relatively simple. In these cases it is often more convenient to avoid the computation of correlation, by leaving the payoff expressed as a function of both traded and nontraded assets (by eventually exploiting their particular structure, for example their independence or Gaussian properties, to simplify the problem).\\
The typical tool that is used to analyse utility indifference prices is the theory of (quadratic) BSDEs, that was first introduced in a similar context by the seminal paper \cite{nicole.00} and which is particularly convenient as it generalizes with no additional effort to a large class of (possibly non-Markovian) settings (for example \cite{becherer.06}). Classical results require, however, boundedness or at least exponential integrability of the claim and they are only capable to identify the optimal hedging strategy when the final claim is bounded. This is a serious drawback if we notice that common payoff functions in structural models for electricity prices are linear functions of geometric brownian motions (wich are neither bounded nor exponentially integrable).
\\The first contribution of this work is therefore to prove the existence of (exponential) utility indifference prices without requiring boundedness or exponential integrability for the payoff, but only using sub- and super-replicability instead. Nonetheless, the question remains of whether we can actually interpret the $Z$-part of the BSDE in terms of the optimal hedging strategy in this case, given in particular that we lack the BMO property that is generally used to verify this (see \cite{imkeller.05}). With this motivation in mind, we proceed to study the regularity and to get some estimates on $Z$, by using the stochastic control representation of the problem or some Malliavin-type formulas for BSDEs in the spirit of \cite{zhang.05} or \cite{ma.02}.
This is why in the second part of the paper we focus on European payoffs, by allowing them in particular to be possibly discontinuous, which is often the case in models aiming to describe regime-changing features. Given our simple Gaussian modeling framework, considering European payoffs leads naturally to a link with PDEs: our second contribution, indeed, is to describe the price as a viscosity solution of a suitable PDE and, most importantly, to prove that the solution is sufficiently regular to possess continuous first derivatives (in space), providing a useful representation for $Z$ which allows to write the candidate optimal hedging strategy in a similar way as the usual delta hedge. This candidate strategy is then proved to be optimal under some growth assumptions on the payoff (which does not, however, need to be bounded). We stress that our approach is crucially based on the fact that the driver in our BSDE is quadratic in the components of $Z$ corresponding to the nontraded assets whereas it is linear in the other components. Since there is in general no hope to solve the PDE explicitly, we also provide asymptotic expansions for the price (adapting a result in \cite{monoyios.12}) and (under some additional regularity) for the optimal hedging strategy. As already mentioned, we finally provide an application to the pricing of power derivatives under a structural modeling framework.\medskip
\\The paper is organized as follows. We introduce the model in Section \ref{Sec:model}, along with the definition of trading strategies and utility indifference prices, by also deriving some bounds and pointing out the connection with the related concept of certainty equivalent. In Section \ref{Sec:UIP_BSDE} we use some results of the theory of optimal investment (due to \cite{imkeller.05} and \cite{OZ.99}) in order to derive a BSDE representation of the price, without the assumption of boundedness or exponential integrability of the claim that are usually encountered in the literature on quadratic BSDEs (for example \cite{Kob.00} or \cite{Briand.07}). In Section \ref{european} we focus on European payoffs and we express the price and the optimal hedge in terms of viscosity solutions of a certain PDE.  Particular attention is devoted to the case of discontinuous payoffs, that we are able to treat by extending some of the techniques found in \cite{zhang.05}. Asymptotic expansions are also derived following essentially the lines of \cite{Davis.97} and \cite{monoyios.12}. In Section \ref{Sec:electricity} we finally present some applications to electricity markets.\medskip\\
\textbf{Some useful notation:} Let $T>0$ be a finite time horizon and let $(\Omega, \mathbb F, P)$ with $\mathbb F = (\mathcal F_t)_{0\leq t\leq T}$ be a filtered probability space satisfying the usual conditions. For any real number $p>0$, we will denote $\mathbb H ^p (\mathbb R^n)$ (resp. $\mathbb H^p_\textrm{loc} (\mathbb R^n)$) the set of all $\mathbb F$-predictable 
$\mathbb R^n$-valued processes $Z = (Z_t)_{0\leq t \leq T}$ such that $E [\int_0 ^T \| Z_t \|^p dt ] < +\infty$ a.s. (resp. $\int_0 ^T \| Z_t \|^p dt < +\infty$). 
\\ For a vector $x$, we denote $x'$ its transpose and $\textrm{diag}(x)$ the diagonal matrix such that $\textrm{diag}(x)_{ii}= x_i$ for all $i$. For a matrix $\alpha$, we denote $\alpha_{i \bcdot}$, $\alpha_{\bcdot j}$ its $i$'th row or $j$'th column and $\alpha^{-n} := (\alpha^{-1})^n$. For any positive integer $d\geq 1$, we denote $0_d$ the $d$-dimensional zero vector.

\section{The model}\label{Sec:model}
We place ourselves on a filtered probability space $(\Omega,\mathbb F=(\mathcal F_t)_{0\leq t\leq T},P)$, where $\mathbb F$ is the natural filtration generated by the $(n+d)$-dimensional Brownian motion $W=(W^S,W^X)$ and satisfying the usual conditions of right-continuity and $P$-saturatedness. Throughout the paper we will use the notation $y^S$ and $y^X$ to distinguish the first $n$ and last $d$ components of a vector $y=(y^S,y^X)$ of size $n+d$. The distinction is useful, as we will see, to separate tradable and non tradable assets. Moreover, we will denote $\mathbb F^S = (\mathcal F_t ^S)_{0\leq t \leq T}$ and $\mathbb F^X = (\mathcal F_t ^X)_{0\leq t \leq T}$ the natural filtrations generated, respectively, by $W^S$ and $W^X$. The notation $E_t$ will denote conditional expectations under $P$ and with respect to the $\sigma$-field $\mathcal F_t$. \medskip\\
\textbf{Tradable assets.} We consider a finite horizon multivariate Black and Scholes market model with $n$ tradable risky assets with dynamics
\begin{equation}	\label{fuels_dynamics}
\frac{dS^i_t}{S^i_t}=\mu_i dt+\sigma_{i \bcdot} dW^S_t,\quad i=1,\ldots ,n
\end{equation}
where $\sigma$ is a $n\times n$ invertible matrix and $\sigma_{i \bcdot}$ denotes its $i$-th row.
We assume for the sake of simplicity that the interest rate is zero.

\begin{remark}
The results of this paper can be easily extended to the case where the drift and the volatilities in the dynamics of the tradable assets $S$ are bounded functions of these assets, i.e. of the form $\mu (S_t)$ and $\sigma (S_t)$. For the sake of simplicity, we will work under the assumption that they are linear as in (\ref{fuels_dynamics}).
\end{remark}
\textbf{Nontradable assets.} Apart from traded assets, we introduce $d$ non traded assets following the (generalized) Ornstein-Uhlenbeck processes
\begin{equation}
dX^i_t=(b_i (t) -\alpha_i X^i_t) dt+\beta_{i\bcdot} dW^{X}_t, \quad i=1,\ldots,d, \label{X_dynamics}
\end{equation}
where $b_i : [0,T] \to \mathbb R$ is a bounded measurable function and the $\beta_{i\bcdot}$ is the $i$-th row of the $d\times d$-dimensional matrix $\beta$. It is important to remark that as they are defined, tradable and non tradable assets are independent. This is a crucial assumption in what follows. From the modeling viewpoint this is pretty natural since the application we have in mind is to energy markets, where the non tradable assets typically are the electricity demand and the power plant capacities, while the tradable ones are the fuels used in the power production process (such as, for instance, gas, oil and coal). \medskip\\
\textbf{Equivalent martingale measures.} If the market filtration were $\mathbb F^S$ (i.e. that generated by $W^S$ only), then the market would be complete and the unique martingale measure $Q^0$ would be defined by the measure change
$$\frac{dQ^0}{dP}=\mathcal E_T(-\theta \cdot W^S),$$
where $\theta=\sigma^{-1}\mu$ and $\mathcal E$ denotes the stochastic exponential. When considering the whole filtration $\mathbb F$, the market is clearly no longer complete and the set $\mathcal M$ of absolutely continuous martingale measures for $S=(S^1,\ldots,S^n)$ is no longer a singleton. As is well known from the literature (see Schweizer's survey \cite{schweizer.01}), the measure $Q^0$, which is called minimal martingale measure (MMM henceforth), still plays an important role for pricing and hedging derivatives. Remark that in our case the elements of $\mathcal M$ are of the form $\zeta_T=\frac{dQ^0}{dP}M_T$, where the process $M$ is nonnegative and satisfies $E[\zeta_T]=1$. The dynamics of $M$ can be written as
\begin{equation}	\label{mart_meas}
dM_t=\eta_tdW^X_t	\quad M_0=1
\end{equation}
for some $\mathbb F$-predictable process $\eta$. The choice $\eta=0$ (i.e. $M=1$) corresponds to the MMM.\\
We will denote $W^{S,0}=W^S+\theta t$, $W^0=(W^{S,0},W^X)$, and $E^0$ the expectation operator under $Q^0$. Notice that Girsanov's theorem clearly implies that $W^0$ is a $(n+d)$-dimensional Brownian motion under this measure.\medskip\\
\textbf{Trading strategies.} In this model, the wealth process of an agent starting from an initial capital $v\in \mathbb R$ and trading in the risky assets $S$ in a self-financing way over the period $[0,T]$ can be written
$$V^v_t(\pi)=v+\int_0^t \pi'_s (\mu ds+\sigma dW^S_s)=v+\int_0^t \pi'_s\sigma (\theta ds+dW^S_s)$$
where $\pi_s$ is a $n\times 1$ vector representing the investor's trading strategy (in euros) at time $s$ and $\mu$ is a column vector containing the $\mu_i$'s. 
We will need to be more precise later about admissibility conditions on strategies. It is then useful to introduce the following sets:
\begin{eqnarray*}\mathcal H &=& \{\pi \in \mathbb H ^2_\textrm{loc}  (\mathbb R^n) : V^0 (\pi) \mbox{ is a }Q-\mbox{supermartingale for all } Q\in\mathcal M_E\}\\
\mathcal H_M &=& \{\pi \in \mathbb H ^2_\textrm{loc}  (\mathbb R^n) : V^0 (\pi) \mbox{ is a }Q-\mbox{martingale for all } Q\in\mathcal M_E\}\\
\mathcal H_b &=& \{\pi \in \mathbb H ^2_\textrm{loc}  (\mathbb R^n) : V^0 (\pi) \mbox{ is uniformly bounded from below by a constant}\},\end{eqnarray*}
where $\mathcal M_E$ denotes the subset of measures in $\mathcal M$ with finite relative entropy. 
\medskip\\
\textbf{Utility indifference pricing.} We will focus our interest in contingent claims which can depend on both tradable and non tradable assets and which satisfy the following assumption.
\begin{assumption} \label{bounds}
The claim $f$ belongs to $L^2(Q^0,\mathcal F_T)$, it is super/sub-replicable, i.e.
$$V^{v_1}_T(\pi_1)\leq f\leq V^{v_2}_T(\pi_2)$$
for some $v_1,v_2\in \mathbb R$ and $\pi_1\in\mathcal H_M$, $\pi_2\in\mathcal H$. The random variables $V^{v_1}_T(\pi_1), V^{v_2}_T(\pi_2)$ lie in $L^1(Q^0,\mathcal F_T)$. 
\end{assumption}

We focus in this paper on the case of exponential utility $U(x)=-e^{-\gamma x}$, $\gamma >0$, and we look at the buying utility indifference price $p^b$ of the claim $f$ as implicitly defined as a solution to
\begin{equation}     \label{UIP_def}
\sup_{\pi} E\left[U\left(V_T^{v-p^b}(\pi)+ f\right)\right]=\sup_{\pi} E\left[U(V_T^v(\pi))\right]
\end{equation}
where $v\in\mathbb R$ is the initial wealth and the supremum is either taken over $\mathcal H$ or $\mathcal H_b$. It is easily seen that under exponential utility the price is independent of the initial agent's wealth.
By Theorem 1.2 in \cite{OZ.99} the suprema in definition \eqref{UIP_def} are unchanged whether the optimizing set is $\mathcal H$ or $\mathcal H_b$, though the maximum will in general be attained in the larger set $\mathcal H$. \\We will call \textit{optimal hedging strategy} and denote it $\Delta$ the difference between the maxima $\hat \pi^f$ and $\hat \pi^0$ in, respectively, the LHS and RHS of \eqref{UIP_def}, i.e. $\Delta = \hat \pi^{f} - \hat \pi^0$.\\
The selling price $p^s$ is defined similarly as the solution to
$$\sup_{\pi} E\left[U\left(V_T^{v+p^s}(\pi)-f\right)\right]=\sup_{\pi} E\left[U(V_T^v(\pi))\right].$$
We start with a simple preliminary result showing how these prices are related to the expected payoff under the MMM (which can also be interpreted as a price under a certain risk minimizing criterion, see \cite{schweizer.01}). The next result can also be found in \cite{hobson.05}, Theorem 3.1 under slightly different assumptions. We provide here another proof which is perhaps a little bit more general as it is only based on duality (without requiring their Assumption 2.2, even though it would be satisfied in our particular context), and which is also useful to compare utility indifference prices with certainty equivalents (see Remark \ref{rmk_certainty}).

\begin{lemma}   \label{pricebound}
It holds that $$v_1\leq p^b\leq E^0[f]\leq p^s\leq v_2,$$
where $v_1,v_2$ are the same as in Assumption \ref{bounds}.
\end{lemma}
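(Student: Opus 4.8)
The plan is to establish the chain of inequalities from the outside in, using the sub/super-replication bounds of Assumption \ref{bounds} together with a duality argument for the utility indifference price. I will prove the three "inner" comparisons $v_1 \le p^b$, $p^b \le E^0[f]$, and (symmetrically) $E^0[f] \le p^s \le v_2$; the middle two follow from the outer ones by the sign-reversal $f \mapsto -f$ that links $p^s$ and $p^b$, so really only the buyer's side needs work.

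First, for the bound $p^b \le E^0[f]$, I would exploit the fact that $Q^0$ (the minimal martingale measure) has finite relative entropy in this Gaussian product setting, so $Q^0 \in \mathcal M_E$. Suppose toward a contradiction that $p^b > E^0[f]$. Recall the classical dual representation of the exponential utility maximization value: for any claim $g$,
\begin{equation*}
\sup_{\pi} E\left[U\left(V_T^{v}(\pi) + g\right)\right] = -\exp\left(-\gamma v - \gamma \inf_{Q \in \mathcal M_E}\left\{ E_Q[g] + \tfrac1\gamma H(Q\,|\,P)\right\}\right),
\end{equation*}
which holds under our integrability hypotheses (this is the content of the optimal investment results of \cite{imkeller.05}, \cite{OZ.99} invoked in the paper). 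Plugging this into both sides of the defining equation \eqref{UIP_def} for $p^b$ and cancelling the common factor gives
\begin{equation*}
p^b = \inf_{Q \in \mathcal M_E}\left\{ E_Q[f] + \tfrac1\gamma H(Q\,|\,P)\right\} - \inf_{Q \in \mathcal M_E}\left\{ \tfrac1\gamma H(Q\,|\,P)\right\} \le E^0[f] + \tfrac1\gamma H(Q^0\,|\,P) - \tfrac1\gamma H(Q^0\,|\,P) = E^0[f],
\end{equation*}
where the inequality uses $Q^0$ as a (suboptimal) candidate in the first infimum and the fact that $Q^0$ actually minimizes relative entropy among martingale measures in this independence setup (so the second infimum equals $H(Q^0|P)/\gamma$). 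This simultaneously yields $p^b \le E^0[f]$ and, by applying the same computation to $-f$, the bound $E^0[f] \le p^s$.

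For the outer bounds $v_1 \le p^b$ and $p^s \le v_2$: since $f \ge V_T^{v_1}(\pi_1)$ with $\pi_1 \in \mathcal H_M$, the strategy $\pi_1$ is a martingale under every $Q \in \mathcal M_E$, so $E_Q[V_T^{v_1}(\pi_1)] = v_1$ for all such $Q$; hence $E_Q[f] \ge v_1$ for every $Q \in \mathcal M_E$, and the dual formula above gives $p^b = \inf_Q\{E_Q[f] + \tfrac1\gamma H(Q|P)\} - \inf_Q\{\tfrac1\gamma H(Q|P)\} \ge \inf_Q\{v_1 + \tfrac1\gamma H(Q|P)\} - \inf_Q\{\tfrac1\gamma H(Q|P)\} = v_1$. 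The bound $p^s \le v_2$ is obtained the same way: from $f \le V_T^{v_2}(\pi_2)$ with $\pi_2 \in \mathcal H$, the process $V^0(\pi_2)$ is a $Q$-supermartingale for $Q \in \mathcal M_E$, so $E_Q[f] \le E_Q[V_T^{v_2}(\pi_2)] \le v_2$, and feeding this into the dual representation of $p^s = \inf_Q\{E_Q[-f] + \tfrac1\gamma H(Q|P)\} - \inf_Q\{\tfrac1\gamma H(Q|P)\}$, rewritten as $p^s = \sup_Q\{E_Q[f] - \tfrac1\gamma H(Q|P)\} + \inf_Q\{\tfrac1\gamma H(Q|P)\}$... actually more directly $p^s \le \sup_Q E_Q[f] \le v_2$ once one checks the entropy terms cancel in the supremum representation; I would write this out carefully.

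The main obstacle is justifying the duality formula in our non-bounded, non-exponentially-integrable setting: the classical statements of the exponential utility dual (e.g. in \cite{Kob.00}) assume the claim is bounded, whereas here $f$ is only in $L^2(Q^0)$ and sandwiched between wealth processes. The key is that Assumption \ref{bounds} is tailored precisely so that $Q^0$ itself makes all the relevant expectations finite ($f, V_T^{v_1}(\pi_1), V_T^{v_2}(\pi_2) \in L^1(Q^0)$), and one restricts attention to $\mathcal M_E$; the cited results of \cite{OZ.99} (their Theorem 1.2) and \cite{imkeller.05} are stated at the level of generality needed to cover this. An alternative, more self-contained route avoiding the full dual formula would be to argue directly from \eqref{UIP_def}: for $v_1 \le p^b$, if $p^b < v_1$ then by super-replication $V_T^{v - p^b}(\pi) + f \ge V_T^{v-p^b}(\pi) + V_T^{v_1}(\pi_1) = V_T^{v - p^b + v_1}(\pi + \pi_1) > V_T^{v}(\pi + \pi_1 - \text{(something)})$... but making the wealth comparison precise while staying inside the admissible class $\mathcal H$ is delicate, so I would prefer the duality route and lean on the cited theorems for its validity.
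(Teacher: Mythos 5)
Your overall strategy is the same as the paper's --- both proofs run through the exponential-utility duality of \cite{OZ.99} --- but you use the entropy form of the dual whereas the paper keeps the Fenchel-conjugate form $\inf_\delta\inf_{\zeta_T}\{\delta(v-p^b)+\delta E[\zeta_T f]+E[U^*(\delta\zeta_T)]\}$ and only ever substitutes $Q^0$ as a candidate. The one genuine gap in your argument is the step you state as a "fact": that $Q^0$ minimizes relative entropy over $\mathcal M_E$ in this product setting. This is not a side remark --- it is the entire non-trivial content of the bound $p^b\le E^0[f]$. Without it, plugging $Q^0$ into the first infimum only yields
\begin{equation*}
p^b\;\le\;E^0[f]+\tfrac1\gamma\Big(H(Q^0\,|\,P)-\inf_{Q\in\mathcal M_E}H(Q\,|\,P)\Big),
\end{equation*}
and the correction term is nonnegative, i.e.\ the inequality points the wrong way unless $Q^0$ is the entropy minimizer. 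The paper proves exactly this (in conjugate form) by conditioning on $\mathcal F^S_T$: writing $\zeta_T=\frac{dQ^0}{dP}M_T$, convexity of $U^*$ and conditional Jensen give $E[U^*(\delta\zeta_T)]\ge E[U^*(\delta\frac{dQ^0}{dP}E[M_T|\mathcal F^S_T])]$, and the product structure (independence of $W^S$ and $W^X$) is then used to show $E[M_T|\mathcal F^S_T]=1$ a.s.\ --- via the supermartingale property of $M$ in the enlarged filtration plus the normalization $E[N_TM_T]=1$. You would need to supply this argument (or an equivalent verification that MMM $=$ MEMM here) for your proof to close; citing it as folklore is not enough in a paper whose stated point is to handle claims outside the standard integrability classes.

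Two smaller remarks. First, your passage from the \cite{OZ.99} conjugate dual to the clean identity $p^b=\inf_Q\{E_Q[f]+\frac1\gamma H(Q|P)\}-\inf_Q\{\frac1\gamma H(Q|P)\}$ deserves a word of justification under Assumption \ref{bounds} (the sub-replication bound $E_Q[f]\ge v_1$ at least guarantees the first infimum is well defined); the paper sidesteps this by never optimizing over $\delta$ explicitly. Second, your treatment of the outer bounds differs from the paper's --- you push $E_Q[V^{v_1}_T(\pi_1)]=v_1$ and $E_Q[V^{v_2}_T(\pi_2)]\le v_2$ through the dual formula, while the paper handles $p^s\le v_2$ by a direct primal comparison ($V^{v}_T(\pi)+V^{v_2}_T(\pi_2)-f\ge V^v_T(\pi)$, hence the seller's value function dominates). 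Both are valid; the primal route is more elementary and avoids the dual formula entirely for that piece. There is also a sign slip in your $p^s$ computation ($p^s=-p^b(-f)$, so the displayed $\inf$ expression is $-p^s$, not $p^s$), though you recover the correct conclusion $p^s\le\sup_Q E_Q[f]\le v_2$ afterwards.
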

\begin{proof}
We start from the well-known duality result (see \cite{OZ.99}, Theorem 1.1):
\begin{equation}    \label{duality}
\sup_\pi E[U(V^{v-p^b}_T(\pi)+ f)]=\inf_{\delta>0} \inf_{\zeta_T\in\mathcal M}\left\{\delta (v-p^b)+\delta E[\zeta_T f]+E[U^*(\delta\zeta_T)]\right\}
\end{equation}
where $\zeta_T=\frac{dQ^0}{dP}M_T$ as in \eqref{mart_meas} and $U^*$ is the conjugate of $U$.
By taking $M=1$ (equivalently, $\eta=0$) we get
\begin{equation*}
\sup_\pi E[U(V^{v-p^b}_T(\pi)+ f)]\leq\inf_{\delta>0} \left\{\delta \left(v-p^b+ E^0\left[f\right]\right)+E\left[U^*\left(\delta\frac{dQ^0}{dP}\right)\right]\right\}.
\end{equation*}
Now by using \eqref{UIP_def} and \eqref{duality} for $f=0$, we get that
\begin{equation}    \label{duality_2}
\inf_{\delta>0} \inf_{\zeta_T\in\mathcal M}\left\{\delta v+E[U^*(\delta\zeta_T)]\right\}\leq\inf_{\delta>0} \left\{\delta \left(v-p^b+ E^0\left[f\right]\right)+E\left[U^*\left(\delta\frac{dQ^0}{dP}\right)\right]\right\}.
\end{equation}
We want to show that the minimizer in the LHS corresponds to the MMM.
Remark now that for each $\delta>0$ and $\zeta_T=\frac{dQ^0}{dP}M_T$ by using convexity of $U^*$ and conditional Jensen's inequality we get
\begin{equation}	\label{duality_ineq}
\begin{split}
E[U^*(\delta\zeta_T)]&=E\left[U^*\left(\delta\frac{dQ^0}{dP}M_T\right)\right]=E\left[E\left[U^*\left(\delta\frac{dQ^0}{dP}M_T\right)|\mathcal F^S_T\right]\right]\\
&\geq E\left[U^*\left(\delta E\left[\frac{dQ^0}{dP}M_T|\mathcal F^S_T\right]\right)\right]=E\left[U^*\left(\delta \frac{dQ^0}{dP} E\left[M_T|\mathcal F^S_T\right]\right)\right]\\
&=E\left[U^*\left(\delta \frac{dQ^0}{dP}\right)\right]
\end{split}
\end{equation}
where we used the fact that $E[M_T | \mathcal F^S_T]=1$ a.s., which can be shown as follows. By defining $N_t=E_t\left[\frac{dQ^0}{dP}\right]=E\left[\frac{dQ^0}{dP} | \mathcal F^S_t\right]$
we have that
$$E[N_T M_T]=1=N_0 M_0$$
since $N_T M_T$ is a martingale measure density for $S$.
Since $S$ and $X$ are independent, the process $M$ in \eqref{mart_meas} is a positive local martingale in the larger filtration $(\mathcal F^S_T\vee \mathcal F^X_t)_{0\leq t\leq T}$, hence a supermartingale, implying in particular
$$E[M_T | \mathcal F^S_T]\leq E[M_0| \mathcal F_T ^S] = 1.$$
If the previous inequality was strict on a set $F\in\mathcal F^S_T$ of strictly positive probability then we would get the contradiction
$$E[N_T M_T]=E\left[N_TE[M_T| \mathcal F^S_T]\right]< M_0 E[N_T]=1.$$
Therefore if we had $E^0[f]-p^b<0$ by using \eqref{duality_2} and the previous argument we would get the contradiction
\begin{equation*}
\inf_{\delta>0}\left\{\delta v+E[U^*(\delta\zeta^0_T)]\right\}<\inf_{\delta>0} \left\{\delta v+E[U^*(\delta\zeta^0_T)]\right\}.
\end{equation*}
This proves $p^b\leq E^0[f]$.
\\Now consider the super-replicating strategy $\pi_2$ for the claim $f$, starting from a given initial capital $v_2$. Since
\begin{equation*}
\sup_{\pi} E\left[U\left(V_T^{v+v_2}(\pi)- f\right)\right]\geq E\left[U\left(V_T^{v}(\pi)+V_T^{v_2}(\pi_2)- f\right)\right]\geq E\left[U (V_T^{v}(\pi))\right]
\end{equation*}
and therefore
\begin{equation*}
\sup_{\pi} E\left[U\left(V_T^{v+v_2}(\pi)- f\right)\right]\geq \sup_{\pi}E\left[U(V_T^{v}(\pi)\right]
\end{equation*}
we deduce that the selling price $p^s$ must verify $p^s\leq v_2$. The other inequalities are obtained by similar arguments.
\qed\end{proof}

Definition \eqref{UIP_def} can be extended to the conditional case by defining the (buying) price $p^b_t$ as the $\mathcal F_t$-measurable r.v. satisfying
\begin{equation}     \label{UIP_def_t}
\esssup_{\pi} E_t\left[U\left(V_T^{v-p^b_t}(\pi)+ f\right)\right]=\esssup_{\pi} E_t\left[U(V_T^v(\pi))\right]
\end{equation}
where the set of admissible strategies is restrained to those starting at $t$. We denote $p^b_0=p^b$. The previous lemma can therefore be slightly generalized to obtain that
\begin{equation}    \label{bounds_conditional}
  V^{v_1}_t(\pi_1)\leq p^b_t\leq p^s_t\leq V^{v_2}_t(\pi_2).
\end{equation}
Generalizing the other bounds to obtain $p^b_t\leq E^0_t[f]$ is a little bit more delicate since the duality results in \cite{OZ.99} are not extended to the conditional primal problem. A partial result can be obtained using, e.g., BSDE-based methods (see our Remark \ref{bound_uip_t}).

\begin{remark}The previous result confirms that utility indifference valuation gives rise to a sort of bid-ask spread and the price computed under the MMM can be interpreted as a mid price.
The fact that utility indifference buying (selling) prices are always higher (lower) than sub(super)-replication prices also justifies their interest.\end{remark}

\begin{remark}  \label{rmk_certainty}
A related pricing method is given by the \textit{certainty equivalent}, which is quite popular in the economic literature and which has been explored by Benth et al. (\cite{Benth.07}) in the context of electricity markets. In that paper, there is no financial market where the investor could possibly trade. This is the one of the main differences with respect to our approach.
The certainty equivalent method provides the same prices as utility indifference evaluation when the payoff is just a bounded function of the nontraded assets. To see this, remark that when the payoff is bounded we can always perform a probability change and write
$$E[U(V^{0}_T(\pi)+ f-p^b)]=E[e^{-\gamma (f-p^b)}] E^{Q^f}[U(V^{0}_T(\pi))]= E^{Q^f}[cU(V^{0}_T(\pi))]$$
with $c>0$ and the change of measure $\frac{dQ^f}{dP}=\frac{e^{-\gamma (f-p^b)}}{E[e^{-\gamma (f-p^b)}]}$ only affecting the nontraded assets. Let $U^*$ denote the conjugate of $U$. By using $(cU)^*(y)=cU^{*}(y/c)$, the definition \eqref{UIP_def}, the duality results \eqref{duality} and \eqref{duality_ineq} we get
$$\inf_{\delta>0}E\left[U^*\left(\delta \frac{dQ^0}{dP}\right)\right]=\inf_{\delta>0}E^{Q^f}\left[(cU)^*\left(\delta\frac{dQ^0}{dP}\right)\right]$$
which becomes
$$\inf_{\delta>0}E\left[U^*\left(\delta \frac{dQ^0}{dP}\right)\right]=E\left[e^{-\gamma (f-p^b)}\right]\inf_{\delta>0}E\left[U^*\left(\frac{\delta}{E\left[e^{-\gamma (f-p^b)}\right]}\frac{dQ^0}{dP}\right)\right],$$
that is trivially satisfied by the certainty equivalent $p^b=-\frac{1}{\gamma}\ln E[e^{-\gamma f}]$.
However, when the payoff does depend on the traded assets (as in the examples of power derivatives given in Section \ref{Sec:electricity}) the two methods can provide completely different results due to the existence of additional investment opportunities offered by some financial market as it is the case in our model. Notice for instance that the certainty equivalent applied to a payoff which is linear in $s$ (uniformly in $x$) can produce an infinite buying or selling price (since geometric Brownian motion does not have all exponential moments), while by the previous lemma utility indifference prices will always by finite, as the payoff is super/sub-replicable.
\end{remark}

\section{Utility indifference pricing via BSDEs}\label{Sec:UIP_BSDE}
In this section we extend to our setting the classical characterization of the utility indifference price of a contingent claim $f$ in terms of the solution of a suitable BSDE. This characterization has to be proved in our framework since we are not assuming boundedness of $f$ nor that it has finite exponential moments, which are the usual conditions imposed in the existing literature. These conditions would not be satisfied in the application to power derivatives that we have in mind (see Section \ref{Sec:electricity}). From now on we will only focus on buying prices, the selling counterpart being easily obtained by symmetry (see Remark \ref{rmk_selling}). \\
The following result shows how the utility indifference price (UIP for short) is linked to the solution of the BSDE\footnote{It can be viewed as an uncoupled FBSDE since traded and nontraded assets entering in $f$ have forward dynamics.}
\begin{equation} \label{BSDE}
Y_t=f-\int_t^T \left(\frac{\gamma}{2}\| Z^X_s\|^2+\mu'\sigma^{-1}Z^S_s\right)ds-\int_t^T Z_s dW_s
\end{equation}
which can also be written under the MMM $Q^0$ in the simpler form
 \begin{equation} \label{BSDE_MMM}
Y_t=f-\int_t^T \frac{\gamma}{2}\| Z^X_s\|^2ds-\int_t^T Z_s dW^0_s .
\end{equation}
We start by assuming that $f$ is bounded. The next step will consist in replacing the boundedness of $f$ with its sub/super-replicability as in Assumption \ref{bounds}.
\begin{lemma}   \label{UIP_BSDE}
Suppose $f$ is bounded. Then $p^b_t=Y_t$, where $(Y,Z)$ is the unique solution of BSDE \eqref{BSDE} satisfying
$$E\left[\sup_{0\leq t\leq T} |Y_t|^2+\int_0^T \|Z_t\|^2dt\right]<\infty .$$
Moreover, the optimal trading strategy is given by $\Delta_t=-\sigma^{-1}Z^S_t$.
\end{lemma}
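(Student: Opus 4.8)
The plan is to reduce the utility indifference price to a standard exponential-utility optimization with bounded claim and then invoke the known BSDE characterization of the optimal investment problem. First I would recall that by definition \eqref{UIP_def}, $p^b$ (and its conditional version $p^b_t$) is the $\mathcal F_t$-measurable correction that equates the value function with claim $f$ to the value function without the claim. Since $f$ is bounded, standard results on exponential utility maximization with a bounded liability (see \cite{imkeller.05} for the quadratic-BSDE approach, or \cite{OZ.99} for the duality side) guarantee that the optimal investment problem $\esssup_\pi E_t[U(V_T^v(\pi)+f)]$ admits a representation of the form $-\exp(-\gamma(v + Y_t^f))$, where $(Y^f,Z^f)$ solves a quadratic BSDE with terminal condition $f$, and likewise the no-claim problem has value $-\exp(-\gamma(v+Y_t^0))$ with $Y^0$ the solution of the same BSDE with terminal condition $0$. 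The key structural point, which must be checked, is that in our semi-complete product market the driver has the specific split form appearing in \eqref{BSDE}: quadratic in $Z^X$ (the nontraded components, whose risk cannot be hedged) and linear, of the form $\mu'\sigma^{-1}Z^S = \theta' Z^S$, in the traded components (which can be perfectly hedged away, so only the market price of risk enters). Rewriting under $Q^0$ via $W^{S,0}=W^S+\theta t$ turns this into \eqref{BSDE_MMM}, a BSDE with driver $\tfrac{\gamma}{2}\|Z^X\|^2$, which is exactly the one governing exponential utility in a market where only $W^X$-risk is unhedgeable.

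Second, I would derive the precise form of the driver by a direct verification argument rather than quoting a black box. Fix a candidate $(Y,Z)$ solving \eqref{BSDE_MMM} in the space with $E[\sup_t|Y_t|^2 + \int_0^T\|Z_t\|^2 dt]<\infty$; existence and uniqueness in this class follow from \cite{Kob.00}/\cite{Briand.07} since $f$ is bounded and the driver is continuous with quadratic growth in $z$. Then for any admissible $\pi$, consider the process $R_t(\pi) = U\big(V_t^{v}(\pi) - Y_t\big)= -\exp\!\big(-\gamma(V_t^v(\pi)-Y_t)\big)$ and show, by Itô's formula, that $R(\pi)$ is a $P$-supermartingale (equivalently, working under $Q^0$) for every admissible $\pi$ and a martingale for the choice $\pi_t$ satisfying $\sigma'\pi_t = Z_t^S$, i.e. $\pi_t = (\sigma')^{-1}Z_t^S = \sigma^{-1}Z_t^S$ after accounting for the transpose convention; the quadratic term $\tfrac{\gamma}{2}\|Z_s^X\|^2$ in the drift is precisely what is needed to absorb the square of the unhedgeable $Z^X\,dW^X$ part in the Itô expansion, and the linear term absorbs the $\theta'Z^S$ cross term. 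Terminal value $R_T(\pi) = U(V_T^v(\pi) - f)$ plus the supermartingale/martingale dichotomy yields $\esssup_\pi E_t[U(V_T^v(\pi)-f)] = -\exp(-\gamma(V_t^v(\hat\pi)... ))$; applying this with $f$ and with $0$ and comparing with \eqref{UIP_def_t} gives $p^b_t = Y_t^f - Y_t^0$. It then remains to observe that $Y^0\equiv 0$ solves \eqref{BSDE_MMM} with zero terminal condition (the driver has no term free of $Z$), so by uniqueness $p_t^b = Y_t$, and the optimal hedging strategy, being the difference $\hat\pi^f - \hat\pi^0 = \sigma^{-1}Z_t^{f,S} - 0$, equals $\Delta_t = -\sigma^{-1}Z_t^S$ up to the sign convention in the self-financing wealth equation — the sign is fixed by matching $V^{v-p^b}(\pi)+f$ in \eqref{UIP_def}, where it is the \emph{buyer} who holds $f$, so the hedge offsets $f$ and carries the minus sign.

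Third, I would discharge the admissibility and integrability bookkeeping: one must confirm that the candidate strategy $\hat\pi^f_t = \sigma^{-1}Z_t^{f,S}$ lies in $\mathcal H$ (indeed in $\mathcal H_b$, using boundedness of $f$ and the resulting bound on $Y$, which gives the uniform-in-time bound needed for $V^0(\hat\pi^f)$ to be bounded below), and that the local-martingale parts arising in the Itô computation are true martingales under each $Q\in\mathcal M_E$ — here the BMO property of $\int Z\,dW$ for bounded $f$, which is standard for quadratic BSDEs with bounded terminal data, is the clean tool. By Theorem 1.2 of \cite{OZ.99} the value is the same over $\mathcal H$ and $\mathcal H_b$, so there is no loss in working with whichever class makes the verification transparent.

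The main obstacle I expect is \emph{not} the existence/uniqueness of \eqref{BSDE} (which is classical for bounded $f$) but rather the careful verification step tying the BSDE solution to the \emph{value function} and, simultaneously, identifying the optimizer with the correct sign and transpose conventions: one has to run the supermartingale argument for \emph{all} admissible $\pi$ (not just the candidate), control the stochastic integrals uniformly over the entropy-finite measures, and make sure the martingale-optimality of $\hat\pi$ is genuine rather than merely a first-order condition. Getting the bookkeeping of $W^S$ versus $W^{S,0}$ and of $\pi'\sigma$ versus $\sigma^{-1}Z^S$ exactly right is the fiddly part; once that is in place, comparing \eqref{UIP_def_t} for $f$ and for $0$ and using $Y^0\equiv 0$ finishes the proof immediately.
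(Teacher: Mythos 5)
Your proposal is correct and follows essentially the same route as the paper: the paper likewise reduces the claim to the Hu--Imkeller--M\"uller BSDE characterization of the exponential-utility value function (their Theorem 7, which is itself proved by exactly the supermartingale/martingale verification you sketch), computes the driver for the constraint set $C=\mathbb R^n\times\{0_d\}$, and subtracts the no-claim value. The only cosmetic differences are that the paper quotes that theorem rather than re-running the verification, and normalizes by the explicit Merton value $-\tfrac{T}{2\gamma}\mu'\sigma^{-2}\mu$ instead of observing that $Y^0\equiv 0$ solves \eqref{BSDE_MMM}; just watch the sign slip in your verification sketch, where the buyer's problem requires $U(V_T^v(\pi)+f)$ rather than $U(V_T^v(\pi)-f)$.
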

\begin{proof}
We prove the lemma only in the case $t=0$. The same arguments easily extend to any time $t$. We use the results in \cite{imkeller.05}. By definition of UIP we are allowed to only consider strategies in $\mathcal H_b$, so that the admissibility conditions in \cite{imkeller.05} are satisfied (apart from square integrability, which is not necessary for what follows). By their Theorem 7, the value function in the LHS of \eqref{UIP_def} takes the form
$$-\exp(-\gamma(x-p-\tilde Y_0))$$
where $\tilde Y_0$ is defined by the unique solution $(\tilde Y, \tilde Z)$ to
\begin{equation}	\label{IMK_BSDE}
\tilde Y_t= -f-\int_t^T \tilde Z_s dW_s-\int_t^T g(s,\tilde Z_s)ds
\end{equation}
with
$$g(\cdot,z)=-\frac{\gamma}{2}\mbox{dist}^2\left(z+\frac{(\theta,0)}{\gamma},C\right)+(\theta,0)'z+\frac{1}{2\gamma}\|\theta\|^2 .$$
In our case $C=\mathbb R^n \times \{0_d\}$, with $0_d$ the null vector in $\mathbb R^d$, and $\|\theta\|^2=\mu'\sigma^{-2}\mu$. Thus, we have
$$g(\cdot,z)=-\frac{\gamma}{2}\|(0,z^X)\|^2+\mu'\sigma^{-1} z^S+\frac{1}{2\gamma}\mu'\sigma^{-2}\mu .$$
When the final claim is zero then $\tilde Z$ in \eqref{IMK_BSDE} vanishes so that it simply gives $\tilde Y_0= -\frac{T}{2\gamma}\mu'\sigma^{-2}\mu$. Applying \eqref{UIP_def} we get
$$-\exp\left(-\gamma\left(x-p-\tilde Y_0\right)\right)=-\exp\left(-\gamma\left(x+\frac{T}{2\gamma}\mu'\sigma^{-2}\mu\right)\right)$$
from which we get
$p=-\left(\tilde Y_0+\frac{T}{2\gamma}\mu'\sigma^{-2}\mu\right)=:Y_0$ where $(Y,\tilde Z)$ solves
$$Y_t= f+\int_t^T\left(-\frac{\gamma}{2}\|(0,\tilde Z^X_s)\|^2+\mu'\sigma^{-1} \tilde Z^S_s\right)ds+\int_t^T \tilde Z_s dW_s .$$
The result then follows by defining $Z=-\tilde Z$. The optimal strategy in the LHS of \eqref{UIP_def} is then given by $\sigma^{-1}\tilde Z^S+\frac{1}{\gamma}\sigma^{-2}\mu$,
and the second result follows.
\qed\end{proof}
\begin{remark}
The result can be also easily derived by properly modifying the proof of Lemma 2.4 in \cite{henderson.11}. However that approach requires a BMO property for admissible strategies which we do not assume.
\end{remark}

\begin{remark} \label{bound_uip_t} Notice also that from the representation \eqref{BSDE_MMM}, by using the classical comparison result for quadratic BSDEs, we can also immediately generalize the result of Lemma \ref{pricebound} by obtaining
\begin{equation}    \label{bound_MMM}
p^b_t\leq E^0_t[f], \quad t\in [0,T].
\end{equation}\end{remark}
We now want to show that \eqref{BSDE} still admits a solution when $f$ is possibly unbounded but still satisfies Assumption \ref{bounds}. We insist once more on the fact that the result is not immediately obvious from the standard literature since $f$ does not necessarily possess exponential moments (e.g. if it depends linearly on the final value of some tradable assets as in our examples in Section \ref{Sec:electricity} of the paper).

\begin{lemma}   \label{existence_BSDE}
Under Assumption \ref{bounds} BSDE \eqref{BSDE_MMM} admits a solution.
\end{lemma}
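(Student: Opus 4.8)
The plan is to obtain the solution of \eqref{BSDE_MMM} as a limit of solutions to the bounded problem treated in Lemma \ref{UIP_BSDE}, using the sub/super-replication bounds of Assumption \ref{bounds} to control the approximating sequence. First I would truncate the claim: set $f_k = (f \wedge V^{v_2}_T(\pi_2)) \vee V^{v_1}_T(\pi_1)$ truncated further to $[-k,k]$, or more simply $f_k = (-k) \vee f \wedge k$, so that each $f_k$ is bounded and $f_k \to f$ in $L^2(Q^0)$ (using $f \in L^2(Q^0,\mathcal F_T)$ from Assumption \ref{bounds}). By Lemma \ref{UIP_BSDE}, for each $k$ there is a solution $(Y^k, Z^k)$ of the BSDE \eqref{BSDE_MMM} with terminal datum $f_k$, and $Y^k_t = p^{b,k}_t$ is the conditional UIP of $f_k$. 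The monotone/comparison structure of the quadratic BSDE \eqref{BSDE_MMM} (the driver $-\frac{\gamma}{2}\|z^X\|^2$ depends only on $Z^X$ and the comparison theorem for quadratic BSDEs applies) gives monotonicity of $Y^k$ in $k$ on the relevant range, and more importantly it gives a priori bounds: since $V^{v_1}_T(\pi_1) \le f_k \le V^{v_2}_T(\pi_2)$ eventually holds after truncating the bounds as well, the conditional bounds \eqref{bounds_conditional} (applied to $f_k$) yield
$$V^{v_1}_t(\pi_1) \le Y^k_t \le V^{v_2}_t(\pi_2),$$
so the family $\{Y^k\}$ is sandwiched between two fixed $Q^0$-martingales (recall $\pi_1 \in \mathcal H_M$) that lie in $L^1(Q^0)$, hence in particular $\sup_k \mathrm E^0[\sup_t |Y^k_t|]$ can be controlled, and one upgrades to an $L^2$-type bound using $f \in L^2(Q^0)$ and the martingale bound $V^{v_1}(\pi_1)$.

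Next I would derive a uniform estimate on $\int_0^T \|Z^k_s\|^2 ds$ and then pass to the limit. The standard device is to apply Itô's formula to $e^{\gamma Y^k_t}$ (or to $|Y^k_t|^2$ together with an exponential weight absorbing the quadratic term): the exponential transform $\Gamma^k := e^{\gamma Y^k}$ linearizes the $Z^X$-part, giving $d\Gamma^k_t = \gamma \Gamma^k_t Z^k_t dW^0_t$, a nonnegative $Q^0$-local martingale bounded above by $e^{\gamma V^{v_2}_\cdot(\pi_2)}$, from which one extracts, via a localization and Fatou argument, that $\int_0^T (\Gamma^k_s)^2 \|Z^k_s\|^2 ds < \infty$ and, combined with the two-sided bounds on $Y^k$, a uniform bound $\mathrm E^0[\int_0^T \|Z^k_s\|^2 ds] \le C$ not depending on $k$. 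With $\{Y^k\}$ uniformly bounded in the appropriate norm and $\{Z^k\}$ bounded in $\mathbb H^2$, one passes to the limit: by monotonicity $Y^k_t \to Y_t$ pointwise (and, after controlling the increments, in $\mathbb H^2$), $Z^k \to Z$ weakly in $\mathbb H^2$, and the quadratic driver is handled by the now-classical stability results for quadratic BSDEs with monotone approximations (in the spirit of Kobylanski's and Briand–Hu's arguments, or more directly via the $\Gamma^k$ representation), yielding that the limit $(Y,Z)$ solves \eqref{BSDE_MMM} with terminal value $f$.

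The main obstacle is precisely the passage to the limit in the quadratic term $\frac{\gamma}{2}\|Z^X_s\|^2$ under only $L^2(Q^0)$-integrability of $f$ rather than exponential integrability: weak $\mathbb H^2$-convergence of $Z^k$ is not by itself enough to pass to the limit in a quadratic nonlinearity, so one needs either strong $\mathbb H^2$-convergence of $Z^k$ (obtained from the monotone convergence of $Y^k$ plus an Itô expansion of $|Y^k - Y^j|^2$, where the quadratic terms are controlled using the two-sided bounds and the elementary inequality $|\|a\|^2 - \|b\|^2| \le \|a-b\|(\|a\|+\|b\|)$ together with a localization), or to work directly with the exponential transform where the nonlinearity disappears and only the terminal condition $\Gamma^k_T = e^{\gamma f_k} \to e^{\gamma f}$ must be controlled — but $e^{\gamma f}$ need not be integrable, which is exactly why the a priori bound $\Gamma^k \le e^{\gamma V^{v_2}_\cdot(\pi_2)}$ (with $V^{v_2}_T(\pi_2) \in L^1(Q^0)$ so $e^{\gamma \cdot}$ of it need not be integrable either — so one must instead exploit that $\Gamma^k$ is a supermartingale dominated so that $Y^k$ stays bounded below by the $Q^0$-martingale $V^{v_1}(\pi_1)$) is doing the real work. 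I would organize the argument so that the lower bound via $\pi_1 \in \mathcal H_M$ and the upper bound via $\pi_2 \in \mathcal H$ are used asymmetrically: the lower martingale bound prevents $Y^k$ from exploding downward and gives the energy estimate on $Z^k$, while the upper bound gives tightness; then strong convergence of $Z^k$ in $\mathbb H^2$ closes the argument.
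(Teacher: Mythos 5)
Your first half matches the paper's proof: truncate $f_n=(-n)\vee f\wedge n$, take the bounded solutions from Kobylanski's theorem, use the comparison theorem for monotonicity, and sandwich $Y^n$ using the sub-replication bound \eqref{bounds_conditional} and the MMM bound \eqref{bound_MMM}. The divergence, and the gap, is in the limit passage. First, your central technical device is wrong: the driver of \eqref{BSDE_MMM} is $-\frac{\gamma}{2}\|z^X\|^2$, quadratic only in the $Z^X$-components, so the exponential transform does \emph{not} linearize it. A direct computation gives
\begin{equation*}
d\,e^{\gamma Y^k_t}=\gamma e^{\gamma Y^k_t}Z^k_t\,dW^0_t+\gamma^2 e^{\gamma Y^k_t}\|Z^{X,k}_t\|^2\,dt+\tfrac{\gamma^2}{2}e^{\gamma Y^k_t}\|Z^{S,k}_t\|^2\,dt,
\end{equation*}
and $e^{-\gamma Y^k}$ likewise retains a $\|Z^{S,k}\|^2$ drift; neither is a local martingale, so the claimed identity $d\Gamma^k_t=\gamma\Gamma^k_t Z^k_t\,dW^0_t$ and the ensuing supermartingale/Fatou extraction of an energy estimate do not go through. (The paper's Lemma \ref{L2} does obtain such an estimate by a related Itô argument, but only under an additional growth hypothesis $\|Z^S_t\|\leq C\|S_t\|^q$ that is not available under Assumption \ref{bounds} alone.) Second, even granting a uniform $\mathbb H^2$ bound on $Z^k$, your route to strong convergence via an Itô expansion of $|Y^k-Y^j|^2$ is the Kobylanski stability argument, which requires the $Y^k$ to be uniformly bounded in $L^\infty$; here they are only sandwiched between unbounded processes ($V^{v_1}(\pi_1)$ and $E^0_\cdot[f]$), so that argument does not close.

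The paper resolves exactly this difficulty by localization in the spirit of Briand--Hu: setting $L_t=E^0_t[|f|]+E^0_t[|V^{v_1}_T(\pi_1)|]$ and $\tau_k=\inf\{t:L_t\geq k\}\wedge T$, the stopped processes $Y^n_{\cdot\wedge\tau_k}$ are bounded uniformly in $n$ for each fixed $k$, so the classical monotone stability theorem for quadratic BSDEs with bounded data applies on each stochastic interval $[0,\tau_k]$; the resulting local solutions are then pasted together and $k\to\infty$. This is the missing idea in your proposal: no global energy estimate on $Z^k$ is needed, and the unboundedness of the terminal data is handled entirely by stopping rather than by integrability of exponentials. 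I would recommend replacing your second and third paragraphs with this localization step.
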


\begin{proof}
We will adapt the arguments in the proof of Proposition 3 in \cite{Briand.07}. Rewrite equation \eqref{BSDE_MMM} as
\begin{equation}    \label{BSDE_g}
Y_t= f+\int_t^T g(Z_s) ds-\int_t^T Z_s dW^0_s ,
\end{equation}
with $g(z)=-\frac{\gamma}{2}\|z^X\|^2$, and denote $f_n=(-n)\vee f \wedge n$, $L_t= E^0_t[|f|]+E^0_t[|V^{v_1}_T(\pi_1)|]$, $L^n_t= E^0_t[|f_n|]+E^0_t[|V^{v_1}_T(\pi_1)|]$ (which are well defined thanks to Assumption \ref{bounds}). Let $(Y^n,Z^n)$ be the minimal bounded solution to \eqref{BSDE_g} where $f$ is replaced by $f_n$ (it exists by \cite{Kob.00}, Theorem 2.3). By \eqref{bounds_conditional} and \eqref{bound_MMM} we have that $|Y_t^n|\leq L_t^n\leq L_t$ for all $n$. Moreover the sequence $(Y^n)_{n\geq 1}$ is nondecreasing by the comparison theorem (see \cite{Kob.00}, Theorem 2.3).
\\Now define
$$\tau_k=\inf\{t\in [0,T]: L_t\geq k\}\wedge T, \quad \inf \emptyset = +\infty.$$
The sequence $Y^n_k(t)=Y^n_{t\wedge \tau_k}$ remains bounded uniformly in $n$ for each $k$. Setting $Z^n_k(t)=Z^n_{t}\mathbf 1_{\{t\leq\tau_k\}}$ we have
\begin{equation}
Y^n_k(t)=Y^n_{\tau_k}+\int_t^T \mathbf 1_{\{s\leq \tau_ k\}} g(Z^n_k(s)) ds-\int_t^T Z^n_k(s) dW^0_s.
\end{equation}
Now we can apply Lemma 2 in \cite{Briand.07} and obtain, for each $k$, a solution $(Y_k,Z_k)$ to the BSDE
\begin{equation}
Y_k(t)=\xi_k+\int_t^{\tau_k}g(Z_k(s)) ds-\int_t^{\tau_k} Z_k(s) dW^0_s
\end{equation}
where $\xi_k=\sup_n Y^n_{\tau_k}$.
Defining $Y_t=Y_1(t)\mathbf 1_{\{t\leq\tau_1\}}+\sum_{k\geq 2} Y_k(t)\mathbf 1_{]\tau_{k-1},\tau_k]}(t)$ and similarly for $Z_t$ we get
\begin{equation}
Y_t=\xi_k+\int_t^{\tau_k}g(Z_s) ds-\int_t^{\tau_k} Z_s dW^0_s
\end{equation}
and the result follows by sending $k$ to infinity.
\qed\end{proof}
We would like now to be able to interpret the solution $Y$ constructed in the previous lemma as the UIP of the claim $f$. We borrow and adapt the next result from \cite{OZ.99}, which gives some sufficient conditions ensuring this property. Those conditions are quite easy to verify in our setting for a large class of contingent claims (see Section \ref{Sec:electricity}), since the independence between tradable and non tradable assets implies a very simple product structure for the set $\mathcal M$ of all absolutely continuous martingale measures for $S$.
\begin{lemma}   \label{convergence}
Let $f$ be a contingent claim satisfying Assumptions \ref{bounds} and let $f_n=(-n)\vee f \wedge n$, $n\geq 1$. If
\begin{equation}	\label{conv}
\sup_{Q\in\mathcal M_E}E^Q[f_n-f]\to 0, \quad \inf_{Q\in\mathcal M_E}E^Q[f_n-f]\to 0
\end{equation}
as $n\to\infty$ then $Y_0=p^b$, where $Y$ solves \eqref{BSDE}. 
\end{lemma}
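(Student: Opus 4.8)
The plan is to pass to the limit in the approximating sequence $(Y^n, Z^n)$ of Lemma \ref{existence_BSDE}, using the hypothesis \eqref{conv} to control the two "price" functionals on both sides of the utility indifference equation. Recall that for each $n$ the pair $(Y^n, Z^n)$ is the minimal bounded solution of \eqref{BSDE_g} with terminal condition $f_n$, so by Lemma \ref{UIP_BSDE} we have $Y^n_0 = p^{b}(f_n)$, the buying UIP of the bounded claim $f_n$, and from the construction $Y^n_0 \uparrow Y_0$. Thus it suffices to show $p^{b}(f_n) \to p^{b}(f)$, i.e. that the UIP is continuous along this truncation.

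First I would write out the defining equation \eqref{UIP_def} for $p^{b}(f_n)$ and for $p^{b}(f)$ and subtract, exploiting the translation property of exponential utility: since $f = f_n + (f - f_n)$ and $\lvert f - f_n\rvert \le \lvert f\rvert \mathbf 1_{\{\lvert f\rvert > n\}}$, I would sandwich $\sup_\pi E[U(V_T^{v - p}(\pi) + f)]$ between the same expression with $f_n$ perturbed by $\pm\lVert (f-f_n)^{\pm}\rVert$-type quantities. The cleaner route is via the dual representation \eqref{duality}: from \eqref{duality} the map $f \mapsto$ (value function) is, up to the $\delta(v-p)$ term, an infimum over $Q \in \mathcal M$ of $\delta E^Q[f] + E[U^*(\delta \zeta_T)]$, but since $U^* $ has finite entropy only for $Q \in \mathcal M_E$, the relevant measures are exactly those in $\mathcal M_E$. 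Perturbing $f$ by $f_n - f$ changes the bracketed quantity by at most $\delta \sup_{Q \in \mathcal M_E} E^Q[f_n - f]$ from above and $\delta \inf_{Q\in\mathcal M_E} E^Q[f_n-f]$ from below, uniformly in $\delta$; feeding this back through the defining equation \eqref{UIP_def} and solving for $p^b$ shows that
\[
\inf_{Q\in\mathcal M_E} E^Q[f_n - f] \;\le\; p^{b}(f_n) - p^{b}(f) \;\le\; \sup_{Q\in\mathcal M_E} E^Q[f_n - f],
\]
and hypothesis \eqref{conv} forces both sides to $0$.

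The main obstacle is making the dual-perturbation argument rigorous when the claim is unbounded: one must check that $E^Q[f]$ and $E^Q[f_n]$ are genuinely finite for all $Q \in \mathcal M_E$ (not just $Q^0$), and that the interchange of the infimum over $\delta$ with the perturbation is legitimate. For the first point I would use Assumption \ref{bounds}: $V^{v_1}_T(\pi_1) \le f \le V^{v_2}_T(\pi_2)$ with $\pi_1 \in \mathcal H_M$, $\pi_2 \in \mathcal H$, so $E^Q[f]$ is squeezed between $E^Q[V^{v_1}_T(\pi_1)] = v_1$ (martingale property under every $Q \in \mathcal M_E$) and $E^Q[V^{v_2}_T(\pi_2)] \le v_2$ (supermartingale property), hence finite and in fact in $[v_1, v_2]$; the same bound with $f_n$ in place of one endpoint handles the truncations for $n$ large. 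For the interchange, since the perturbation of the bracket is uniform in $\delta$ (and in $\zeta_T$), taking $\inf_{\delta>0}\inf_{\zeta_T}$ preserves the two-sided bound directly, so no delicate exchange of limits is needed — this is precisely why routing through \eqref{duality} rather than through the primal problem is the efficient choice. Finally I would note that since $p^b(f) \in [v_1, v_2]$ is a constant and $Y_0 = \lim_n Y^n_0 = \lim_n p^b(f_n) = p^b(f)$, the identification is complete; the conditional statement $p^b_t = Y_t$ follows by the same argument applied at time $t$, using \eqref{UIP_def_t} and \eqref{bounds_conditional}.
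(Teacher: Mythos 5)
Your argument is correct for the stated claim, but it takes a genuinely different route from the paper. Both proofs begin identically: $Y^n_0=p^b(f_n)$ by Lemma \ref{UIP_BSDE}, $Y^n_0\to Y_0$ by the construction in Lemma \ref{existence_BSDE}, so everything reduces to $p^b(f_n)\to p^b(f)$. At that point the paper simply invokes Proposition 5.1\,(iii) of \cite{OZ.99}, which under hypothesis \eqref{conv} yields convergence of the primal value functions and hence the identification of the limit. You instead extract a quantitative stability estimate directly from the dual representation \eqref{duality}: since the claim enters the dual bracket only through the term $\delta E^Q[f]$, the perturbation $\delta E^Q[f_n-f]$ can be absorbed into $\delta(v-p)$ as a shift of $p$ by exactly $E^Q[f_n-f]$, and because this shift is bounded above and below uniformly in $Q$ by the two quantities in \eqref{conv}, the strict monotonicity of the exponential value function in $p$ gives
\begin{equation*}
\inf_{Q\in\mathcal M_E}E^Q[f_n-f]\;\le\;p^b(f_n)-p^b(f)\;\le\;\sup_{Q\in\mathcal M_E}E^Q[f_n-f].
\end{equation*}
Your route is more self-contained (it only uses the duality already invoked in Lemma \ref{pricebound}, plus the finiteness of $E^Q[f]$ for $Q\in\mathcal M_E$, which you correctly derive from the sub/super-replication in Assumption \ref{bounds}), and it buys an explicit Lipschitz-type rate that makes transparent why \eqref{conv} is precisely the right hypothesis. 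The paper's citation is shorter and delegates to \cite{OZ.99} the verification that the duality gap vanishes and that the value functions converge; the two are otherwise built on the same duality theorem.

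One caveat: your closing sentence, asserting that $p^b_t=Y_t$ follows ``by the same argument applied at time $t$,'' overreaches. The paper explicitly warns (just after \eqref{bounds_conditional}) that the duality results of \cite{OZ.99} are not available for the conditional primal problem, so your dual-perturbation argument does not transfer to $t>0$ as stated. Since the lemma only claims $Y_0=p^b$, this does not affect the validity of your proof of the statement, but the remark should be dropped or qualified.
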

\begin{proof}
Following the previous proof, we know that $Y^n_0=p^b(f_n)$, the buying UIP of $f_n$, and that $Y^n_0\to Y_0$, where $Y$ solves \eqref{BSDE}. By Proposition 5.1 (iii) in \cite{OZ.99} we know that
$$\sup_\pi E\left[-e^{-\gamma\left(V_T^{v-p(f_n)}(\pi)+f_n\right)}\right]\to \sup_\pi E\left[-e^{-\gamma\left(V_T^{v-Y_0}(\pi)+ f\right)}\right]$$
which implies that $Y_0=p^b$.
\qed\end{proof}
\begin{remark}
Notice that the conditions in \eqref{conv} are automatically satisfied whenever the super/sub-replicating portfolio strategies are $\mathbb F^S$-predictable and the portfolio values $V^{v_1}_T(\pi_1)$ and $V^{v_s}_T(\pi_2)$ are in $L^2(Q^0,\mathcal F_T)$. This follows from the fact that, for any $Q\in\mathcal M_E$, we have
\begin{equation*}
\begin{split}
E^Q[|f_n-f|]&=E^Q[|f_n-f|\mathbf 1_{|f|\geq n}]\leq \|f_n-f\|_{L^2(Q)} Q(|f|\geq n)^{1/2} \\
&\leq \|f_n-f\|_{L^2(Q)} Q(|V^{v_1}_T(\pi_1)|+|V^{v_2}_T(\pi_2)|\geq n)^{1/2}\\
&\leq C \|f\|_{L^2(Q)} Q(|V^{v_1}_T(\pi_1)|+|V^{v_2}_T(\pi_2)|\geq n)^{1/2}\\
&\leq C (\|V^{v_1}_T(\pi_1)\|_{L^2 (Q)} +\|V^{v_2}_T(\pi_2)\|_{L^2(Q)} ) Q(|V^{v_1}_T(\pi_1)|+|V^{v_2}_T(\pi_2)|\geq n)^{1/2}\\
&= C (\|V^{v_1}_T(\pi_1)\|_{L^2 (Q^0)} +\|V^{v_2}_T(\pi_2)\|_{L^2(Q^0)} ) Q^0 (|V^{v_1}_T(\pi_1)|+|V^{v_2}_T(\pi_2)|\geq n)^{1/2}\\
&  \to 0
\end{split}
\end{equation*}
as $n\to\infty$, where $C>0$ is some constant varying from line to line. This will be the case under, e.g., the Assumptions \ref{Ass:f_cont} and \ref{Ass:f_noncont} that we will introduce in the next section.
\end{remark}
We will now focus on European claims. This will allow, under proper assumptions, to get more information about the process $Z$ and therefore on the hedging strategy. In particular, representation results like those found in \cite{ma.02} or \cite{zhang.05} will reveal to be useful to study the continuity of $Z$ and the possibility to express it starting from the spacial (classical) derivatives of the solution of a given partial differential equation. This will also allow to obtain some estimates on $Z$ which permit to interpret it in terms of the optimal hedging strategy under some less restrictive hypotheses than the boundedness of $f$ (which is required in the standard martingale optimality approach of \cite{imkeller.05} to prove a BMO property for $Z$ which is needed to identify it with the hedging strategy).

\begin{remark}	\label{rmk_selling}
We decided to focus the discussion on buying prices, however most of the results can be adapted to selling prices. In particular the usual relation $p^s(f)=-p^b(-f)$ holds between selling and buying prices. The natural candidate for the selling price is the solution to the BSDE
\begin{equation} \label{BSDE_selling}
Y_t=f+\int_t^T \frac{\gamma}{2}\| Z^X_s\|^2ds-\int_t^T Z_s dW^0_s .
\end{equation}
Remark immediately the sort of symmetry with \eqref{BSDE_MMM}. Existence for \eqref{BSDE_selling} can be obtained by following the proof of Lemma \ref{existence_BSDE}, but using the super (instead of sub)-replicating process in Assumption \ref{bounds} as a bound.
Moreover, under the same conditions as in Lemma \ref{convergence} we are able to interpret this solution as the selling price. All the other results still hold for selling prices with minor modifications. In particular, Lemma \ref{L2} finds its analogue in Lemma \ref{L2_selling}. Both are relegated in the Appendix for the sake of readability.
\end{remark}

\section{Pricing and hedging of European payoffs}\label{european}
In this section we will address the problem of computing the utility indifference price and the corresponding optimal hedging strategy of a European contingent claim $f$, which is a function of both tradable and non tradable assets at the terminal date $T$, i.e. we assume (with a slight abuse of notation) that $f=f(S_T,X_T)$ for some measurable function $f:\mathbb R^n _+ \times \mathbb R^d \to \mathbb R$. We will denote $f_{s^i \pm} (s,x)$ and $f_{x^j \pm}(s,x)$ the right/left derivatives of the function $f(s,x)$ with respect to, respectively, $s^i$ and $x^j$ for $i=1,\ldots,n$ and $j=1,\ldots,d$.\\
We will use the notation $A_t=(S_t,X_t)$ when we wish to consider asset processes with no distinction. The standard notation $E_{t,a}$ denotes expectation with respect to $\mathcal F_t$ given that the process $A$ takes the value $a=(s,x)$ at time $t$.
Our goal is to obtain a complete characterization of the optimal hedging strategy $\Delta$ as well as asymptotic expansions of the price of the contingent claim $f$ for a small risk aversion parameter $\gamma$. Using BSDE techniques and thanks to the Markovian framework, we are able to do so for a large class of non-smooth contingent claims. More precisely, we consider the following two kinds of assumptions for $f$.

\begin{assumption}	\label{Ass:f_cont} \emph{(Continuous non-smooth payoffs)}
The payoff function $f$ is continuous and a.e. differentiable. Moreover, $f$ and its right/left derivatives grow polynomially in $s$, uniformly in $x$, i.e.
$$| f(s,x) |+| f_{s^i \pm} (s,x) | + | f_{x^j \pm} (s,x) | \leq C(1+\|s\|^{q}) , \quad (s,x) \in \mathbb R_+ ^n \times \mathbb R^d,$$
for all $i=1,\ldots,n$ and $j=1,\ldots ,d$ and for some $q\geq 1$, where the constant $C$ does not depend on $x$.
\end{assumption}
\begin{assumption}	\label{Ass:f_noncont} \emph{(Discontinuous payoffs)}
The payoff function $f$ is bounded from below and a.e. differentiable. Moreover
\begin{enumerate}
\item $f$ may have finitely many discontinuities only in the $x$-variables and outside these points $f$ is continuously differentiable.
\item Where it exists, its derivative $f_{s^i \pm}(s,x)$ is bounded, and in particular $f_{s^i \pm} (s,x)= O(1/{s^i})$ for $s^i$ large enough, for all $i=1,\ldots,n$ uniformly in $x$.
\item Where it exists, its derivative $f_{x^j \pm} (s,x)$ verifies $| f_{x^j \pm} (s,x) |\leq C(1+\|s\|^{q})$ for all $j=1,\ldots,d$ and some $q\geq 1$, where the constant $C>0$ does not depend on $x$.
\end{enumerate}
\end{assumption}
We see that if we want to treat discontinuous payoffs we need stronger growth assumptions than in the continuous case. In particular the second hypothesis implies a uniform logarithmic growth of $f$ in the traded assets.
The main example we think about in this case is that of a payoff which separates the contributions of traded and nontraded assets in a multiplicative way (see Section \ref{Sec:electricity} for some examples coming from electricity markets).\medskip\\
Since $f=f(A_T)=f(S_T,X_T)$ we can exploit the Markovian setting and look for a solution to \eqref{BSDE} of the form $Y_t=\varphi(t,A_t)$ where $\varphi(t,a)=\varphi(t,s,x)$ solves the PDE
\begin{equation}    \label{PDE1}
\begin{cases}
\mathcal L\varphi-\frac{\gamma}{2}\sum_{j=1}^d (\beta'_{\bcdot j}\varphi_x)^2=0\\
\varphi(T,a)=f(a)
\end{cases}
\end{equation}
where $\beta_{\bcdot j}$ denotes the $j$-th column of the matrix $\beta$ and
$$\mathcal L \varphi=\varphi_t+(b-\alpha x)\varphi_{x}+\frac{1}{2}\sum_{i,j=1}^n\sigma_{i\bcdot}\sigma_{\bcdot j}s^is^j \varphi_{s^is^j}+\frac{1}{2}\sum_{i,j=1}^d\beta_{i\bcdot}\beta_{\bcdot j} \varphi_{x^ix^j}.$$
The PDE above is motivated by a formal application of It\^o's lemma (to be justified later) for the function $\varphi$ and recalling that with sufficient regularity we expect to have $Z^{X,i}=\beta'_{\bcdot i}\varphi_x$, where $\varphi_x$ is the ($d$-dimensional) gradient of $\varphi$ with respect to $x$.
\\Now denote
$$h(q)=\frac{\gamma}{2}\|q\|^2=\sup_{\delta\in\mathbb R^d}\left\{-q\delta-\frac{1}{2\gamma}\|\delta\|^2\right\}, \quad q \in \mathbb R^d.$$
In this way \eqref{PDE1} can be written as
\begin{equation}	\label{PDE3}
\begin{cases}
-\mathcal L\varphi+h(\beta'\varphi_x)=0\\
\varphi(T,a)=f(a).
\end{cases}
\end{equation}
The main result of this section can be summarized in the following theorem.

\begin{theorem}\label{main} Let $f=f(A_T)=f(S_T, X_T)$ be a given European type contingent claim for some measurable payoff function $f:\mathbb R^n _+ \times \mathbb R^d \to \mathbb R$. We have the following properties.
\begin{enumerate}
\item Under Assumption \ref{Ass:f_cont} or \ref{Ass:f_noncont} the buying UIP $\varphi$ of the claim $f$ is a viscosity solution of \eqref{PDE1} on $[0,T)\times \mathbb R^n_+\times\mathbb R^d$, which is also differentiable in all the space variables.
\item 
The optimal hedging strategy is given by $$\Delta_t=-\sigma^{-1}Z^S_t=-\sigma^{-1}\sigma(S_t)\varphi_s(t,A_t),$$
where $(Y,Z)$ is solution to \eqref{BSDE_MMM} and $\sigma(S)$ the $n\times n$ matrix whose $i$-th row is given by $\sigma_{i\bcdot} S^i$.\\
\end{enumerate}
\end{theorem}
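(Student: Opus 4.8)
The plan is to prove everything first for bounded, smooth payoffs — where Lemma~\ref{UIP_BSDE} already gives the BSDE representation $p^b_t=Y_t$ and classical parabolic theory gives a smooth price function — and then to pass to the limit using stability of viscosity solutions together with uniform gradient estimates. \textbf{Markovian reduction and the viscosity property.} Since $f=f(S_T,X_T)$ and $A=(S,X)$ is an $(n+d)$-dimensional Markov process with affine coefficients, the solution $Y$ of \eqref{BSDE_MMM} furnished by Lemma~\ref{existence_BSDE} is of the form $Y_t=\varphi(t,A_t)$, with $\varphi(t,a)=Y^{t,a}_t$ the value of the BSDE started from $A_t=a$. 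Under Assumption~\ref{Ass:f_cont} or~\ref{Ass:f_noncont} the sub/super-replicating portfolios in Assumption~\ref{bounds} can be chosen $\mathbb F^S$-predictable and in $L^2(Q^0,\mathcal F_T)$, so condition \eqref{conv} of Lemma~\ref{convergence} is met and $\varphi(t,A_t)=p^b_t$; thus $\varphi$ really is the conditional buying UIP. Now approximate $f$ by bounded smooth payoffs $f_n$ (mollification; under Assumption~\ref{Ass:f_noncont} one smooths only across the finitely many $x$-discontinuities, while retaining the logarithmic growth in $s$ and the polynomial growth in $x$). By Lemma~\ref{UIP_BSDE} the UIP $\varphi_n$ of $f_n$ solves \eqref{BSDE_MMM} and, by the nonlinear Feynman--Kac formula for quadratic BSDEs together with interior parabolic estimates, is a classical solution of \eqref{PDE1} with terminal datum $f_n$ on $[0,T)\times\mathbb R^n_+\times\mathbb R^d$. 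One has $\varphi_n\to\varphi$ pointwise — monotonically along $f_n=(-n)\vee f\wedge n$ as in the proof of Lemma~\ref{existence_BSDE}, or from stability of quadratic BSDEs for a mollified sequence — and, using the uniform envelope $V^{v_1}_t(\pi_1)\le\varphi_n(t,A_t)\le E^0_t[f_n]$ together with the equicontinuity obtained in the next step, the convergence is in fact locally uniform. Stability of viscosity solutions then yields that $\varphi$ is a viscosity solution of \eqref{PDE1} on $[0,T)\times\mathbb R^n_+\times\mathbb R^d$.

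\textbf{$C^1$-regularity in space and representation of $Z$.} For each $n$, It\^o's formula applied to $\varphi_n(t,A_t)$ gives $Z^{X,n,i}_t=\beta'_{\bcdot i}\varphi_{n,x}(t,A_t)$ and $Z^{S,n}_t=\sigma(S_t)\varphi_{n,s}(t,A_t)$. To obtain estimates on $\nabla\varphi_n$ uniform in $n$, differentiate the Markovian BSDE with respect to the starting point $a$: the first variation process of $A$ is explicit (for $S$, $\partial_s S_t=\mathrm{diag}(S_t/s)$; for $X$, $\partial_x X_t=\mathrm{diag}(e^{-\alpha t})$; and $S$ and $X$ decouple), and since $\partial_z g(z)=-\gamma(0,z^X)$ the derivative $(\nabla Y^n,\nabla Z^n)$ solves a \emph{linear} BSDE whose driver coefficient is $-\gamma Z^{X,n}$, which belongs to $\mathrm{BMO}(Q^0)$ because $f_n$ is bounded. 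Solving this linear BSDE and inserting the explicit first variation process yields $\|\nabla\varphi_n(t,a)\|\le C(1+\|s\|^{q})$ locally uniformly in $(t,a)$ and uniformly in $n$: the bound comes directly from the growth of $\nabla f_n$ under Assumption~\ref{Ass:f_cont}, and, under Assumption~\ref{Ass:f_noncont}, where $\nabla f_n$ blows up near the discontinuities, from the Gaussian-smoothing representation of $Z$ in the spirit of \cite{zhang.05}. The Malliavin-type representation of \cite{ma.02} then gives equicontinuity of $\nabla\varphi_n$ on compact subsets of $[0,T)\times\mathbb R^n_+\times\mathbb R^d$. By Arzel\`a--Ascoli, $\nabla\varphi_n\to\nabla\varphi$ locally uniformly along a subsequence, so $\varphi(t,\cdot)\in C^1$; combining this with the convergence $Z^n\to Z$ (once more via linearization of the quadratic term, which is harmless since it only affects the $Z^X$-component) passes the identities above to the limit, giving $Z^{X,i}_t=\beta'_{\bcdot i}\varphi_x(t,A_t)$ and $Z^S_t=\sigma(S_t)\varphi_s(t,A_t)$. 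This proves~(i).

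\textbf{The optimal hedging strategy.} By Lemma~\ref{UIP_BSDE}, $\Delta^n_t=-\sigma^{-1}Z^{S,n}_t=-\sigma^{-1}\sigma(S_t)\varphi_{n,s}(t,A_t)$ is the optimal hedge of $f_n$ (while the maximizer $\hat\pi^0$ for $f=0$ equals $\frac{1}{\gamma}\sigma^{-2}\mu$ for every $n$), so by the previous step its pointwise limit is $\Delta_t=-\sigma^{-1}\sigma(S_t)\varphi_s(t,A_t)$. It remains to check that $\Delta$ is admissible and optimal for $f$ itself. The gradient estimate gives $\|\sigma(S_t)\varphi_s(t,A_t)\|\le C(1+\|S_t\|^{q+1})$, hence, since every $Q\in\mathcal M_E$ differs from $Q^0$ only through a density depending on $X$ alone, $V^0(\Delta)$ is a true $Q$-martingale by the moment bounds for geometric Brownian motion, so $\Delta\in\mathcal H_M\subset\mathcal H$. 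Optimality follows by combining the convergence of value functions (Proposition~5.1 in \cite{OZ.99}, as already exploited in Lemma~\ref{convergence}) with the $L^2(Q^0)$-convergence of the maximizers $\hat\pi^{f_n}\to\hat\pi^f$, so that $\Delta=\hat\pi^f-\hat\pi^0$.

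\textbf{Main obstacle.} The delicate point is the uniform gradient estimate under Assumption~\ref{Ass:f_noncont}: there $\nabla f_n$ is \emph{not} uniformly bounded near the $x$-discontinuities, so the estimate cannot be read off from the terminal datum and must instead exploit the smoothing of the Gaussian kernel of $X$, i.e.\ a representation $Z_t=E^0_t[f(A_T)N_t]$ with an integrable Malliavin weight $N_t$. It is crucial here that the driver is only quadratic in $Z^X$ and absent (under $Q^0$) in $Z^S$, which keeps the linearized BSDE tractable with a $\mathrm{BMO}$ coefficient; controlling the $s$-direction at the same time — where one has only logarithmic growth of $f$, hence a genuinely unbounded but $O(1/s^i)$ derivative — while still closing the estimate is the heart of the proof.
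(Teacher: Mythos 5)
Your overall strategy coincides with the paper's: smooth/bounded approximation of $f$, a representation of the spatial gradient under a Girsanov tilt that only moves the nontraded Brownian motion (so that the uniform-in-$x$ polynomial growth in $s$ of $\nabla f$ yields bounds independent of the approximation), locally uniform convergence plus viscosity stability for part (i), and a reverse Fatou argument combined with Proposition 5.1 of \cite{OZ.99} for part (ii). Your route to the gradient representation --- differentiating the Markovian BSDE and solving the resulting linear BSDE with coefficient $-\gamma Z^{X,n}$ --- is a legitimate variant of the paper's detour through the compactified control problem ($h^m$) and Lemma \ref{der_repr}, and for the continuous case (Assumption \ref{Ass:f_cont}) it closes, since the tilted measure leaves the $S$-dynamics identical to those under $Q^0$.

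There are, however, two genuine gaps. First, the claim that each $\varphi_n$ is a classical solution of \eqref{PDE1} ``by the nonlinear Feynman--Kac formula together with interior parabolic estimates'' is not available off the shelf: the operator degenerates at $s=0$ (no uniform parabolicity) and the Hamiltonian is quadratic in the gradient. This is precisely why the paper truncates the control space so that $h^m$ is Lipschitz, invokes Fleming--Rishel after a logarithmic change of variable and a truncation of the payoff, and only afterwards removes $m$ via the uniform bound $\gamma\|\beta'\varphi^m_x\|\le D$ (Lemma \ref{classical}); you need some version of this step. Second, and more seriously, under Assumption \ref{Ass:f_noncont} you assert the uniform gradient estimate ``from the Gaussian-smoothing representation of $Z$ in the spirit of \cite{zhang.05}'' --- but, as the paper itself points out, the Ma--Zhang/Zhang representations do not apply to quadratic BSDEs, and $\nabla f_n$ blows up near the discontinuities, so the estimate can be read off neither from the terminal datum nor from the Lipschitz-driver representation. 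The paper's resolution is the chain of estimates \eqref{probchange}: rewrite the Girsanov density $\mathcal E(-\gamma Z^{X,l}\cdot W^X)$ via the BSDE as $e^{\gamma(Y^l_t-f^l+\int_t^T Z^{S,l}_r dW^{S,0}_r)}$, use boundedness from below of $f$, the uniform boundedness of $Z^{S,l}$ coming from the $O(1/s^i)$ decay of $f_{s^i}$, a second measure change acting only in the $S$-direction, and finally the Gaussian density of $X_T$ to absorb the exploding factor $C^l(x)\sim l\,\mathbf 1_{\{\|x-I\|\le 1/l\}}$, producing the bound $\frac{C\|s\|^q\|x\|^{q'}}{\sqrt{T-t}}\,e^{\gamma C(1+\|s\|^q)}$. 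You correctly identify these ingredients in your ``Main obstacle'' paragraph, but the estimate itself --- which is the heart of the proof in the discontinuous case --- is never carried out.
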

The rest of this section is devoted to proving this theorem and deducing some asymptotic expansions of the price and the optimal hedging strategy for a small risk aversion parameter $\gamma$.

\subsection{Proof of the main theorem}
Before giving the technical details, we briefly sketch the main ideas underlying our proofs. Equation \eqref{PDE3} suggests that we can look at our pricing problem as a stochastic control problem with a quadratic cost function: following this intuition, the idea of the proof is to start with a slightly modified reformulation (using some ideas developed in \cite{pham.02}) in which the control space is forced to be compact. When the payoff is regular enough, this trick allows us to prove the existence of a smooth solution to the modified problem, which immediately extends to the original one by using some estimates on the derivatives which do not depend on the size of the control space. When the payoff is continuous but not smooth enough, we will approximate it with a sequence of smooth ones (to which our previous results apply) and study the behavior of prices in the limit: in particular, by using a Malliavin-type representation of the derivatives which does not rely on the regularity of the payoff (which is due to \cite{ma.02}), we are able to prove that the limiting price function remains differentiable in the state variables (though it possibly fails to be $C^{1,2}$). The case of discontinuous payoffs is a little bit more delicate: again the aim is to obtain some estimates on the derivatives which do not depend on the approximating sequence for the payoff, but here we can exploit neither the derivatives of the approximating sequence (which may explode due to the discontinuities) nor the Malliavin-type estimates in \cite{ma.02} and \cite{zhang.05} which do not apply to quadratic BSDEs. We will tackle the problem by performing a suitable change of measure, which however requires stronger assumptions with respect to the case of continuous payoffs.

\subsubsection{An auxiliary problem with compact control space and smooth terminal condition}

We start analyzing \eqref{PDE3} by forcing the space of controls to be compact, in particular by replacing the function $h(q)$ in \eqref{PDE3} by $h^m (q)$ defined as
$$h^m(q)=\sup_{\delta\in\mathcal B^m(\mathbb R^d)}\left\{-q\delta-\frac{1}{2\gamma}\|\delta\|^2\right\}$$
where $\mathcal B^m(\mathbb R^d)$ is the ball in $\mathbb R^d$ centered at zero and of radius $m > 0$. Thus, the PDE we consider in this section is
\begin{equation}	\label{visc_solution}
\begin{cases}
-\mathcal L\varphi^m+h^m(\beta'\varphi^m_x)=0\\
\varphi^m(T,a)=f(a).
\end{cases}
\end{equation}
We also write its associated BSDE
\begin{equation}	\label{BSDE_m}
Y^m_t=f-\int_t^T h^m(Z^{X,m}_r)dr-\int_t^T Z^m_r dW^0_r
\end{equation}
that we will refer to in the sequel. Existence and uniqueness of the solution for this BSDE are guaranteed by classical results in \cite{pardoux.peng.90}, since the generator $h^m$ is a Lipschitz function.
\begin{lemma}		\label{solution_PDE}
Let $m>0$. If $f\in C^{3}$ and $f$ and all its first derivatives have polynomial growth, then there exists a classical solution $\varphi^m$ to \eqref{visc_solution}. If $f$ is only of polynomial growth (and possibly discontinuous), then $\varphi^m$ is characterized as a continuous viscosity solution to \eqref{visc_solution} with continuous first derivatives in all the space variables, which have the representation
\begin{equation}    \label{zhang1}
\varphi_{a}^m(t,a)= E^0_{t,a}\left[f(A_T)N_T-\int_t^T h^m(Z^{X,m}_r)N_r dr\right]
\end{equation}
(where $\varphi_{a}^m$ is to be interpreted as a column vector in $\mathbb R^{n+d}$ containing the derivatives with respect to the traded and nontraded assets)
with
$$N_r=\left(\begin{array}{c}\frac{1}{r-t}\sigma^{-1}(S_t)'(W^S_t-W^S_r)\\ \frac{1}{r-t}\int_t^r \emph{diag}(e^{-\alpha (u-t)})' \beta^{-1} dW^X_u\end{array}\right).$$
Moreover the following stochastic control representation holds:
\begin{equation}	\label{stoch_rep}
\varphi^m(t,a)=\inf_{\delta\in \mathcal A_t^m}E^Q_{t,a}\left[\frac{1}{2\gamma}\int_t^T \|\delta_r\|^2 dr+f(A_T)\right]
\end{equation}
for some auxiliary probability measure $Q$, under which
\begin{equation}	\label{mod_dynamics}
\begin{cases}
\frac{dS^i_t}{S^i_t}=\sigma_{i\bcdot} dW^{S,Q}_t, \quad i=1,\ldots,n \\
dX^i_t=(b_i(t)-\alpha_i X^i_t+\beta_{i\bcdot}\delta_t) dt+\beta_{i\bcdot} dW^{X,Q}_t \quad i=1,\ldots,d
\end{cases}
\end{equation}
where $(W^{S,Q},W^{X,Q})$ is a $n$-dimensional BM under the measure $Q$ and $\mathcal A_t^m$ stands for the class of adapted $\mathbb R^d$-valued controls $\delta_s$ starting from time $t$ and such that $\| \delta_s \|\leq m$.
\end{lemma}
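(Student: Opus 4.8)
The proof of Lemma \ref{solution_PDE} splits naturally into three pieces corresponding to the three assertions: (i) existence of a classical solution when $f \in C^3$ with polynomial growth; (ii) the viscosity characterization with continuous space derivatives, together with the Malliavin-type representation \eqref{zhang1}, when $f$ is merely of polynomial growth; (iii) the stochastic control representation \eqref{stoch_rep}. Throughout one exploits that $h^m$ is globally Lipschitz (with constant depending on $m$), so the associated BSDE \eqref{BSDE_m} is a classical Lipschitz BSDE in the sense of \cite{pardoux.peng.90}, and the corresponding semilinear PDE \eqref{visc_solution} is uniformly parabolic in $x$ but degenerate in $s$ (the $S^i$ are geometric Brownian motions, so the $s$-diffusion degenerates at the boundary $s^i=0$; after the change of variables $s^i = e^{y^i}$ it becomes genuinely uniformly parabolic with bounded coefficients). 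Let me sketch each piece.

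First, for (i): I would work in logarithmic coordinates $y^i = \log s^i$ so that the forward dynamics of $(Y,X)$ have constant (hence bounded, smooth) diffusion coefficients and linear drift, with the quadratic term in $h^m$ now a bounded $C^\infty$ function of the gradient because the control ball is compact. Then \eqref{visc_solution} is a semilinear uniformly parabolic PDE with smooth coefficients and $C^3$ terminal data of polynomial growth; standard Schauder-type interior estimates combined with the a priori bounds coming from the Feynman–Kac/BSDE representation (the solution inherits polynomial growth from $f$ and from $h^m \ge 0$, and is bounded below) give a classical solution $\varphi^m \in C^{1,2}$. Alternatively, and perhaps more in the spirit of the paper, one invokes the $L^p$-theory for quadratic-growth FBSDEs with Lipschitz driver: the four-step scheme or the results of \cite{ma.02} directly produce a classical solution and yield $Y^m_t = \varphi^m(t,A_t)$, $Z^m_t = \varphi^m_a(t,A_t)\sigma_A(A_t)$.

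Second, for (ii): when $f$ is only of polynomial growth (possibly discontinuous), I would mollify $f$ into $f_k \in C^3$ of polynomial growth with uniform growth constants, apply part (i) to get classical solutions $\varphi^{m,k}$, and pass to the limit. The key point — and the main obstacle — is to obtain \emph{uniform-in-}$k$ estimates on the space gradient $\varphi^{m,k}_a$ that survive the limit even though $f$ need not be continuous. This is exactly where the Malliavin/Bismut–Elworthy–Li type representation of \cite{ma.02} enters: one shows $\varphi^{m,k}_a(t,a) = E^0_{t,a}[f_k(A_T)N_T - \int_t^T h^m(Z^{X,m,k}_r) N_r\,dr]$ with the weight $N$ as stated (this $N$ is precisely the Malliavin weight for the Gaussian $W^X$-part and the Black–Scholes $W^S$-part, obtained by differentiating the flow $a \mapsto A^{t,a}$ and integrating by parts). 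Since $\|N_T\|_{L^p(Q^0)} \le C/(T-t)$ and $f_k \to f$ in $L^2(Q^0)$ along a.e.\ points while $h^m(Z^{X,m,k})$ stays bounded in the right spaces (using stability of Lipschitz BSDEs), the right-hand side converges locally uniformly to a continuous function, which must be $\varphi^m_a$. The identification of $\varphi^m$ itself as a viscosity solution of \eqref{visc_solution} follows from the stability of viscosity solutions under locally uniform limits (the $\varphi^{m,k}$ are classical, hence viscosity, solutions, and they converge locally uniformly — which one first establishes via the BSDE stability and the gradient bound, giving equicontinuity).

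Third, for (iii): the stochastic control representation is the standard duality $h^m(q) = \sup_{\|\delta\|\le m}\{-q\delta - \tfrac1{2\gamma}\|\delta\|^2\}$ plugged into \eqref{visc_solution}, turning the HJB equation into the value function of the control problem \eqref{stoch_rep} with state dynamics \eqref{mod_dynamics} under the Girsanov-shifted measure $Q$. Once one has a classical (in case (i)) or viscosity (in case (ii)) solution with the regularity established above, the verification argument — apply Itô to $\varphi^m(s,A_s)$ along the controlled dynamics, use that the drift term equals $-\mathcal L\varphi^m$ plus the control contribution, bound below by the optimizer $\hat\delta_r = -\gamma\,\beta'\varphi^m_x$ truncated to the ball — gives both inequalities; the truncation to $\mathcal B^m$ is exactly what makes $\hat\delta$ admissible. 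The integrability needed to justify the stochastic-integral terms vanishing in expectation comes from the polynomial growth of $\varphi^m$ and $\varphi^m_a$ together with all polynomial moments of $A$ under $Q$. The genuinely delicate step remains (ii): transferring the gradient representation and the continuity of $\varphi^m_a$ through the approximation without ever assuming $f$ continuous.
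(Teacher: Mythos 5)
Your overall strategy coincides with the paper's: treat the smooth case first via the uniformly parabolic equation obtained after a logarithmic change of variables in the traded assets, then mollify a general $f$ and push the Malliavin-weight representation of \cite{ma.02} (in the form of Theorem 3.2 of \cite{zhang.05}) through the limit to obtain uniform gradient bounds, the formula \eqref{zhang1}, and locally uniform convergence to a viscosity solution. That is exactly Steps 1 and 2 of the paper's proof, which cites Theorem 6.2 of \cite{fleming.75} for the smooth compact-control case and \cite{ma.yong.07} for the BSDE stability you invoke. Two points deserve attention, however.

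First, after the substitution $y^i=\log s^i$ a terminal condition of polynomial growth in $s$ becomes of \emph{exponential} growth in $y$, so the ``semilinear uniformly parabolic PDE with $C^3$ terminal data of polynomial growth'' you describe is not what one actually obtains. The paper handles this by first replacing $f$ by the truncation $f(s\wedge C,x)$, solving, undoing the change of variables, and letting $C\to\infty$; your Schauder/Feynman--Kac route can be repaired the same way, but as written the growth hypothesis you need is not satisfied. Second, and more substantively: your verification argument for \eqref{stoch_rep} applies It\^o's formula to $\varphi^m(s,A_s)$, which requires $\varphi^m\in C^{1,2}$. In the discontinuous case you have only established that $\varphi^m$ is a viscosity solution with continuous first \emph{space} derivatives, so this step fails as stated. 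The paper avoids the issue entirely: the control representation is inherited from the smooth approximants $\varphi^{m,l}$ (for which it is classical), the inequality $\varphi^{m}(t,a)\le E^Q_{t,a}\left[\frac{1}{2\gamma}\int_t^T\|\delta_r\|^2dr+f(A_T)\right]$ for each admissible $\delta$ follows by dominated convergence, and the reverse inequality is obtained by the elementary trick of choosing the mollified sequence so that $f^l\ge f$. You should replace your verification step by this limiting argument, or at least carry out the verification at the level of $\varphi^{m,l}$ before passing to the limit.
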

\begin{remark}
Recall that only the dynamics of nontraded assets are touched under the new measure $Q$, while traded assets still evolve as under the MMM $Q^0$.
\end{remark}
\begin{proof} We split the proof into two main steps.\medskip\\
\textbf{Step 1}: The case where $f$ is smooth follows by Theorem 6.2 in \cite{fleming.75} (or Theorem IV.4.3 in \cite{fleming.06}). The reason for introducing the index $m$ comes from the fact that those theorems require that controls must take values in a compact space.\footnote{The lack of uniform parabolicity here can be handled by a standard logarithmic transformation in the tradable assets. Under the new logarithmic variable, however, the payoff will not preserve polynomial growth in general. Therefore the result should first be applied to PDE \eqref{visc_solution} (under the new variable) where the payoff is replaced by $f(s\wedge C,x)$ for some constant $C>0$, then undoing the logarithmic change of variable and letting $C \to \infty$ will get the final result.} The regularity of $\varphi^m$ implies (by an application of It\^o's lemma) that $\varphi^m(t,A_t)=Y^m_t$, where $Y^m$ solves \eqref{BSDE_m} and $Z^{X,m}_t=\beta'\varphi^m_x(t,A_t)$, $Z^{S,m}_t=\sigma(S_t)'\varphi^m_s(t,A_t)$.
\\We need to introduce the tangent process of $A$, $\nabla A$ (see, e.g., equation (2.9) in \cite{ma.02} for a definition), which has the following characterization in our particular case: $(\nabla A_t)_{ii}= S^i_t / S^i_0$ if $i\leq n$, $(\nabla A_t)_{ii}=e^{-\alpha_i t}$ if $n+1\leq i\leq n+d$, $(\nabla A_t)_{ij}=0$ if $i\neq j$. Now, define $\Sigma(S)$ as the $(n+d)\times (n+d)$ matrix composed by $\sigma(S)$ on the upper left side and $\beta$ on the lower right side, being zero everywhere else.
The $n\times n$ matrix $\sigma^{-1}(S)$ coincides with the matrix where the $i$-th column is equal to the $i$-th column of $\sigma^{-1}$ divided by $S^i$. Then $\Sigma^{-1}(S_t)\nabla A_t$ is equal to $\sigma^{-1}(S_0)$ on the upper-left corner and $B^{-1}(t)$ on the lower-right corner, being zero everywhere else. Define the $(n+d)$-dimensional processes
\begin{equation*}
M_r=\int_t^r(\Sigma^{-1}(S_u)\nabla A_u)'dW^0_u=\left(\begin{array}{c}\sigma^{-1}(S_0)'(W^{S,0}_t-W^{S,0}_r)\\ \int_t^r \textrm{diag}(e^{-\alpha u})' \beta^{-1} dW^X_u\end{array}\right)
\end{equation*}
and
\begin{equation}
N_r=\frac{1}{r-t}M'_r(\nabla A_t)^{-1}=\left(\begin{array}{c}\frac{1}{r-t}\sigma^{-1}(S_t)'(W^S_t-W^S_r)\\
\frac{1}{r-t}\int_t^r \textrm{diag}(e^{-\alpha (u-t)})' \beta^{-1} dW^X_u\end{array}\right).\label{def-N}
\end{equation}
Since $h^m$ is a Lipschitz function for all fixed $m\geq 0$, we can apply the results in \cite{ma.02} (in particular Theorem 4.2) to the processes $M$ and $N$ just defined to show that \eqref{zhang1} is true.
Theorem 4.2 in \cite{ma.02} requires uniform parabolicity which is not respected in our case, however again this is not a problem for geometric Brownian motions since only the process $M$ defined above enters in its proof.\medskip\\
\textbf{Step 2}: In order to prove the result for a general (possibly discontinuous) $f$ we can adapt the proof of Theorem 3.2 in \cite{zhang.05} to our framework. In particular, we can take a sequence $f^l$ of smooth functions with bounded first derivatives such that $f^l\to f$ a.e. as $l\to\infty$. Then we have $f^l(A_T)\to f(A_T)$ $Q^0$-a.s. since all the processes have absolutely continuous densities. Then one defines
$$\varphi^{m,l}(t,a)=Y^{m,l}_t=f^l-\int_t^T h^m(Z^{X,m,l}_r)dr-\int_t^T Z^{m,l}_r dW^0_r$$
We have
\begin{equation}    \label{zhang}
\varphi_{a}^{m,l}(t,a)= E^0_{t,a}\left[f^l(A_T)N_T-\int_t^T h^m(Z^{X,m,l}_r)N_r dr\right]
\end{equation}
and with the same arguments as in \cite{zhang.05}, Theorem 3.2 (slightly modified to our multivariate setting) we can also obtain the estimate
\begin{equation}	\label{der_boundT}
\|\varphi_{a}^{m,l}(t,a)\|\leq C\frac{\|a\|^q}{\sqrt{T-t}}.
\end{equation}
for some $q\geq 0$.
Here the constant $C$ does not depend on $l$ but it depends on $m$ through the Lipschitz constant of $h^m$. Applying classical stability results for BSDE established in, e.g., \cite{ma.yong.07}, Theorem 4.4, we have the convergence
$$E^0\left[\sup_{0\leq t\leq T} |Y^{m,l}_t-Y^{m}_t|^2+\int_0^T \|Z^{m,l}_t-Z^{m}_t\|^2dt\right]\to 0$$
as $l\to\infty$, where $(Y^m,Z^m)$ solve \eqref{BSDE_m} (but with a nonsmooth $f$ as terminal condition). We deduce from Lemma 6.2 in \cite{fleming.06} and the estimate \eqref{der_boundT} (which gives uniform convergence on compact subsets of $[0,T)\times \mathbb R^{n+d}$) that $\varphi^{m,l}\to\varphi^{m}$, where $\varphi^m$ the unique viscosity solution of \eqref{visc_solution}, which is continuous except possibly at $T$. Following the last part of Zhang's proof of Theorem 3.2 in \cite{zhang.05} we also obtain that $\varphi^m$ is differentiable and we have
$$\varphi_{a}^{m}(t,a)= E^0_{t,a}\left[f(A_T)N_T-\int_t^T h^m(Z^{X,m}_r)N_r dr\right].$$
It remains to prove that the stochastic representation \eqref{stoch_rep} holds for $\varphi^m$.
Clearly it holds for $\varphi^{m,l}$ as the approximating functions are continuous, so
\begin{equation*}
\varphi^{m,l}(t,a)\leq E^Q_{t,a}\left[\frac{1}{2\gamma}\int_t^T \|\delta_r\|^2 dr+f^l(A_T)\right]
\end{equation*}
for any $\delta\in \mathcal A_t^m$ and therefore
$$\varphi^{m}(t,a)\leq E^Q_{t,a}\left[\frac{1}{2\gamma}\int_t^T \|\delta_r\|^2 dr+f(A_T)\right]$$
by dominated convergence (since $f$ has polynomial growth), and
$$\varphi^{m}(t,a)\leq  \inf_{\delta\in \mathcal A_t^m}E^Q_{t,a}\left[\frac{1}{2\gamma}\int_t^T \|\delta_r\|^2 dr+f(A_T)\right]$$
To obtain the reverse inequality it suffices to note that we can choose $f^l\geq f$.
\qed\end{proof}

\subsubsection{Continuous non-smooth payoffs}
The next step is now to remove the dependence on the parameter $m$ and to characterize the price $\varphi$. We work in this section under Assumption \ref{Ass:f_cont}.
We start with a useful probabilistic characterization of the derivatives of $\varphi^m$ under this assumption (such derivatives exist even if $\varphi^m$ is only a viscosity solution by \ref{solution_PDE}).
\begin{lemma}	\label{der_repr}
Let $m>0$. Under Assumption \ref{Ass:f_cont} we have the following representations:
\begin{equation}	\label{est_der}
\varphi^m_{s^i}(t,a)= E^Q_{t,a}\left[f_{s^i}(A_T)\frac{S^i_T}{S^i_t}\right],  \quad  \varphi^m_{x^j}(t,a)= e^{-\alpha_j(T-t)}E^Q_{t,a}\left[f_{x^j}(A_T)\right]
\end{equation}
for $i=1,\ldots,n$, $j=1,\ldots,d$, where the processes evolve as in \eqref{mod_dynamics} with $\delta=\hat\delta$, the maximizer in $h^m(\beta'\varphi^m_x)$.
\end{lemma}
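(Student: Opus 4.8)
The plan is to establish \eqref{est_der} by an envelope-type (``verification from above'') argument that uses only the control representation \eqref{stoch_rep} together with the continuity of $\varphi^m_s,\varphi^m_x$ granted by Lemma \ref{solution_PDE}; this way one never differentiates the (only viscosity) PDE and so the argument covers the merely a.e.\ differentiable payoffs of Assumption \ref{Ass:f_cont}. First I would record that $h^m\in C^1$ with $\nabla h^m(q)=-\Pi_m(-\gamma q)$, $\Pi_m$ denoting the Euclidean projection onto the ball $\mathcal B^m(\mathbb R^d)$, so that the feedback $\hat\delta(r,a):=\Pi_m\!\big(-\gamma\,\beta'\varphi^m_x(r,a)\big)$ is continuous (since $\varphi^m_x$ is), bounded by $m$, and is the maximiser in $h^m(\beta'\varphi^m_x)$. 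Writing $h^m(Z^{X,m}_r)=-Z^{X,m}_r\hat\delta_r-\tfrac1{2\gamma}\|\hat\delta_r\|^2$ with $\hat\delta_r:=\hat\delta(r,A_r)$ and carrying out the Girsanov change of measure that removes the drift $\hat\delta$ from the $W^X$-part, I would turn \eqref{BSDE_m} into $Y^m_t=f(A_T)+\tfrac1{2\gamma}\int_t^T\|\hat\delta_r\|^2\,dr-\int_t^T Z^m_r\,dW^{\cdot,Q}_r$, where under $Q$ the process $A$ follows \eqref{mod_dynamics} with $\delta=\hat\delta$; taking $E^Q_{t,a}$ (the stochastic integral being a true $Q$-martingale thanks to the boundedness of $\hat\delta$ and the square-integrability of $Z^m$) gives $\varphi^m(t,a)=E^Q_{t,a}\big[\tfrac1{2\gamma}\int_t^T\|\hat\delta_r\|^2\,dr+f(A_T)\big]$, i.e.\ the infimum in \eqref{stoch_rep} is attained at $\hat\delta$.

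Next I would fix $(t,a)$, let $A=A^{t,a}$ be this optimally controlled process, and \emph{freeze} the realised control $\bar\delta_r:=\hat\delta(r,A_r)\in\mathcal A^m_t$. For $a'\in\mathbb R^n_+\times\mathbb R^d$ let $A^{\bar\delta;t,a'}$ solve \eqref{mod_dynamics} with this fixed $\bar\delta$ and initial value $a'$ at time $t$, and set $g(a'):=E^Q\big[\tfrac1{2\gamma}\int_t^T\|\bar\delta_r\|^2\,dr+f(A^{\bar\delta;t,a'}_T)\big]$. By \eqref{stoch_rep} one has $\varphi^m(t,a')\le g(a')$ for every $a'$; at $a'=a$ the frozen control $\bar\delta$ drives exactly $A^{t,a}$, so the previous paragraph yields $\varphi^m(t,a)=g(a)$. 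Hence $g-\varphi^m(t,\cdot)\ge 0$ has a global minimum, equal to $0$, at $a$. The virtue of freezing the control is that the flow \eqref{mod_dynamics} becomes explicit in its initial datum: since $dS^i_t/S^i_t=\sigma_{i\bcdot}dW^{S,Q}_t$ involves neither $x$ nor the control, $S^i_T$ is proportional to $s^i$ and independent of the other coordinates of $a'$, so $\partial_{s^k}S^i_T=\mathbf 1_{\{k=i\}}S^i_T/s^i$ and $\partial_{x^j}S^i_T=0$; and because $\bar\delta$ does not depend on $a'$, $X^j_T=e^{-\alpha_j(T-t)}x^j+(\text{terms free of }a')$, so $\partial_{x^k}X^j_T=\mathbf 1_{\{k=j\}}e^{-\alpha_j(T-t)}$ and $\partial_{s^i}X^j_T=0$.

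It remains to differentiate $g$ at $a$. Since $A^{\bar\delta;t,a}_T=A_T$ has an absolutely continuous law and $f$ is a.e.\ differentiable, $a'\mapsto f(A^{\bar\delta;t,a'}_T)$ is a.s.\ differentiable at $a$, with derivative given by the chain rule and the Jacobian above; and the uniform-in-$x$ polynomial bounds of Assumption \ref{Ass:f_cont} on $f_{s^i\pm},f_{x^j\pm}$, together with the (finite) polynomial moments of the log-normal variables $S^i_T$ and of $X_T$ under $Q$, dominate the incremental quotients, so one may differentiate under $E^Q$ and get $g_{s^i}(a)=E^Q_{t,a}[f_{s^i}(A_T)S^i_T/S^i_t]$ and $g_{x^j}(a)=e^{-\alpha_j(T-t)}E^Q_{t,a}[f_{x^j}(A_T)]$. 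Since both $\varphi^m(t,\cdot)$ (Lemma \ref{solution_PDE}) and $g$ are differentiable at $a$ and $g-\varphi^m(t,\cdot)$ is minimised there, their gradients coincide at $a$, which is exactly \eqref{est_der}. The hard part, and the only place where the full strength of Assumption \ref{Ass:f_cont} is used, is the justification of this interchange of derivative and expectation for a merely a.e.\ differentiable $f$: it hinges on building an $L^1(Q)$ majorant out of polynomial moments of geometric Brownian motion, which is possible precisely because $f$ and its one-sided derivatives grow at most polynomially in $s$, uniformly in $x$. (For $f\in C^3$ the touching argument is unnecessary: $\varphi^m$ is then classical, and differentiating \eqref{visc_solution} once — using the identity $\nabla h^m(\beta'\varphi^m_x)=-\hat\delta$ to cancel the terms carrying $\varphi^m_{xx}$ — reduces the equations for $\varphi^m_{s^i}$ and $\varphi^m_{x^j}$ to linear Feynman--Kac problems under $Q$ with solutions \eqref{est_der}; the general Assumption \ref{Ass:f_cont} case then also follows by smooth approximation and the estimate \eqref{der_boundT}.)
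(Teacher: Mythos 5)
Your proposal is correct and follows essentially the same route as the paper: fix the optimal control as an open-loop realization, compare $\varphi^m$ at perturbed initial points against the cost of that frozen control, exploit the explicit affine dependence of the flow \eqref{mod_dynamics} on the initial datum, and pass to the limit in the difference quotients by dominated convergence using the polynomial bounds of Assumption \ref{Ass:f_cont}. Your ``first-order condition at the minimum of $g-\varphi^m(t,\cdot)$'' phrasing is just the two-sided $\pm\varepsilon$ difference-quotient argument of the paper in envelope-theorem form, and your closing parenthetical (smooth case first, then approximation by $f^l$) is exactly the paper's organization.
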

\begin{proof}
We adapt the arguments in \cite{fleming.06}, Lemma 11.4, to our slightly different framework.
First assume that $f$ is smooth (in the sense of Lemma \ref{solution_PDE}), then there exists an optimal Markov feedback $\hat \delta\in\mathcal A^m_0$ (the one achieving the max in $h^m(\beta'\varphi^m_x)$) such that
$$\varphi^m(t,a)=E^Q_{t,a}\left[\frac{1}{2\gamma}\int_t^T \|\hat\delta_r\|^2 dr+f(A_T)\right]$$
By using the same control but with different initial condition we clearly obtain
$$\varphi^m(t,a+\varepsilon e_i)\leq E^Q_{t,a+\varepsilon e_i}\left[\frac{1}{2\gamma}\int_t^T \|\hat\delta_r\|^2 dr+f(A_T)\right], \quad i= 1, \ldots, n+d.$$
Taking the difference and dividing by $\varepsilon>0$ we get
$$\frac{\varphi^m(t,a+\varepsilon e_i)-\varphi^m(t,a)}{\varepsilon}\leq E^Q_{t}\left[\frac{f(A^{t,a+\varepsilon e_i}_T)-f(A^{t,a}_T)}{\varepsilon}\right], \quad i=1,\ldots, n+d,$$
where for clarity we wrote here $A^{t,a}_T$ to stress that the process starts at time $t$ with value $a$. The polynomial growth property in the traded assets of the derivatives of $f$ allows us to apply dominated convergence (since traded assets have the same dynamics under $Q$ and $Q^0$, see \eqref{mod_dynamics}) to get
$$\varphi^m_{a^i}(t,a)\leq E^Q_{t}\left[f_{a^i}(A^{t,a}_T)\frac{\partial}{\partial a^i}A^{t,a,i}_T\right], \quad i=1,\ldots, n+d.$$
By repeating the argument with $-\varepsilon$ we finally obtain
$$\varphi^m_{a^i}(t,a)= E^Q_{t,a}\left[f_{a^i}(A_T)\frac{\partial}{\partial a^i}A^{i}_T\right]$$
for $i=1,\ldots,n+d$, which gives the result by considering traded and non traded assets separately.
\\The general result follows by considering an approximating sequence $f^l$ as in the proof of Lemma \ref{solution_PDE} and using dominated convergence.
\qed\end{proof}

If the payoff $f$ is sufficiently regular we can immediately remove the dependence on $m$, as is shown in the next result.

\begin{lemma}   \label{classical}
If $f$ satisfies Assumption \ref{Ass:f_cont} and is $C^3$ then \eqref{PDE1} admits a classical solution $\varphi$.
\end{lemma}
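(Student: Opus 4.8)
The plan is to obtain $\varphi$ as the limit, as $m\to\infty$, of the classical solutions $\varphi^m$ of the truncated equation \eqref{visc_solution} provided by Lemma \ref{solution_PDE} (applicable here since $f$ is $C^3$ with polynomial growth). The crucial observation is that Lemma \ref{der_repr} yields a bound on the space gradient of $\varphi^m$ that is \emph{uniform in $m$}: in \eqref{mod_dynamics} the traded assets evolve under $Q$ exactly as under $Q^0$, independently of the control $\delta$ (hence of $m$), and $e^{-\alpha_j(T-t)}\le C$, so the representations \eqref{est_der} combined with the polynomial growth in $s$ (uniform in $x$) of $f$ and its derivatives from Assumption \ref{Ass:f_cont} give $\|\varphi^m_a(t,a)\|\le C(1+\|s\|^{q'})$ with $C,q'$ not depending on $m$. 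A matching $m$-uniform bound on $\varphi^m$ itself follows from the control representation \eqref{stoch_rep}: taking $\delta\equiv 0$ for the upper bound, and discarding the nonnegative running cost together with $f(s,x)\ge -C(1+\|s\|^{q})$ for the lower one.

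With these $m$-uniform estimates, I would proceed as follows. Fix a compact $\mathcal O\subset(0,T)\times\mathbb R^n_+\times\mathbb R^d$ and a slightly larger relatively compact open neighbourhood $\mathcal O'$. By the gradient bound there is $m_{\mathcal O'}$ such that $\gamma\|\beta'\varphi^m_x\|\le m$ on $\overline{\mathcal O'}$ for all $m\ge m_{\mathcal O'}$; for such $m$ the maximiser defining $h^m(\beta'\varphi^m_x)$ is interior to $\mathcal B^m(\mathbb R^d)$, so $h^m(\beta'\varphi^m_x)=h(\beta'\varphi^m_x)=\frac\gamma2\sum_j(\beta'_{\bcdot j}\varphi^m_x)^2$ there, and $\varphi^m$ solves the untruncated equation $\mathcal L\varphi^m=\frac\gamma2\sum_j(\beta'_{\bcdot j}\varphi^m_x)^2$ on $\mathcal O'$. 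On $\overline{\mathcal O'}$ the operator $\mathcal L$ is uniformly parabolic (away from $\{s^i=0\}$) and, by the gradient bound, its right-hand side is uniformly bounded; interior $L^p$ estimates then give a uniform $W^{2,1}_p(\mathcal O)$ bound, hence (Sobolev embedding, $p$ large) a uniform parabolic-Hölder bound on $\varphi^m$ and $\varphi^m_x$, which makes the right-hand side uniformly Hölder and, by interior Schauder estimates, upgrades to a uniform $C^{1+\alpha/2,2+\alpha}(\mathcal O)$ bound. Since $m\mapsto\varphi^m$ is non-increasing (the infimum in \eqref{stoch_rep} runs over the increasing family $\mathcal A_t^m$) and is bounded below, it converges pointwise; combined with the uniform local $C^{1+\alpha/2,2+\alpha}$ bounds and Arzel\`a--Ascoli, the convergence takes place in $C^{1,2}_{\mathrm{loc}}\big((0,T)\times\mathbb R^n_+\times\mathbb R^d\big)$, and passing to the limit in the untruncated equation shows that the limit $\varphi$ solves \eqref{PDE1} classically on $(0,T)\times\mathbb R^n_+\times\mathbb R^d$.

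It then remains to check the terminal condition, i.e.\ that $\varphi$ extends continuously up to $t=T$ with $\varphi(T,\cdot)=f$; each $\varphi^m$ does so by Lemma \ref{solution_PDE}, and using $\delta\equiv0$ in \eqref{stoch_rep} with the continuity and polynomial growth of $f$ gives $\limsup_{t\uparrow T}\varphi(t,a)\le f(a)$, while a standard a priori estimate for the control problem --- moving $X$ by order $\varepsilon$ in a time $T-t$ costs at least of order $\varepsilon^2/(T-t)$ in the running cost, so it is suboptimal for $t$ near $T$ --- together with the lower bound on $f$ gives $\liminf_{t\uparrow T}\varphi(t,a)\ge f(a)$. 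I expect the main obstacle to be exactly the $m$-uniformity of the gradient bound (which rests on the product structure of the model: the control acts only on the nontraded assets, so the law of the traded ones, which governs the growth of $f$, is left untouched) together with the behaviour at $t=T$; the mild degeneracy of $\mathcal L$ as some $s^i\downarrow0$ is not a real difficulty, being absorbed either by restricting to compacts of $(0,\infty)^n$ --- which suffices for a classical solution on $\mathbb R^n_+=(0,\infty)^n$ --- or by the logarithmic change of variables already used in Lemma \ref{solution_PDE}.
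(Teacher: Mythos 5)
Your proof is correct and rests on exactly the same key observation as the paper's: the representation \eqref{est_der} gives a bound on $\varphi^m_x$ that is uniform in $m$ (because the control $\delta$ only perturbs the nontraded assets, leaving the law of $S$ — which governs the growth of $f_x$ — untouched), so that $h^m(\beta'\varphi^m_x)=h(\beta'\varphi^m_x)$ on $\{\|s\\|\le M\}$ once $m\ge D(M)$. The paper essentially stops there and asserts that, $M$ being arbitrary, this yields a classical solution on the whole domain; your additional monotonicity-plus-Schauder/Arzel\`a--Ascoli passage to the limit in $m$ and your verification of the terminal condition supply details the paper leaves implicit, but they do not constitute a different route.
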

\begin{proof}
By the representation \eqref{est_der} we have
$$\varphi^m_{x^i}(t,a)= e^{-\alpha_i(T-t)}E^Q_{t,a}\left[f_{x^i}(A_T)\right]\leq CE^0_{t,a}\left[\|S_T\|^q\right]\leq C\|s\|^q$$
where the constant is independent of $m$, since this parameter only modifies through $\delta$ the dynamics of $X$, and by the growth assumptions on $f$.\\
For $M>0$ arbitrarily large we can find $D>0$ such that $\gamma\|\beta'\varphi^m_x\|\leq D$ if $\|s\|\leq M$, uniformly in $m$. Therefore if $m\geq D$ then
\begin{equation}\sup_{\delta\in\mathcal B^m(\mathbb R^d)}\left\{-(\beta'\varphi^m_x)\delta-\frac{1}{2\gamma}\|\delta\|^2\right\}=\sup_{\delta\in\mathbb R^d}\left\{-(\beta'\varphi^m_x)\delta-\frac{1}{2\gamma}\|\delta\|^2\right\},\label{sup=sup}\end{equation}
for $\|s\|\leq M$. Since $M$ is arbitrary, this implies that \eqref{PDE1} admits a classical solution on the whole domain $[0,T]\times\mathbb R^n_+\times \mathbb R^d$.
\qed\end{proof}
We can finally prove the part (i) in Theorem \ref{main} for a continuous payoff $f$ satisfying Assumption \ref{Ass:f_cont}.\bigskip

\begin{proof}[ of Theorem \ref{main}\,(i) under Assumption \ref{Ass:f_cont}]
We approximate the payoff by a sequence of $C^3$ functions $f^l$ satisfying Assumption \ref{Ass:f_cont} and converging pointwise to $f$. We assume $f^l$ to be bounded and with bounded derivatives for each $l$.
When a smooth $f^l$ is used as terminal condition by Lemma \ref{classical} we can define the classical solution $\varphi^{l}$ to PDE \eqref{PDE1} as a limit of a sequence $\varphi^{m,l}$ when $m\to\infty$. By Lemma \ref{der_repr} for each $m$ we have
\begin{equation}
|\varphi^{m,l}_{s^i}(t,a)|+|\varphi^{m,l}_{x^j}(t,a)|\leq  C\|s\|^q\wedge l \label{bound_phi}
\end{equation}
where
$$dX^i_t=(b^i (t)-\alpha_i X^i_t+\beta_{i\bcdot}\hat\delta^m _t) dt+\beta_{i\bcdot} dW^{X,Q}_t \quad i=1,\ldots,d,$$
and $\hat \delta^m$ is the maximizer in LHS of (\ref{sup=sup}).
Here $C$ is independent of $m$ (because of the uniformity property in the nontraded assets as in Assumption \ref{Ass:f_cont}) and of $l$ (because of continuity). Moreover, bounding with $l$ can always be done since the derivatives of $f^l$ are bounded for each $l$.
Remark therefore that, being $\gamma \beta' \varphi_x ^l (t,a)$ the maximizer in the RHS of (\ref{sup=sup}), one necessarily has $\|\hat \delta^m _t \| \leq \| \gamma \beta' \varphi_x ^l (t,A_t)\|$. Therefore $\hat\delta^m _t= -\gamma \beta'\varphi^l_x(t,A_t)$ when $m$ is big enough and therefore
\begin{equation}\label{phi-l-exp}\varphi^{l}_{x^j}(t,a)= e^{-\alpha_j(T-t)}E^Q_{t,a}\left[f^l_{x^j}(A_T)\right]\leq  C\|s\|^q\end{equation}
and similarly for $\varphi^{l}_{s^i}$, where
\begin{equation}	\label{dyn_nontraded}
dX^j_t=(b^j (t)-\alpha_j X^j_t-\gamma\beta_j\beta'\varphi^l_x(t,A_t)) dt+\beta_jdW^{X,Q}_t \quad j=1,\ldots,d.
\end{equation}
For fixed $m$ we recall the Zhang representation in \cite{zhang.05}, Theorem 3.2 (as in \eqref{zhang})
\begin{equation*}
\varphi_{a}^{m,l}(t,a)= E^0_{t,a}\left[f^l(A_T)N_T-\int_t^T h^m(Z^{X,m,l}_r)N_r dr\right]
\end{equation*}
where
$$Y^{m,l}_t=f^l-\int_t^T h^m(Z^{X,m,l}_r)dr-\int_t^T Z^{m,l}_r dW^0_r .$$
Hence
\begin{equation*}
\varphi_{a}^{l}(t,a)= E^0_{t,a}\left[f^l(A_T)N_T-\frac{\gamma}{2}\int_t^T\|Z^{X,l}_r\|^2N_r dr\right]
\end{equation*}
by dominated convergence and the previous estimates (\ref{bound_phi}) applied to
$$Z^{X,m,l}_t=\sigma (S_t)' \varphi_s ^{m,l}(t,A_t),$$
and the fact that $Z^{X,m,l}\to Z^{X,l}$ in $\mathbb H^{q'} (\mathbb R^d)$ for all $q' >0$ as $m\to \infty$ using classical results on quadratic BSDEs in \cite{Kob.00} (since we can assume without loss of generality that $f^l$ is bounded for fixed $l$), where
\begin{equation}	\label{BSDE_l}
Y^{l}_t=f^l-\int_t^T  \frac{\gamma}{2}\|Z^{X,l}_r\|^2dr-\int_t^T Z^{l}_r dW^0_r.
\end{equation}
Now by using an argument like in Lemma \ref{existence_BSDE} we get that $Y^{l}\to Y$ as $l\to \infty$ where
\begin{equation}    \label{conv_l}
Y_t=f-\int_t^T \frac{\gamma}{2}\|Z^{X}_r\|^2dr-\int_t^T Z_r dW^0_r
\end{equation}
and also $Z^{l}\to Z$ in $\mathbb H^{q'} (\mathbb R^d)$ for all $q'> 0$. By the definition of the process $N$ in (\ref{def-N}), we obtain that $E^0_t[\|N_T\|^p]\leq C(T-t)^{-p/2}$ for any $p\geq 1$ and some constant $C>0$. Therefore again by dominated convergence
\begin{equation*}
\varphi_{a}^{l}(t,a)\to g(t,a) := E^0_{t,a}\left[f(A_T)N_T-\frac{\gamma}{2}\int_t^T\|Z^{X}_r\|^2N_r dr\right].
\end{equation*}
Similarly as in the last part of our Lemma \ref{visc_solution}, using Lemma 6.2 in \cite{fleming.06} we deduce that $\varphi^l$ converges, uniformly on compact sets of $[0,T]\times\mathbb R^n_+\times \mathbb R^d$, to $\varphi$, viscosity solution to \eqref{PDE1}, which is also continuous.\\
We will now show that $g$ is continuous and that $g=\varphi_{a}$.
To do so we can adapt the last part of Zhang's proof of Theorem 3.2 in \cite{zhang.05}, we give all the details for reader's convenience.
For all $\varepsilon>0$ we can choose an open set $O_\varepsilon$ with Lebesgue measure smaller than $\varepsilon$ and a continuous function $f^\varepsilon$ such that $f^\varepsilon=f$ outside $O_\varepsilon$.
Denote
$$g_\varepsilon(t,a):= E^0_{t,a}\left[f^\varepsilon(A_T)N_T-\frac{\gamma}{2}\int_t^T\|Z^{X}_r\|^2N_r dr\right]$$
(where $Z$ is solution to the limit BSDE  \eqref{conv_l}, with $f$ and not $f^\varepsilon$ as terminal condition). Denoting $g^i$ and $g^i _\varepsilon$ the $i$-th component of, respectively, $g$ and $g_\varepsilon$ we get
\begin{equation*}
\begin{split}
|g^i_\varepsilon-g^i|(t,a)&= |E^0_{t,a}\left[(f^\varepsilon(A_T)-f(A_T))N^{i}_T\right]|\leq E^0_{t,a}\left[|f^\varepsilon(A_T)-f(A_T)||N^{i}_T|;X_T\in O_\varepsilon\right]\\
&\leq E^0_{t,a}\left[|f^\varepsilon(A_T)+f(A_T)||N^{i}_T|;X_T\in O_\varepsilon\right]\leq C(t,a)\sqrt{\varepsilon}.
\end{split}
\end{equation*}
for some constant $C(t,a)$. Now taking a sequence $(t_{\kappa},A_{\kappa})$ tending to $(t,a)$ we have
\begin{equation*}
\begin{split}
|g^i&(t_{\kappa},A_{\kappa})-g^i(t,a)|\\
&\leq |g^i(t_{\kappa},A_{\kappa})-g^i_\varepsilon(t_{\kappa},A_{\kappa})|+|g^i_\varepsilon(t_{\kappa},A_{\kappa})-g^i_\varepsilon(t,a)|+|g^i_\varepsilon(t_{\kappa},A_{\kappa})-g^i(t,a)|\\
&\leq [C(t,a)+C(t_{\kappa},A_{\kappa})]\sqrt{\varepsilon}+|g^i_\varepsilon(t_{\kappa},A_{\kappa})-g^i_\varepsilon(t,a)|.
\end{split}
\end{equation*}
Since $g^i_\varepsilon$ is continuous and $\varepsilon$ is arbitrary we deduce that $g^i$ is continuous as well. 
Now for any $(t,\tilde a)\in [0,T]\times\mathbb R^n_+\times \mathbb R^d$ we have
$$\varphi^l(t,\tilde a)=\varphi^l(t,I^i\tilde a)+\int_0^{\tilde a^i} \varphi^l_{a^i}(t,I^i\tilde a+e_iy)dy$$
where we denoted $I^i$ the $\mathbb R^{n+d}$-identity matrix whose $i$-th diagonal entry is zero, and $e_i$ is the canonical basis vector in $\mathbb R^{n+d}$. By dominated convergence (using \eqref{bound_phi}) we deduce
$$\varphi(t,\tilde a)=\varphi(t,I^i\tilde a)+\int_0^{\tilde a^i} g^i(t,I^i\tilde a+e_iy)dy,$$
 implying that $g=\varphi_{a}$.
\qed\end{proof}
\begin{remark} In the representation \eqref{dyn_nontraded} it would be tempting to pass from measure $Q$ (coming from the stochastic control representation) to the MMM $Q^0$ by identifying $$dW^{X,0}_t=dW^{X,Q}_t-\gamma\beta'\varphi^l_x(t,A_t) dt.$$ This may however not be possible in general due to the growth properties of $\varphi^l_x$ and the fact that geometric Brownian motion does not have exponential moments.
\\We will perform a similar change of measure in the next section under more restrictive assumptions on the derivatives of the payoff function $f$.\end{remark}

\subsubsection{Discontinuous payoffs}

In this part of the paper, we show that the continuity of the payoff $f$ can be partially removed. The price to pay for that is imposing stronger conditions on its derivatives as in Assumption \ref{Ass:f_noncont}.
\\The idea hat lies at the heart of the proof that follows is showing that, when we approximate our discontinuous payoff $f$ with a smooth sequence $f^l$, the derivatives of the price $\varphi^l$ will not explode when $l\to \infty$ for $t<T$. This is easily seen if we take, for example, the digital payoff $f(x)=\mathbf 1_{[0,\infty)}(x)$ which does not depend on the traded assets. Setting $\alpha=0$ in the dynamics (\ref{X_dynamics}) we have
$$\varphi^l_x(t,x)=E^Q[f^l(X_T)]$$
with
$$dX_t=-\gamma\varphi^l_x(t,X_t)dt+\beta dW^{X,Q}_t,$$
and $\varphi^l_x(T-t,x)\to g(t,x)$, where $g$ solves the Burgers' equation
$$g_t+\gamma g_x g=\frac{1}{2}\beta^2g_{xx}$$
which has the solution
\begin{equation*}
g(t,x)=\frac{ \beta e^{-\frac{x^2}{2\beta^2 t}}(1-e^{-\frac{\gamma}{\beta^2}})}{\gamma\sqrt{2\pi t}\left[(e^{-\frac{\gamma}{\beta^2}}-1)\Phi\left(\frac{x}{\beta\sqrt{t}}\right)+1\right]}
\end{equation*}
In particular we clearly have $g(t,x)\leq\frac{C}{\sqrt{t}}$, where $C=\frac{ \beta}{\gamma\sqrt{2\pi }}(e^{\frac{\gamma}{\beta^2}}-1)$. Unfortunately the Burgers-type equation that results by adding traded assets does not seem to have an explicit solution, therefore we will need to employ a different method to get a similar estimate. Here is the proof of our main result concerning discontinuous payoffs.\medskip

\begin{proof}[ of Theorem \ref{main}\,(i) under Assumption \ref{Ass:f_noncont}]
Take again a sequence $f^l$ of approximating smooth functions as in the proof of Lemma \ref{solution_PDE}. Each function $f^l$ of the sequence satisfies the assumptions of Lemma \ref{der_repr}, so that the representation formula therein applies and we have that
\begin{equation}    \label{est_x}
|\varphi_{x^i}^{l}(t,a)|\leq C^l (1+\|s\|^q),
\end{equation}
with the constant $C^l$ depending on $l$. Remark that this is not the same constant appearing in the characterization of uniform growth with respect to $x$: since we are dealing with discontinuous payoffs, the derivatives of the approximating functions $f^l$ may well explode close to the discontinuities for large $l$. We will have
\begin{equation}    \label{est_fl}
|f_{x^i}^{l}(a)|\leq C^l(x) (1+\|s\|^q) \quad i=1,\ldots,d,
\end{equation}
where $C^l(x)$ is a function which stays bounded on compact sets which do not include discontinuity points, but that may explode at these points for large $l$. In order to see this, we can explicitly write the mollified sequence $f^l$ as
$$f^l(s,x)=\int_{\mathbb R^d} f(s,x+y)\psi^l(y)dy=\int_{\mathbb R^d} f(s,z)\psi^l(z-x)dz$$
where
$$\tilde \psi^l(x)=K\exp\left(\frac{-1}{1-\|x\|^2}\right)\mathbf 1_{\{\|x\|\leq 1\}} , \quad \psi^l(x)=l\tilde\psi^l\left(l x\right)$$
Recall that $\psi^l$ is a mollifier with support on $\mathcal B_d(1/l)$.
If $\|x-I\|>1/l$, where $I$ is the discontinuity point closest to $x$, then
$$f^l_{x^i}(s,x)=\int_{\mathbb R^d} f_{x^i}(s,x+y)\psi^l(y)dy$$
and so $|f^l_{x^i}(s,x)|\leq C(1+\|s\|^q)$.
For $\|x-I\|\leq1/l$ we use the representation (recall that $f(s,\cdot)$ is bounded for fixed $s$)
$$f^l_{x^i}(s,x)=-\int_{\mathbb R^d} f(s,z)\psi^l_{x^i}(z-x)dz$$
which yields
$$|f^l_{x^i}(s,x)|\leq C(1+\|s\|^q)\int_{\mathbb R^d} |\psi^l_{x^i}(z-x)|dz\leq Cl(1+\|s\|^q)$$
since $f$ has uniform polynomial growth in $s$. Therefore
$$|f^l_{x^i}(s,x)|\leq C^l(x)(1+\|s\|^q)$$
where
$$C^l(x)=Cl\mathbf 1_{\{\|x-I\|\leq1/l\}}.$$
Also by Lemma \ref{der_repr} and Assumption \ref{Ass:f_noncont} (iii) we have
\begin{equation}    \label{est_s}
|\varphi^l_{s^i}(t,a)|\leq C\frac{1}{s^i}
\end{equation}
for $s^i$ big enough (since discontinuities can only occur in the $x$-variables) and for some constant $C>0$ independent of $l$ and $x$. If we consider the pricing BSDE \eqref{BSDE_l} associated with $f^l$ we can identify
$Z^{X,l}_t=\beta'\varphi^l_x(t,A_t)$ and $Z^{S,l}_t=\sigma(S_t)'\varphi^l_s(t,A_t)$.
By estimate \eqref{est_s} we deduce that $Z^{S,l}$ is bounded for each $l$ (with a bound independent on $l$). By estimate \eqref{est_x} we can assume $Z^{X,l}$ to be bounded for each $l$ (by possibly bounding the growth in the traded assets with $l$), which allows us to perform a probability measure change to get
\begin{equation}    \label{probchange}
\begin{split}
|\varphi_{x^i}^{l}(t,a)|&=^{(i)}\left|E^0_{t,a}\left[\frac{\mathcal E_T}{\mathcal E_t}(-\gamma Z^{X,l}\cdot W^X) e^{-\alpha_i(T-t)} f^l_{x^i}(A_T)\right]\right|\\
&\leq^{(ii)} CE^0_{t,a}\left[e^{\gamma(Y^l_t-f^l+\int_t^TZ^{S,l}_r dW^{S,0}_r)} |f^l_{x^i}(A_T)|\right]\\
&\leq^{(iii)} Ce^{\gamma Y^l_t}E^0_{t,a}\left[e^{\gamma \int_t^TZ^{S,l}_r W^{S,0}_r} |f^l_{x^i}(A_T)|\right]\\
&\leq^{(iv)} C e^{\gamma Y^l_t}E^0_{t,a}\left[\frac{\mathcal E_T}{\mathcal E_t}(\gamma Z^{S,l}\cdot W^{S,0}) |f^l_{x^i}(A_T)|\right]\\
&=^{(v)} C e^{\gamma Y^l_t}E^{\bar Q}_{t,a}\left[|f^l_{x^i}(A_T)|\right]\leq^{(vi)} C\|s\|^q e^{\gamma Y^l_t}E^{0}_{t,x}\left[C^l(X_T)\right]\\
&\leq^{(vii)} \frac{C\|s\|^q\|x\|^{q'}}{\sqrt{T-t}} e^{\gamma Y^l_t}\leq^{(viii)}  \frac{C\|s\|^q\|x\|^{q'}}{\sqrt{T-t}}  e^{\gamma C(1+\|s\|^q)}\end{split}
\end{equation}
where the constant $C$ changes from line to line and the inequalities above can be justified as follows:
\begin{enumerate}
\item is due to the second equality in (\ref{est_der}) applied to the sequence $\varphi^l (t,a)$, which has bounded derivatives.
\item comes from the pricing BSDE \eqref{BSDE_l} under the MMM $Q^0$, which implies
$$\frac{\mathcal E_T}{\mathcal E_t}(-\gamma Z^{X,l}\cdot W^X)=e^{-\gamma\left( \int_t^TZ^{X,l}_rdW^X_r+\frac{\gamma}{2} \int_t^T\|Z^{X,l}_r\|^2dr\right)}=e^{\gamma(Y^l_t-f^l+\int_t^TZ^{S,l}_r dW^{S,0}_r)} .$$
\item is a consequence of boundedness from below of $f$.
\item is derived from boundedness of $Z^{S,l}$, uniformly in $l$ (so that $C$ does not depend on $l$).
\item is obtained by applying the measure change $\frac{d\bar Q}{d Q^0}=\mathcal E_T(\gamma Z^{S,l}\cdot W^{S,0})$.
\item the inequality comes from Assumption \ref{Ass:f_noncont}\,(ii) and the fact that the drift changes   induced by the measure change $\frac{d\bar Q}{d Q^0}$ are bounded and only pertain the tradable assets. In particular the dynamics of $S^i$ under $\bar Q$ can be controlled by noticing
$$S^i_T=S^i_te^{\gamma \sigma_{i\bcdot}\int_t^TZ^{S,l}_udu-\frac{\|\sigma_{i\bcdot}\|^2}{2}(T-t)+\sigma_{i\bcdot}(W^{S,\bar Q}_T-W^{S,\bar Q}_t)}\leq CS^i_te^{-\frac{\|\sigma_{i\bcdot}\|^2}{2}(T-t)+\sigma_{i\bcdot}(W^{S,\bar Q}_T-W^{S,\bar Q}_t)}.$$ The inequality above is due to the fact that, by Assumption \ref{Ass:f_noncont}\,(ii), there exist a threshold $M>0$ such that $|\varphi_{s^i} ^l (t,A_t)| \leq C/S_t ^i$ when $|S_t ^i | \geq M$, otherwise it is bounded. Thus one obtains
\begin{equation} |\gamma Z_t^{S,l}| = |\gamma \sigma (S_t)' \varphi_s ^l (t,A_t)| \leq C \gamma \left\| \sigma (S_t)\frac{1}{S_t} \right\| ,\label{est-Z-disc}\end{equation}
where one can easily check that the last term on the RHS is constant.

\item is the derived from the definition of $C^l$ and using the density of $X_T$ (i.e. the multivariate Gaussian). In fact, taking for simplicity just one discontinuity point at zero we immediately see that
$$E^{0}_{t,x}\left[C^l(X_T)\right]=Cl P_{t,x}(\|X_T\|\leq 1/l)\leq Cl \frac{1}{l}\frac{1}{\mbox{det}(\textrm{Var}_{t,x}(X_T))^{1/2}}\leq \frac{C}{\sqrt{T-t}}$$
with the obvious notations for conditional variance and probability.
\item Since $f$ has uniform polynomial growth in $s$, the same holds for $f^l$ (uniformly in $l$). Therefore $Y^l_t\leq E^0_{t,a}[f^l(A_T)]\leq C+CE^0_{t,s}[\|S_T\|^q]\leq C(1+\|s\|^q)$.
\end{enumerate}
Using the previous estimate (\ref{probchange}), we can apply the usual stability properties (Lemma 6.2 in \cite{fleming.06}) to get $Y_t=\lim_l\varphi^l(t,A_t)=\varphi(t,A_t)$, where $\varphi$ is a viscosity solution of \eqref{PDE1}.
\\We now would like to prove that $\varphi$ has continuous first derivatives in all space variables. Since $Z^{X,l}$ is locally bounded uniformly in $l$ by \eqref{probchange} we can use Lemma \ref{unif_int} componentwise (together with Lemma \ref{L2}) to get the uniform integrability property allowing us to use dominated convergence and obtain
\begin{equation*}
\varphi_{a}^{l}(t,a)\to g(t,a):= E^0_{t,a}\left[f(A_T)N_T-\frac{\gamma}{2}\int_t^T\|Z^{X}_r\|^2N_r dr\right] .
\end{equation*}
To conclude it suffices to show that $g$ is continuous and that $g=\varphi_{a}$. This can be done by exactly the same arguments that we used at the end of the proof of Theorem \ref{main}\,(i) under Assumption \ref{Ass:f_cont}. For this reason, we omit this part of the proof.
\qed\end{proof}

\begin{remark}
Had we supposed directly the multiplicative form $f(s,x)=g(x)h(s)$ with a bounded $g$ then we could have allowed for a countable (and not simply finite) number of discontinuities in $g$. This is true by remarking that in \eqref{probchange} we could have used Theorem 3.2 in \cite{zhang.05}, by considering the function $u^l (t,x)=E^{0}_{t,x}[g^l(X_T)]$ (corresponding to the trivial linear BSDE arising from the martingale representation theorem) and the estimates on its derivative $u^l _x(t,x)=E^{0}_{t,x}[g^l_x(X_T)]$.
\end{remark}

\begin{remark}
  Here we focused on the case of discontinuities only taking place in the $x$-variables, as it turns out to be the most useful case in the applications (See Section \ref{Sec:electricity}). The arguments in the previous proof (in particular estimate \eqref{probchange}) could, however, be easily adapted to the case where discontinuities take place only in the $s$ variables, provided the payoff has polynomial growth in $x$, uniformly in $s$.
\end{remark}

\subsubsection{The optimal hedging strategy}
The previous results (stating the differentiability of UIP) allows us to represent $Z^S$ in terms of the derivatives of the solution of a PDE. Indeed, when $f$ is bounded, the optimal strategy can be immediately recovered by $\Delta_t=-\sigma^{-1}Z^S_t$, using Lemma \ref{UIP_BSDE}. The next result gives a slight generalization to the case where $f$ has polynomial growth in the traded assets.\medskip

\begin{proof}[ of Theorem \ref{main}\,(ii)] 
Approximate $f$ as in Lemma \ref{solution_PDE} with a sequence $f^l$, where each of its element can always be taken to be bounded. By Lemma \ref{UIP_BSDE}, the corresponding optimal strategies with the claims $f^l$ are given by $\hat\pi^{l}_t=-\sigma^{-1}\sigma(S_t)\varphi^l_s(t,A_t)+\frac{1}{\gamma}\sigma^{-2}\mu$ and the value functions are
\begin{equation*}
u^l(t,v,a)=\sup_{\pi}E_{t,a}\left[-e^{-\gamma\left(V^v_T(\pi)+ f^l\right)}\right]=E_{t,a}\left[-e^{-\gamma\left(V^v_T(\hat\pi^{l})+ f^l\right)}\right].
\end{equation*}
By the growth assumptions in $s$ (uniform in $x$) we deduce that the assumptions of Lemma \ref{convergence} are satisfied and therefore
\begin{equation} \label{ul_to_u}
u^l\to u
\end{equation}
for all $(t,v,a)\in [0,T]\times \mathbb R\times \mathbb R^n_+\times \mathbb R^d$, where
$$u(t,v,a)=E_{t,a}\left[-e^{-\gamma\left(V^v_T(\hat\pi)+ f\right)}\right]$$
for some optimal $\hat\pi$. We would like to identify $\hat\pi$ with $\tilde\pi_t :=-\sigma^{-1}\sigma(S_t)\varphi_s(t,A_t)+\frac{1}{\gamma}\sigma^{-2}\mu$.
An application of the reverse Fatou's Lemma gives
\begin{equation} \label{Fatou1}
\limsup_l E_{t,a}\left[-e^{-\gamma\left(V^v_T(\hat\pi^{l})+ f^l\right)}\right]\leq  E_{t,a}\left[\lim_l -e^{-\gamma\left(V^v_T(\hat\pi^{l})+ f^l\right)}\right],
\end{equation}
where the limit on the LHS is meant to be in probability. To show that this limit exists, remark first that $\hat\pi^{l}\to\tilde\pi$ in $\mathbb H^{2} (\mathbb R^n)$, which implies that $V^v_T(\hat\pi^{l})$ converges to $V^v_T(\tilde\pi)$ in $L^2(\Omega,P)$, hence in probability. In the same way, $f^l\to f$ in probability. By using \eqref{ul_to_u} and the continuity of the exponential function, \eqref{Fatou1} becomes
\begin{equation*}
E_{t,a}\left[-e^{-\gamma\left(V^v_T(\hat\pi)+ f\right)}\right]\leq  E_{t,a}\left[ -e^{-\gamma\left(V^v_T(\tilde\pi)+ f\right)}\right],
\end{equation*}
which implies that $\tilde \pi$ is optimal. Indeed, one can show that $\tilde \pi$ belongs to $\mathcal H_M$ using the uniform estimate (\ref{bound_phi}) in the continuous payoff case or the estimate (\ref{est-Z-disc}) in the discontinuous case together with the fact that $S$ has moments of all positive orders.
\qed\end{proof}

\subsection{Asymptotic expansions}
In this subsection we turn to the problem of computing effectively the UIP and the corresponding optimal hedging strategy for a given contingent claim.
It is well-known that solving PDE \eqref{PDE1} numerically can be impractical for time reasons when the number of assets is large. It is therefore useful to derive some asymptotic expansions which allow to approximate the price and the hedging strategy when the risk aversion parameter $\gamma$ is small. The formulas are given in terms of the no-arbitrage price and strategy, which can usually be computed in a much simpler way either explicitly or by numerical integration or by Monte Carlo methods.\\
Consider a contingent claim with payoff $f(A_T)$ integrable under the MMM $Q^0$, whose no-arbitrage price under $Q^0$ is denoted by $p^0(t,a)=E^0_{t,a}[f(A_T)]$. 
Now define
$$\zeta(t,a):=E^0_{t,a}\left[\int_t^T \|\beta' p^0_{x}\|^2(s,A_s)ds\right].$$
The next result is due to a recent preprint by Monoyios (\cite{monoyios.12}).
\begin{lemma}\label{asymp}
Under Assumption \ref{Ass:f_cont} or \ref{Ass:f_noncont} for the contingent claim $f(A_T)$, the following asymptotic expansion holds:
  \begin{equation}	\label{asymptotic}
  \varphi(t,a)=p^0(t,a)-\frac{\gamma}{2}\zeta(t,a)+O(\gamma^2).
  \end{equation}
\end{lemma}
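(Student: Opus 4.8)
The plan is to expand the BSDE \eqref{BSDE_MMM} around $\gamma=0$. Write $\varphi^{\gamma}$ for the solution of \eqref{PDE1} (equivalently the $Y_0$-process of \eqref{BSDE_MMM}), and note that at $\gamma=0$ the quadratic term vanishes, so \eqref{BSDE_MMM} degenerates into the linear BSDE whose solution is $p^{0}(t,a)=E^{0}_{t,a}[f(A_T)]$, i.e.\ $p^{0}_t = f - \int_t^T Z^{0}_s\, dW^{0}_s$ with $Z^{S,0}_t = \sigma(S_t)' p^0_s(t,A_t)$ and $Z^{X,0}_t = \beta' p^0_x(t,A_t)$ by the Markov property. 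The candidate first-order term is obtained by substituting this zeroth-order $Z^X$ into the quadratic driver: the correction $\varphi^1(t,a) := -\tfrac12 \zeta(t,a) = -\tfrac12 E^0_{t,a}[\int_t^T \|\beta' p^0_x\|^2(s,A_s)\,ds]$ solves, again by Feynman--Kac, the linear PDE $\mathcal L \varphi^1 = \tfrac12\sum_j (\beta'_{\bcdot j} p^0_x)^2$ with zero terminal condition. So the claim is precisely that $\varphi^{\gamma} = p^0 + \gamma \varphi^1 + O(\gamma^2)$.

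The cleanest route is probabilistic rather than via PDE comparison, since \eqref{PDE1} is only a viscosity solution under Assumptions~\ref{Ass:f_cont} or \ref{Ass:f_noncont}. First I would establish, using the stochastic control representation \eqref{stoch_rep} (passing $m\to\infty$ as justified in Lemma~\ref{classical} and the continuous/discontinuous-payoff proofs), that $\varphi^\gamma(t,a) = \inf_{\delta} E^{Q^\delta}_{t,a}[\tfrac{1}{2\gamma}\int_t^T\|\delta_r\|^2 dr + f(A_T)]$, equivalently after the rescaling $\delta = \gamma \nu$ that $\varphi^\gamma(t,a) = \inf_{\nu} E^{Q^{\gamma\nu}}_{t,a}[\tfrac{\gamma}{2}\int_t^T\|\nu_r\|^2 dr + f(A_T)]$ where under $Q^{\gamma\nu}$ the nontraded assets acquire the drift perturbation $\gamma\beta_{i\bcdot}\nu_t$. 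Taking $\nu = 0$ gives the trivial upper bound $\varphi^\gamma \le p^0$; taking the (near-)optimal feedback $\nu_t = -\beta' \varphi^\gamma_x$ and expanding $E^{Q^{\gamma\nu}}[f(A_T)]$ in $\gamma$ via Girsanov — the Radon--Nikodym density being $\mathcal E_T(\gamma \nu\cdot W^X)$ — produces $p^0 - \gamma E^0[\int_t^T (\beta' p^0_x)\cdot(\beta'\varphi^\gamma_x)\,ds] + O(\gamma^2)$, and combined with $\varphi^\gamma_x = p^0_x + O(\gamma)$ (itself a consequence of the derivative representations in Lemma~\ref{der_repr} and the continuity-in-$\gamma$ they yield) this gives the matching lower bound $p^0 - \tfrac{\gamma}{2}\zeta + O(\gamma^2)$. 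Monoyios's argument in \cite{monoyios.12} organizes exactly this kind of expansion, so I would cite it for the bookkeeping and concentrate on checking its hypotheses hold in our setting.

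The main obstacle is integrability control on the remainder, uniformly in $\gamma$ on a neighbourhood of $0$. Because $f$ and $p^0_x$ only have polynomial growth in $s$ and $S$ is geometric Brownian motion, the terms one must bound — products like $f(A_T)\,(\mathcal E_T(\gamma\nu\cdot W^X)-1)$ and the $O(\gamma^2)$ tail of the stochastic exponential — involve $L^p(Q^0)$ norms of polynomials in $S_T$ (harmless, all moments finite) multiplied by exponential-type factors coming from $\nu = -\beta'\varphi^\gamma_x$; here one uses that the perturbation lives only in the $X$-coordinates, which are Gaussian and for which $\varphi^\gamma_x$ has the controlled growth \eqref{bound_phi} (continuous case) or \eqref{est-Z-disc}-type bounds (discontinuous case), so the relevant exponential moments are finite for $\gamma$ small. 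Making the $O(\gamma^2)$ genuinely uniform — rather than merely pointwise in $(t,a)$ — requires assembling these estimates carefully, and this, rather than the formal Taylor expansion, is where the real work lies; for the statement as given (pointwise in $(t,a)$) it suffices to combine the above with the a priori bounds already proved in this section, so I would present the pointwise version and refer to \cite{monoyios.12} for the technical estimates.
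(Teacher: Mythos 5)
Your proposal is correct and ends exactly where the paper's proof begins: the paper's entire argument is a citation of \cite{monoyios.12}, Theorem 5.3, together with the single observation that the growth assumptions guarantee $f\in L^2(Q)$ for every $Q\in\mathcal M_E$, which is the hypothesis of that theorem. Your reconstruction of the control-representation/Girsanov expansion (optimal feedback $\nu=-\beta'\varphi^\gamma_x$, first-order term from the martingale representation $Z^{X,0}=\beta'p^0_x$, deferral of the uniform remainder estimates to Monoyios) is a faithful account of what the cited result does, so the two routes coincide.
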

\begin{proof}
This is a reformulation of  \cite{monoyios.12}, Theorem 5.3. It is enough to remark that our growth assumptions on $f$ ensure that it is in $L^2(Q)$ for any $Q\in\mathcal M_E$.
\qed\end{proof}
The next result provides asymptotic expansions for the derivatives of the price, and therefore of the optimal hedging strategy.

\begin{lemma}
  Suppose Assumption \ref{Ass:f_cont} holds, and moreover that $f_x$ is bounded. Then the following asymptotic expansions hold
\begin{eqnarray*} \varphi_{x^i}(t,a) &=& e^{-\alpha_i (T-t)} E^0_{t,a}\left[f_{x^i}(A_T)\right]-\gamma e^{-\alpha_i (T-t)}  E^0_{t,a}\left[f_{x^i}(A_T) \int_t^T \beta' \varphi^0_{x} (u,A_u) d W^X_u\right]+O(\gamma^2) \label{asydev_x} \\
\varphi_{s^i}(t,a) &=& E^0_{t,a}\left[\frac{S^i_T}{S^i_t}f_{s^i}(A_T)\right]-\gamma E^0_{t,a}\left[\frac{S^i_T}{S^i_t}f_{s^i}(A_T) \int_t^T \beta' \varphi^0_{x} (u,A_u) d W^X_u\right]+O(\gamma^2), \label{asydev_s}
\end{eqnarray*}
where $\varphi^0_{x^i}(t,a)= e^{-\alpha_i (T-t)} E^0_{t,a}\left[f_{x^i}(A_T)\right]$.
\end{lemma}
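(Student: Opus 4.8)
The strategy is to differentiate the asymptotic expansion \eqref{asymptotic} for the price $\varphi$ with respect to the space variables and justify that the remainder term $O(\gamma^2)$ has derivatives that are again $O(\gamma^2)$. More precisely, the plan is to start from the representation of the derivatives of $\varphi$ obtained in the proof of Theorem \ref{main}: under Assumption \ref{Ass:f_cont} (with $f_x$ bounded) we have, by Lemma \ref{der_repr} and the convergence argument in the proof of Theorem \ref{main}\,(i), the exact formulas
\begin{equation*}
\varphi_{x^i}(t,a)= e^{-\alpha_i(T-t)}E^Q_{t,a}\left[f_{x^i}(A_T)\right], \quad \varphi_{s^i}(t,a)= E^Q_{t,a}\left[\frac{S^i_T}{S^i_t}f_{s^i}(A_T)\right],
\end{equation*}
where under $Q$ the nontraded assets have the shifted drift $(b_j(t)-\alpha_jX^j_t-\gamma\beta_j\beta'\varphi_x(t,A_t))$ as in \eqref{dyn_nontraded}, while the traded assets evolve exactly as under $Q^0$. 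The idea is then to expand the Radon–Nikodym density $\frac{dQ}{dQ^0}$ in powers of $\gamma$.

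\textbf{Key steps.} First I would write the density of $Q$ with respect to $Q^0$ explicitly as the stochastic exponential $\mathcal E_T\big(-\gamma \beta'\varphi_x(\cdot,A_\cdot)\cdot W^{X,0}\big)$ restricted to $[t,T]$; since $f_x$ (hence $\varphi_x$, by the representation above) is bounded when $f_x$ is bounded, this density is well-defined and has moments of all orders, uniformly for $\gamma$ in a neighbourhood of $0$. Second, I would use Lemma \ref{asymp} to replace $\varphi_x$ inside this density by $\varphi^0_x + O(\gamma)$, where $\varphi^0_{x^i}(t,a)= e^{-\alpha_i(T-t)}E^0_{t,a}[f_{x^i}(A_T)]$ is the no-arbitrage delta; the correction to the density from using $\varphi^0_x$ instead of $\varphi_x$ is itself of order $\gamma\cdot O(\gamma)=O(\gamma^2)$ after one more expansion. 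Third, I would Taylor-expand the exponential: $\mathcal E_T(-\gamma\beta'\varphi^0_x\cdot W^{X,0}) = 1 - \gamma\int_t^T\beta'\varphi^0_x(u,A_u)\,dW^{X,0}_u + O(\gamma^2)$ in $L^p(Q^0)$, using boundedness of $\varphi^0_x$. Substituting this into the $Q$-expectations, changing back to $Q^0$-expectations, and collecting terms of order $1$ and $\gamma$ yields precisely the claimed formulas; note that since the traded-asset dynamics are unchanged between $Q$ and $Q^0$ one may equivalently write $\int_t^T\beta'\varphi^0_x\,dW^X_u$ as in the statement (the drift shift $\theta\,du$ in $W^{S,0}$ is irrelevant here as the integrand involves only $W^X$). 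Finally, I would verify that all the $O(\gamma^2)$ remainders are genuinely $O(\gamma^2)$: this uses the uniform-in-$\gamma$ $L^p$ bounds on the densities together with the fact that $f_{x^i}(A_T)$ and $S^i_T/S^i_t$ lie in every $L^p(Q^0)$ (the latter because geometric Brownian motion has all polynomial moments), plus a Lipschitz-type estimate $\|\varphi_x - \varphi^0_x\|_\infty = O(\gamma)$ which follows from Lemma \ref{asymp} applied also to the derivatives — or, more carefully, from a direct Gronwall estimate on the difference of the BSDEs solved by $\varphi_x$ and $\varphi^0_x$.

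\textbf{Main obstacle.} The delicate point is controlling the error term $\varphi_x - \varphi^0_x = O(\gamma)$ in the supremum norm with a constant uniform in the space variables, and propagating this through the density expansion without losing integrability. Lemma \ref{asymp} gives the expansion of $\varphi$ itself but not automatically of $\varphi_x$; to get the derivative expansion one must either differentiate the argument behind Lemma \ref{asymp} (i.e. work directly with the $\gamma$-parametrized BSDE for the gradient process, which is linear in $Z$ with a bounded terminal condition $f_x$ when $f_x$ is bounded, and is therefore amenable to classical stability estimates) or re-run the measure-change / fixed-point argument at the level of derivatives. In either case the boundedness of $f_x$ is exactly what makes the densities have uniform exponential moments and the remainders $O(\gamma^2)$; without it the expansion of the density would not be legitimate, which is why this stronger hypothesis is imposed in the statement. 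The rest — the two Taylor expansions and the change of measure — is routine once this uniform control is in place.
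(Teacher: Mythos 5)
Your proposal follows essentially the same route as the paper: both start from the Girsanov representation $\varphi_{x^i}(t,a)=E^0_{t,a}\left[\mathcal E_T(-\gamma\beta'\varphi_x\cdot W^X)\,f_{x^i}(A_T)\right]$ coming from Lemma \ref{der_repr}, and both extract the first-order term by expanding the stochastic exponential in $\gamma$, using the boundedness of $f_x$ (hence of $\varphi^\gamma_x$, uniformly in $\gamma$) to justify the $L^2$ convergence of $(\mathcal E_T-1)/\gamma$ to $-\int_t^T\beta'\varphi^0_x\,dW^X$. The paper merely organizes the error control a little differently, bounding the difference quotient directly via It\^o's isometry instead of first substituting $\varphi^0_x$ for $\varphi^\gamma_x$ inside the density, but this is the same argument.
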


\begin{proof}
In the rest of the proof for simplifying the notation, we will prove the expansions for $\alpha_i =0$ and only at time $t=0$, otherwise the same arguments (conditionally to $\mathcal F_t$) apply and get the result for any $t$. By considering as usual a sequence of approximating functions we get from equality (\ref{phi-l-exp}) and a simple application of Girsanov's theorem that
$$\varphi_{x^i}^{l}(0,a)=E^0 \left[\mathcal E_T(-\gamma \beta'\varphi^{l}_{x}\cdot W^X) f^l_{x^i}(A_T)\right]$$
which is bounded, uniformly in $l$. By taking $l\to\infty$ we get
$$\varphi_{x^i}(0,a)=E^0 \left[\mathcal E_T(-\gamma \beta'\varphi_{x}\cdot W^X) f_{x^i}(A_T)\right],$$
which is also bounded. Now we write $\varphi^\gamma$ to emphasize dependence on $\gamma$. So we have
$$\frac{\varphi^\gamma_{x^i}-\varphi^0_{x^i}}{\gamma}(0,a)=E^0 \left[\frac{\mathcal E_T(-\gamma \beta'\varphi^\gamma_{x}\cdot W^X)-1}{\gamma} f_{x^i}(A_T)\right] .$$
Moreover, we will denote the process $\varphi_x ^\gamma (t,A_t)$ by $\varphi_x ^\gamma$ with a slight abuse of notation. Remark that, defining $M^\gamma$ as the unique solution to $dM^\gamma_t=-\gamma M^\gamma_t\beta'\varphi^\gamma_{x}(t,A_t)dW_t ^X$ with initial condition $M_0 ^\gamma=1$, we have
\begin{equation*}
\begin{split}
E^0&\left[\left(\frac{\mathcal E_T(-\gamma \beta'\varphi^\gamma_{x}\cdot W^X)-1}{\gamma}+\int_0^T \beta'\varphi^0_{x}d W^X_s\right)^2\right]=E^0\left[\left(\int_0^T (\beta'\varphi^0_{x}-M^\gamma_s\beta'\varphi^\gamma_{x})d W^X_s\right)^2\right]\\
&=E^0\left[\int_0^T \|\beta'\varphi^0_{x}-M^\gamma_s\beta'\varphi^\gamma_{x}\|^2ds\right]\\
&\leq 2 E^0\left[\int_0^T \|\beta'\varphi^0_{x}-\beta'\varphi^\gamma_{x}\|^2ds\right]+2 E^0\left[\int_0^T \|\beta'\varphi^\gamma_{x}\|^2(1-M^\gamma_s)^2ds\right]\\
&\leq C E^0\left[\int_0^T(1-M^\gamma_s)^2ds\right], \\
\end{split}
\end{equation*}
where the second equality is due to It\^o's isometry, since the integrand therein belongs to $\mathbb H^2 (\mathbb R^d)$.
Since $f_{x^i}$ is bounded by assumption, $\varphi^\gamma_{x}$ is also bounded and this implies that $E^0[\int_0 ^T (1-M_s ^\gamma)^2 ds ]$ tends to zero as $\gamma \to 0$ by dominated convergence. Thus
$$\frac{\mathcal E_T(-\gamma \beta'\varphi^\gamma_{x}\cdot W^X)-1}{\gamma}\to -\int_0^T \beta'\varphi^0_{x}d W^X_t$$
in $L^2$ as $\gamma\to 0$, and therefore
$$\left.\frac{\partial}{\partial \gamma}\varphi^\gamma_{x^i}\right|_{\gamma=0}=\lim_{\gamma\to 0}\frac{\varphi^\gamma_{x^i}-\varphi^0_{x^i}}{\gamma}=-E^0 \left[f_{x^i}(A_T) \int_0^T \beta'\varphi^0_{x}d W^X_s\right].$$
The proof for $\varphi_{s^i}$ is analogous.
\qed\end{proof}
We conclude this section with a lower bound on the utility indifference price of $f$.
\begin{lemma}
Under Assumptions \ref{Ass:f_cont} or \ref{Ass:f_noncont} the following bound on the price holds:
$$\varphi(t,a)\geq-\frac{1}{\gamma}\log E^0_{t,a}\left[e^{-\gamma f(A_T)}\right] .$$
\end{lemma}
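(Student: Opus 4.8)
The claim is a lower bound $\varphi(t,a)\geq-\frac1\gamma\log E^0_{t,a}[e^{-\gamma f(A_T)}]$, and the natural route is via the BSDE representation \eqref{BSDE_MMM}. Recall $\varphi(t,A_t)=Y_t$ where
\begin{equation*}
Y_t=f-\int_t^T\frac{\gamma}{2}\|Z^X_s\|^2\,ds-\int_t^T Z_s\,dW^0_s,
\end{equation*}
and the right-hand side candidate is itself the value of an associated BSDE. Indeed, define $\tilde Y_t=-\frac1\gamma\log E^0_t[e^{-\gamma f(A_T)}]$. A standard computation (essentially the one giving the certainty-equivalent as a BSDE, see Remark \ref{rmk_certainty} and \cite{Kob.00}) shows that $\tilde Y$ solves
\begin{equation*}
\tilde Y_t=f-\int_t^T\frac{\gamma}{2}\|\tilde Z^X_s\|^2\,ds-\int_t^T\Bigl(\frac{\gamma}{2}\|\tilde Z^S_s\|^2+\tilde Z_s\Bigr)dW^0_s
\end{equation*}
— more precisely, writing $e^{-\gamma\tilde Y_t}=E^0_t[e^{-\gamma f}]$ as a positive martingale $N_t$ with $dN_t=N_t\nu_s\,dW^0_s$, Itô's formula applied to $-\frac1\gamma\log N_t$ gives $d\tilde Y_t=\frac{\gamma}{2}\|\nu_t\|^2 dt+\nu_t\,dW^0_t$, i.e. $\tilde Y$ solves the BSDE with driver $\frac{\gamma}{2}\|z^S\|^2+\frac{\gamma}{2}\|z^X\|^2$ (the full norm, both blocks).

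The comparison then hinges on the observation that the driver of \eqref{BSDE_MMM}, namely $g(z)=-\frac{\gamma}{2}\|z^X\|^2$, dominates the driver $\tilde g(z)=-\frac{\gamma}{2}\|z^S\|^2-\frac{\gamma}{2}\|z^X\|^2$ of the equation solved by $\tilde Y$ (when written in the form $Y_t=f+\int_t^T(\text{driver})\,ds-\int Z\,dW^0$): $g(z)\geq\tilde g(z)$ for all $z$, with equality exactly when $z^S=0$. Both equations have the same terminal condition $f$. Hence, provided both solutions enjoy enough integrability to invoke a comparison theorem for quadratic BSDEs (Kobylanski's \cite{Kob.00}, Theorem 2.3, or the extension used in Lemma \ref{existence_BSDE} for the unbounded case), we conclude $Y_t\geq\tilde Y_t$, which is exactly the asserted bound. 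First I would carry this out for bounded $f$ (where \cite{Kob.00} applies directly), then pass to the general case under Assumption \ref{Ass:f_cont} or \ref{Ass:f_noncont} by the same truncation/localization argument as in Lemma \ref{existence_BSDE}: write $f_n=(-n)\vee f\wedge n$, get $Y^n_t\geq\tilde Y^n_t$ for each $n$ from the bounded case, and let $n\to\infty$ using that $Y^n_t\to Y_t$ (shown in Lemma \ref{existence_BSDE} / its proof) and $\tilde Y^n_t=-\frac1\gamma\log E^0_t[e^{-\gamma f_n}]\to\tilde Y_t$ by monotone or dominated convergence (the exponential integrability of $e^{-\gamma f_n}$ is automatic, and the limit is finite since $f$ is sub-replicable, so $e^{-\gamma f}\le e^{-\gamma V^{v_1}_T(\pi_1)}$ which is $Q^0$-integrable after noting $V^{v_1}(\pi_1)$ is a $Q^0$-martingale — though for the lower bound one only needs $\tilde Y_t>-\infty$, which may require a short argument, or one simply allows the bound to read $\ge$ trivially when the RHS is $-\infty$).

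\textbf{Main obstacle.} The delicate point is the integrability needed to legitimately apply the comparison theorem in the unbounded regime: $Z$ from \eqref{BSDE_MMM} is only known to lie in $\mathbb H^{q'}_{\mathrm{loc}}$-type spaces (cf. the convergence $Z^l\to Z$ in $\mathbb H^{q'}$ used in the proof of Theorem \ref{main}), and $\tilde Z$ inherited from the martingale $N$ need not be BMO because $e^{-\gamma f}$ is merely $Q^0$-integrable, not bounded. The cleanest way around this is precisely to do the comparison only at the level of bounded truncations $f_n$ — where \cite{Kob.00} gives both existence of a bounded solution and comparison with no integrability caveats — and then transfer the inequality to the limit using the convergence statements already established in Lemma \ref{existence_BSDE} and Lemma \ref{convergence}; this avoids proving any new comparison result for genuinely unbounded quadratic BSDEs. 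A secondary (minor) point is checking that $\tilde Y^n$ is really the minimal/the relevant solution of its BSDE so that $Y^n\geq\tilde Y^n$ rather than the reverse — but since $\tilde Y^n$ is explicitly a log of a conditional expectation it is bounded and hence \emph{the} solution in the class where Kobylanski's comparison holds, so no ambiguity arises.
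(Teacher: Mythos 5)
Your argument is correct, and it is the probabilistic twin of the paper's proof: both rest on the same Hopf--Cole observation that $-\frac1\gamma\log E^0_{t,a}[e^{-\gamma f(A_T)}]$ satisfies the pricing equation with an \emph{extra} nonnegative quadratic penalization in the traded directions, and then on a comparison principle. The paper executes this analytically: it sets $h(t,a)=E^0_{t,a}[e^{-\gamma f(A_T)}]$, notes that $g=-\frac1\gamma\log h$ solves $\mathcal L g-\frac{\gamma}{2}\|\sigma(S)'g_s\|^2-\frac{\gamma}{2}\|\beta'g_x\|^2=0$ with terminal datum $f$, hence is a subsolution of \eqref{PDE1}, and invokes the PDE comparison theorem (first for smooth $f$, then "by our approximation arguments"). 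You instead run the comparison at the BSDE level: the driver $-\frac{\gamma}{2}\|z^X\|^2$ of \eqref{BSDE_MMM} dominates the driver $-\frac{\gamma}{2}\|z\|^2$ of the certainty-equivalent BSDE, and Kobylanski's comparison theorem applies at the level of the bounded truncations $f_n$, after which you pass to the limit using the convergence already secured in Lemma \ref{existence_BSDE} and dominated convergence for $E^0_t[e^{-\gamma f_n}]$ (with the bound holding trivially when $E^0_{t,a}[e^{-\gamma f}]=+\infty$). If anything, your truncation step makes the unbounded case more explicit than the paper's one-line appeal to approximation. Two cosmetic slips worth fixing: in your displayed BSDE for $\tilde Y$ the term $\frac{\gamma}{2}\|\tilde Z^S_s\|^2$ has landed inside the stochastic integral, and the intermediate formula $d\tilde Y_t=\frac{\gamma}{2}\|\nu_t\|^2dt+\nu_t\,dW^0_t$ has the wrong normalization unless you set $\tilde Z=-\nu/\gamma$ first (the correct statement is $d\tilde Y_t=\frac{1}{2\gamma}\|\nu_t\|^2dt-\frac1\gamma\nu_t\,dW^0_t$); your final identification of the driver as $\frac{\gamma}{2}\|z\|^2$ on both blocks is nevertheless correct, so neither slip affects the conclusion.
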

\begin{proof}
Define
$$h(t,a)=E^0_{t,a}\left[e^{-\gamma f(A_T)}\right]$$
which solves
\begin{equation*}
\begin{cases}
\mathcal L h=0\\
h(T,a)=e^{-\gamma f(a)}
\end{cases}
\end{equation*}
in the classical sense (assuming $f$ to be smooth). Now set $g=-\frac{1}{\gamma}\log h$, so that $g$ solves
\begin{equation*}
\begin{cases}
\mathcal L g -\frac{\gamma}{2}\|\sigma(S)'g_s\|^2-\frac{\gamma}{2}\|\beta'g_x\|^2=0\\
g(T,a)=f(a).
\end{cases}
\end{equation*}
By the comparison theorem for PDEs we have that $g(t,a)\leq\varphi(t,a)$. By our approximation arguments the same bound holds true when $f$ is not smooth. 
\qed\end{proof}


\section{Application to electricity markets}		\label{Sec:electricity}
Our framework can be particularly useful to evaluate derivatives in situations where the underlying asset prices are determined by the interplay between several factors, but only some of these can be actually traded on a financial market (while the others may be of a totally different nature, for example macroeconomic or even behavioral factors). This is the case in particular for structural models of electricity prices, where the relevant components that influence the price are typically both tradable (like fuels) and non tradable (like market demand or production capacities)\footnote{We refer the reader to \cite{Carmona.12} for a comprehensive survey of structural models.}.
\\The seminal contribution in the direction of structural electricity models has been the Barlow's model (\cite{Barlow.02}), which describes the electricity spot price as a function of a one-dimensional diffusion representing the evolution of market demand. Since there is only a non tradable asset in his framework, utility indifference valuation here reduces to the computation of the certainty equivalent (see Remark \ref{rmk_certainty}), at least when prices are bounded (an assumption which is suggested by Barlow himself and which reflects the reality of electricity markets, where prices are usually capped).
Similar considerations hold for the models in \cite{skantze.00} or \cite{Cartea.08}, where an exponential function is used and an additional non tradable factor is added describing maximal capacity.\\
Building on this literature, several authors have proposed more developed structural models with the aim of capturing the contribution of other assets, notably the (marginal) fuels employed in electricity generation along with their production capacities. Since fuels are commodities which are typically traded on financial markets, their introduction fully justifies the employment of pricing techniques that allow for some kind of partial hedging (such as local risk minimization or, in our case, utility indifference pricing). For example, in \cite{pirrong.08} the authors describe the spot price as the product of two components accounting for a traded and a non traded asset (following, respectively, a geometric Brownian motion and an Ornstein-Uhlenbeck process as in our framework). Multi-asset models have then followed, with the aim of considering the whole stack of available fuels, which typically present different levels of correlation with the spot price depending on their available capacities and market demand. They enter in our framework, possibly with some minor adaptations.\medskip \\
In this paper we will focus especially on the model introduced in \cite{Aid.10}, where the authors directly model the spreads between fuels as geometric Brownian motions, hence the tradable assets of our model $S^i_t$ can be interpreted in this case as those fuel spreads by using the relation
$$S^i_t=h_i K^i_t-h_{i-1}K^{i-1}_t,$$
where $K_t ^i$ is the price at time $t$ of $i$-th fuel and the $h_i$'s are heat rates associated to each fuel. The model also includes fuel capacities $C^i_t$ and a process $D_t$ describing the demand for electricity, which make for $d=n+1$ nontradable assets. In \cite{Aid.10}, the dynamics postulated for tradable and nontradable assets perfectly fit into our setting, since the spread between two fuels follows a multidimensional Black-Scholes model while the non tradable ones follow Ornstein-Uhlenbeck processes with non zero mean-reversion and a seasonality component that can be embedded in the function $b(t)$ as in (\ref{X_dynamics}). More precisely, we have
\begin{eqnarray}	\label{dynamics2}
\frac{dS^i_t}{S^i_t}&=&\mu_i dt+\sigma_{i} dW^{S,i}_t,\quad i=1,\ldots ,n\\
dC^j_t&=&(b_{C^j} (t) -\alpha_{C^j} C^j_t) dt+\beta_{C^j} dW^{C^j}_t 	\quad j=1,\ldots,n\\
dD_t &=& (b_D (t) -\alpha_D D_t) dt+\beta_D dW^{D}_t 	,
\end{eqnarray}
where we also supposed that the stochastic components of the assets are independent (compare with equation (4.2) in \cite{Aid.10}), i.e. the Brownian motions $W^{C^j}$ and $W^{D}$ are assumed to be independent. The coefficients $\mu_i, \alpha_{C^j}, \alpha_D$ are arbitrary constants while $\sigma_i, \beta_{C^j}, \beta_D$ are strictly positive real numbers. Moreover, $b_{C^j} (t)$ and $b_D (t)$ are deterministic bounded functions that possibly include the seasonality component of nontraded asset dynamics.\\
One of the main goals of structural models for energy markets (included the one in \cite{Aid.10}) is to have a realistic and tractable  setting where pricing and hedging power derivatives. One of the most important derivatives to price and hedge is the forward contract on electricity, with payoff given by the value at maturity of the electricity spot price, which in \cite{Aid.10} can be written as
\begin{equation}	\label{forward_payoff}
f(a)=f(s,c,y)=g\left(\sum_{i=1}^n c^i-y\right)\displaystyle\sum_{i=1}^n h_i k^i \mathbf 1_{\{y\in I^i\}}=g\left(\sum_{i=1}^n c^i-y\right)\displaystyle\sum_{j\leq i\leq n} s^j\mathbf 1_{\{y\in I^i\}}
\end{equation}
where $g$ is a bounded function with bounded first derivatives, $c^i$ and $y$ stand for fuel capacities and market demand, and we used the fact that $h_i K^i_t=\sum_{j\leq i} S^j_t$. The function $g$ is called \emph{scarcity function}, it has a crucial role for producing spikes in electricity spot prices (see the paper \cite{Aid.10} for further details).\medskip \\
The BSDE approach developed in Section \ref{Sec:UIP_BSDE} can be applied to get the buying UIP of a forward contract written on electricity spot prices. Indeed, for the payoff \eqref{forward_payoff} (as well as for call options on spread) the sufficient conditions established in Lemma \ref{convergence} are easily checked, due to the simple multiplicative structure of the set of equivalent martingale densities implied by the independence between tradable and non tradable assets. On the other hand the payoff \eqref{forward_payoff}, as it is, does not satisfy neither Assumption \ref{Ass:f_cont} or \ref{Ass:f_noncont}, however it can be made to satisfy
\begin{itemize}
\item Assumption \ref{Ass:f_cont} by suitably modifying the scarcity function as in, e.g., \cite{Aid.12}, where the payoff of a forward contract is a Lipschitz continuous functions of all the assets.
\item Assumption \ref{Ass:f_noncont} by bounding the payoff by some constant $M$ (which makes sense since in reality, as already remarked, electricity prices are capped).
\end{itemize}
The same observations hold for the utility indifference pricing of the quite popular spread options, which present a payoff which is either bounded or linearly growing in the electricity price.

\begin{remark} Substantially equivalent considerations hold for the electricity spot price model proposed in \cite{Carmona_sch.12} (equation (6)), which still uses a multiplicative form separating the contributions of traded and non traded assets (in a more involved way than in \cite{Aid.10}, with the drawback of becoming rather messy when more than two assets are considered): bounding the payoff of the forward contract makes it satisfy Assumption \ref{Ass:f_noncont} (remark that it is usually discontinuous in the non traded assets). More generally, as reported in \cite{Carmona.12} (Chapter 5), most of the structural models found in the literature assume lognormal fuel prices, OU-driven demand and an electricity price which is multiplicative in the marginal fuel, which justifies our standing assumptions. Markov switching models like the one described in \cite{Carmona.12}, equation (10), can also be treated in our framework as the structure of the payoff is standard, and additional indicator functions can be added to describe the different regimes (which create discontinuities in the non traded assets).\end{remark}
When the payoff $f$ is linear or concave in the traded assets (as in the case of the forward contract in \cite{Aid.10}) we have the following result.
\begin{lemma}
If $f(s,x)$ is concave in $s$, the same holds for its UIP $\varphi (t,s,x)$.
\end{lemma}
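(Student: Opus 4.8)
The plan is to read the concavity directly off the stochastic control representation \eqref{stoch_rep} of the truncated price $\varphi^m$, using that a pointwise infimum of concave functions, and a pointwise limit of concave functions, are again concave.

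First I would observe that under the auxiliary measure $Q$ of Lemma \ref{solution_PDE} the traded assets are untouched by the control and do not interact with the nontraded ones: by \eqref{mod_dynamics} the process $S$ is a driftless geometric Brownian motion, so $S^i_T=s^i\,\mathcal E^i$ with $\mathcal E^i:=\exp(\sigma_{i\bcdot}(W^{S,Q}_T-W^{S,Q}_t)-\tfrac12\|\sigma_{i\bcdot}\|^2(T-t))>0$, a random variable depending neither on the starting point $(s,x)$ nor on the chosen control $\delta$, whereas the terminal value $X_T$ depends on $(x,\delta)$ but not on $s$. Thus $S_T=\mathrm{diag}(\mathcal E)\,s$ is a linear (and $\mathbb R^n_+$‑valued) function of $s$. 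Fixing $\delta\in\mathcal A^m_t$ and writing $J^\delta(t,s,x):=E^Q_{t,(s,x)}[\tfrac1{2\gamma}\int_t^T\|\delta_r\|^2dr+f(S_T,X_T)]$, the running cost is $s$‑independent while, for a.e.\ fixed $\omega$, the map $s\mapsto f(\mathrm{diag}(\mathcal E(\omega))\,s,X_T(\omega))$ is concave on $\mathbb R^n_+$ as the composition of the concave function $f(\,\cdot\,,X_T(\omega))$ with a linear map. Since $Q$‑expectation — finite here thanks to the growth hypotheses of Assumption \ref{Ass:f_cont} or \ref{Ass:f_noncont} — preserves concavity, $s\mapsto J^\delta(t,s,x)$ is concave, and hence so is $\varphi^m(t,\,\cdot\,,x)=\inf_{\delta\in\mathcal A^m_t}J^\delta(t,\,\cdot\,,x)$ on $\mathbb R^n_+$, for every $t\in[0,T)$ and $x\in\mathbb R^d$.

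To remove the truncation and the non-smoothness I would then pass to the limit. Mollification preserves concavity in $s$: if $f$ is concave in $s$ then so is $f^l(s,x)=\int f(s,x+y)\psi^l(y)\,dy$, since $s\mapsto f(s,x+y)$ is concave for each $y$ and $\psi^l\ge0$ (the same holds for a mollification in all variables). Hence the approximating payoffs $f^l$ of the proofs of Theorem \ref{main}\,(i) remain concave in $s$ and still satisfy the relevant growth assumptions, so each $\varphi^{m,l}(t,\,\cdot\,,x)$ is concave by the argument above; letting first $m\to\infty$ and then $l\to\infty$ along the approximations of those proofs yields $\varphi^{m,l}\to\varphi^{l}\to\varphi$ pointwise on $[0,T)\times\mathbb R^n_+\times\mathbb R^d$, and a pointwise limit of concave functions is concave. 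At $t=T$ the conclusion is just the hypothesis on $f$.

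I expect the only real difficulty to be bookkeeping rather than mathematics: one has to check that the representation \eqref{stoch_rep} and the double approximation are legitimate for the (possibly discontinuous) payoffs considered, but this is exactly what Lemma \ref{solution_PDE} and the proofs of Theorem \ref{main}\,(i) already provide. The genuinely new input is the decoupling $S_T=\mathrm{diag}(\mathcal E)\,s$, after which concavity in $s$ survives linear pullback, expectation, infimum over controls and pointwise limits.
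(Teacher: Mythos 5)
Your proof is correct and follows essentially the same route as the paper: both arguments read concavity off the stochastic control representation of Lemma \ref{solution_PDE}, using that $S_T$ depends linearly on $s$ while $X_T$ and the running cost do not depend on $s$, that an infimum of concave functions is concave, and that concavity passes to the pointwise limit of the approximations. You are in fact slightly more explicit than the paper on two points it leaves implicit — the decoupling $S_T=\mathrm{diag}(\mathcal E)\,s$ and the fact that mollification preserves concavity of $f^l$ in $s$ — but the underlying argument is the same.
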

\begin{proof}
By Lemma \ref{solution_PDE} and using an approximating sequence $f^l$, the price is represented as
$$\varphi^l(t,s,x)= E^Q_{t,a}\left[\frac{1}{2\gamma}\int_t^T \|\hat\delta_r\|^2 dr+f^l(S_T,X_T)\right]$$
and therefore, setting $\tilde a = (\tilde s , x)$, we have
\begin{equation*}
\begin{split}
  \varphi^l&(t,\lambda s+(1-\lambda)\tilde s,x)\\
  &\geq\lambda E^Q_{t,a}\left[\frac{1}{2\gamma}\int_t^T \|\hat \delta_r\|^2 dr+f^l(S_T,X_T)\right]+(1-\lambda) E^Q_{t,\tilde a}\left[\frac{1}{2\gamma}\int_t^T \|\hat \delta_r\|^2 dr+f^l(S_T,X_T)\right]\\
  &\geq \lambda \inf_{\delta}E^Q_{t,a}\left[\frac{1}{2\gamma}\int_t^T \|\delta_r\|^2 dr+f^l(S_T,X_T)\right]\\
  &\mbox{    }+(1-\lambda) \inf_{\delta} E^Q_{t,\tilde a}\left[\frac{1}{2\gamma}\int_t^T \| \delta_r\|^2 dr+f^l(S_T,X_T)\right]\\
  &=\lambda \varphi^l(t,s,x)+(1-\lambda)\varphi^l(t, \tilde s,x),\quad \lambda \in [0,1].
\end{split}
\end{equation*}
Now it is enough to take limits to get the result.
\qed\end{proof}
%


\begin{example}[Forward contract for $n=2$ fuels]
We derive here a more explicit expression for the first term $\zeta(0,a)$ of the asymptotic expansion \eqref{asymptotic} of the price at time zero for a forward contract with two fuels as described in \cite{Aid.10}, with payoff \footnote{Such a payoff, as already noticed, does not satisfy the assumption in Lemma \ref{asymp}. Nonetheless, it clearly belongs to $L^2(Q)$ for all measures $Q \in \mathcal M_E$ and the results in Monoyios \cite{monoyios.12} can still be applied getting the same asymptotic expansion as in \eqref{asymptotic}.}
\begin{equation*}
f(a)=f(s,c,y)=g\left(c^1+c^2-y\right) (s^1 +s^2 \mathbf 1_{\{y-c^1> 0\}}).
\end{equation*}
The assets dynamics are given in \eqref{dynamics2}, where we also assume the seasonality components to be zero for clearness (they would only appear as a mean component in the expressions for the derivatives of $\psi$ below). The no-arbitrage price under the MMM $Q^0$ is
$$p^0(t,a)=E^0_{t,a}[f(A_T)]=\psi^1(t,x)s^1+\psi^2(t,x)s^2$$
where $a=(s,x)$, $s=(s^1,s^2)$, $x=(c^1,c^2,y)$, and
$$\psi^i(t, x)=\int_{\mathbb R^2}\Psi_{C^1_T-D_T}(t,z)\Psi_{C^2_T}(t,c)g(c+z) \chi^i (z) dcdz$$
for $i=1,2$, where we set $$\chi^i (z) := \mathbf 1_{\{z<0\}}+\mathbf 1_{\{z\geq 0,i=1\}}$$ and $\Psi_{C^1_T-D_T}(t,\cdot)$ stands for the conditional density of $C^1_T-D_T$ given $C^1_t=c^1, D_t=y$ (and similarly for $\Psi_{C^2_T}(t,\cdot)$). Notice that an explicit expression for the  price $p^0 (t,a)$ has been obtained in \cite{Aid.10} together with an efficient numerical method to compute it. \\ Based on the previous expression, we can obtain an explicit formula for the derivatives of $p^0(t,a)$ as an intermediate step towards the optimal hedging strategy. We have
$$p^0_x(t,a)=\left(
\begin{array}{c}
\psi^1_{C^1}(t,x)s^1+\psi^2_{C^1}(t,x)s^2\medskip\\
\psi^1_{C^2}(t,x)s^1+\psi^2_{C^2}(t,x)s^2\medskip\\
\psi^1_{D}(t,x)s^1+\psi^2_{D}(t,x)s^2
\end{array}
\right)$$
where
\begin{align*} \psi^i_{C^1}&(t, x)=\frac{e^{-\alpha_{C^1}(T-t)}}{\textrm{Var}_t(C^1_T-D_T)} \int_{\mathbb R^2}(z- c^1e^{-\alpha_{C^1} (T-t)}+ ye^{-\alpha_D (T-t)})\Psi_{C^1_T-D_T}(t,z)\Psi_{C^2_T}(t,c)g(c+z)\chi^i (z) dcdz\\
&\\
\psi^i_{C^2}&(t, x)=\frac{e^{-\alpha_{C^2}(T-t)}}{\textrm{Var}_t(C^2_T)}\int_{\mathbb R^2}(c- c^2e^{-\alpha_{C^2} (T-t)})\Psi_{C^1_T-D_T}(t,z)\Psi_{C^2_T}(t,c)g(c+z) \chi^i (z) dcdz\\
&\\
\psi^i_{D}&(t, x)= -\frac{e^{-\alpha_D (T-t)}}{\textrm{Var}_t(C^1_T-D_T)}\int_{\mathbb R^2}(z- c^1e^{-\alpha_{C^1} (T-t)}+ ye^{-\alpha_D (T-t)})\Psi_{C^1_T-D_T}(t,z)\Psi_{C^2_T}(t,c)g(c+z) \chi^i (z) dcdz
\end{align*}
for $i=1,2$ with $\textrm{Var}_t$ denoting the conditional variance at time $t$, which in our case can be explicitly computed since $C^1 -D$ and $C^2$ are generalized Ornstein-Uhlenbeck processes with time-dependent deterministic coefficients. 
By defining
$$\phi^i(j,x)=\int_0^T e^{\sigma_{i}^2(T-t)} E_{0,x}[\beta^2_{j}\psi^i_{j}(t, X_t)^2] dt , \quad \phi^{12}(j,x)=\int_0^T E_{0,x}[\beta^2_{j}\psi^1_{j}(t, X_t)\psi^2_{j}(t, X_t)] dt,$$
for $i=1,2$ and $j \in \{C^1 , C^2 , D\}$, we finally obtain
\begin{equation*}
\begin{split}
\zeta(0,a)=&\left(\displaystyle\sum_{j\in\{C^1,C^2, D\}}\phi^1(j,x)\right)(s^1)^2+\left(\displaystyle\sum_{j\in\{C^1,C^2,D\}}\phi^2(j,x)\right)(s^2)^2\\
&+\left(\displaystyle\sum_{j\in\{C^1,C^2, D\}}\phi^{12}(j,x)\right)s^1s^2.
\end{split}
\end{equation*}

\begin{remark}
By direct computation as above, one can also obtain similar expressions for spread call options. Pricing spread call options is particularly important in energy markets since such derivatives constitute the building blocks for evaluating the central plants in the real option approach as in, e.g., \cite{Carmona_sch.12} (see the next section for a comparison between UIP and the non-arbitrage MMM price of spread call options).
\end{remark}

\end{example}

\section{Conclusions}
In this paper we considered the utility indifference pricing problem in a particular market model that includes tradable and nontradable assets, and where the derivatives' payoffs possibly depend on both classes. Using BSDE techniques, we established some existence and regularity results for the price, showing in particular how they can be applied to the pricing and hedging of power derivatives under a structural modeling framework. Although we did not aim for the greatest generality we believe that, under suitable assumptions, most of the results could be extended to a broader set of asset dynamics. Nevertheless, we remark that our framework already allows to consider derivatives written on underlyings that possibly exhibit spikes and discontinuities (as it is the case for electricity prices).

\appendix
\section{Auxiliary results and their proofs}

\begin{lemma}   \label{L2}
Let $f \in L^1 (Q_0)$ be bounded from below and let $(Y,Z)$, with $Z=(Z^S,Z^X)$, be a solution to the BSDE \eqref{BSDE_MMM}. Assume that for some $q>0$ there exists a constant $C>0$ such that $\|Z^S_t\|\leq C\|S_t\|^q$ for all $t\in[0,T]$. Then the solution of \eqref{BSDE_MMM} satisfies, for all $p>1$
\begin{equation*}
\begin{split}
  E^0\left[\left(\int_0^{t}\|Z^X_u\|^2du\right)^p\right]\leq CE^0\left[\left(\int_0^t\|\xi_u\|^2 du\right)^{p/2}+1\right]\\
\end{split}
\end{equation*}
where $\xi$ comes from the martingale representation of $f$ under the MMM $Q^0$.
\end{lemma}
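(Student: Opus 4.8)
The plan is to derive the estimate from the identity obtained by reading \eqref{BSDE_MMM} forward,
\begin{equation}\label{L2-id}
\frac{\gamma}{2}\int_0^{t}\|Z^X_u\|^2\,du=Y_{t}-Y_0-\int_0^{t} Z_u\,dW^0_u ,
\end{equation}
which reduces the claim to $L^p(Q^0)$-bounds on the three terms on the right. The first (and essential) step I would carry out is the two-sided estimate $-c_0\le Y_s\le E^0_s[f]$ for all $s$, where $c_0\ge 0$ is such that $f\ge -c_0$. Recalling that $Y$ arises as the increasing limit of the bounded solutions $Y^n$ with terminal values $f_n=(-n)\vee f\wedge n$ (cf.\ the proof of Lemma \ref{existence_BSDE}), for $n\ge c_0$ the constant process $(-c_0,0)$ solves \eqref{BSDE_MMM} with terminal datum $-c_0\le f_n$, so Kobylanski's comparison theorem gives $Y^n\ge -c_0$; and since the generator of \eqref{BSDE_MMM} is nonpositive while the bounded solutions have $Z^n\in\mathbb H^2$, one has $Y^n_s=E^0_s[f_n]-\tfrac{\gamma}{2}E^0_s\bigl[\int_s^T\|Z^{X,n}_u\|^2\,du\bigr]\le E^0_s[f_n]\le E^0_s[f]$, and passing to the limit gives the bounds. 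In particular $|Y_0|$ is a constant, and since $E^0_s[f]=E^0[f]+\int_0^s\xi_u\,dW^0_u$, for every stopping time $\tau\le t$ one has $|Y_\tau|\le c_0+E^0\bigl[|E^0_t[f]|\,\big|\,\mathcal F_\tau\bigr]$, so by conditional Jensen, Minkowski and the Burkholder--Davis--Gundy inequality
\begin{equation}\label{L2-Yest}
E^0\bigl[|Y_\tau|^p\bigr]\le C\Bigl(1+E^0\bigl[(\textstyle\int_0^t\|\xi_u\|^2\,du)^{p/2}\bigr]\Bigr),\qquad \tau\le t,
\end{equation}
with $C$ depending on the claim only through $c_0$ and $|E^0[f]|$.

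Next I would localise to handle the stochastic integral in \eqref{L2-id}, which still contains $\int_0^t\|Z^X\|^2$ through its $W^X$-component. Setting $\sigma_k=\inf\{s\ge 0:\int_0^s\|Z^X_u\|^2\,du\ge k\}\wedge T$ and $I_k:=\int_0^{t\wedge\sigma_k}\|Z^X_u\|^2\,du$, which is $\le k$ so that $E^0[I_k^p]<\infty$ for each $k$, I would evaluate \eqref{L2-id} at $t\wedge\sigma_k$, take $L^p(Q^0)$-norms, and use \eqref{L2-Yest}, the bound on $|Y_0|$ and Burkholder--Davis--Gundy for $\int_0^{t\wedge\sigma_k}Z_u\,dW^0_u$ to obtain
$$\|I_k\|_{L^p}\le C\Bigl(1+\bigl\|(\textstyle\int_0^t\|\xi_u\|^2\,du)^{1/2}\bigr\|_{L^p}+\bigl\|(\textstyle\int_0^{t\wedge\sigma_k}\|Z_u\|^2\,du)^{1/2}\bigr\|_{L^p}\Bigr).$$
Writing $\int_0^{t\wedge\sigma_k}\|Z_u\|^2\,du=\int_0^{t\wedge\sigma_k}\|Z^S_u\|^2\,du+I_k$ and using the standing hypothesis $\|Z^S_u\|\le C\|S_u\|^q$ together with the fact that geometric Brownian motion has moments of all orders, the first summand contributes $\bigl\|(\int_0^t\|Z^S_u\|^2\,du)^{1/2}\bigr\|_{L^p}\le C$ uniformly in $k$; with Jensen's inequality $E^0[I_k^{p/2}]\le(E^0[I_k^p])^{1/2}$ this turns the display into $\|I_k\|_{L^p}\le C(1+\bigl\|(\int_0^t\|\xi_u\|^2\,du)^{1/2}\bigr\|_{L^p})+C\|I_k\|_{L^p}^{1/2}$. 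Since $\|I_k\|_{L^p}<\infty$, an inequality of the shape $a\le b+C\sqrt a$ forces $a\le C'(1+b)$, which yields the asserted bound for $I_k$ with a constant independent of $k$; letting $k\to\infty$ (monotone convergence, $I_k\uparrow\int_0^t\|Z^X_u\|^2\,du$) then gives the lemma.

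The step I expect to be the real obstacle is the self-referential nature of \eqref{L2-id}: the quantity to be estimated comes back inside $\int_0^t Z\,dW^0$, and the estimate only closes because Burkholder--Davis--Gundy feeds it back with the \emph{half}-power $(\int\|Z\|^2)^{p/2}$, which can be absorbed into the left-hand side — but legitimately so only after the a priori finiteness secured by the truncation at $\sigma_k$. The two supporting points that also need care are the two-sided control of $Y$ in the (possibly) unbounded regime, which I would get from the bounded approximations and comparison, and the harmlessness of the $\int\|Z^S\|^2$-term, which relies on the growth bound $\|Z^S_u\|\le C\|S_u\|^q$ and the integrability of geometric Brownian motion.
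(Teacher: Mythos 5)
Your proof is correct, and it reaches the estimate by a route that is recognizably parallel to, but cleaner than, the one in the paper. The paper's proof applies It\^o's formula to $u(Y)$ with $u(x)=\gamma^{-2}(e^{-\gamma x}-1+\gamma x)$, exploiting $\gamma u'+u''=1$ so that the quadratic generator and the quadratic variation term combine to leave $\tfrac12\int\|Z\|^2$ isolated; it then localizes, applies Burkholder--Davis--Gundy to $\int u'(Y)Z\,dW^0$, and absorbs the resulting half-power $\big(\int\|Z\|^2\big)^{p/2}$ by Young's inequality. You bypass the convex-function transform entirely by observing that for this particular BSDE the generator \emph{is} $-\tfrac{\gamma}{2}\|Z^X\|^2$, so integrating \eqref{BSDE_MMM} forward already isolates the quantity to be bounded as a drift; the self-referential term then reappears only through BDG applied to $\int Z\,dW^0$, and you close it with the quadratic inequality $a\le b+C\sqrt a$ after truncation --- which is the same absorption mechanism as the paper's Young step, just packaged differently. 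Both arguments rest on the same three supports: the localization that secures a priori finiteness before absorbing the half-power, the two-sided control $-c_0\le Y\le E^0_\cdot[f]$ (which you derive explicitly from the bounded approximations and comparison, whereas the paper assumes $f\ge0$ and uses the upper bound without comment --- your version is the more careful one), and the disposal of the $Z^S$-contribution via the growth hypothesis and the moments of geometric Brownian motion (which enters the paper's proof through the term $a(s)=\tfrac{\gamma}{2}\|Z^S_s\|^2$ in the rewritten generator, and yours through the BDG bound). What the paper's $u$-transform buys is robustness: it would still work if the generator were merely \emph{dominated} by $\tfrac{\gamma}{2}\|z\|^2$ rather than equal to $-\tfrac{\gamma}{2}\|z^X\|^2$, since it extracts $\int\|Z\|^2$ from the quadratic variation rather than from the drift. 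What your argument buys is economy in the specific case at hand. Both share the contextual caveat that the bound on $Y$ is established for the solution constructed as the monotone limit in Lemma \ref{existence_BSDE}, not for an arbitrary solution of \eqref{BSDE_MMM}; since that is the solution to which the lemma is applied, this is not a defect relative to the paper.
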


\begin{proof} Consider the BSDE \eqref{BSDE_g}
$$Y_t=f+\int_t^T g(Z_r)dr-\int_t^T Z_r dW^0_r$$
write the generator as $g(z)=-\frac{\gamma}{2}\|(0,z^X)\|^2=-\frac{\gamma}{2}\|z\|^2+\frac{\gamma}{2}\|(z^S,0)\|^2$. Notice that $g(Z_r)$ can also be expressed as
\[ g(Z_r) = -\frac{\gamma}{2}\| Z_r \| ^2 + a(t),\]
with $a(t) = \frac{\gamma}{2} \| (Z_t ^S , 0_d) \| ^2 $, which satisfies $|a(t)| \leq C' \| S_t \| ^{2q}$ for some constant $C'>0$.
\\We now assume that $f$ is positive, the case where it is only bounded from below being analogous. Consider the function
$$u(x)=\frac{1}{\gamma^2}(e^{-\gamma x}-1+\gamma x),\quad x\geq 0,$$
from $\mathbb R_+$ to itself. Remark that $u(x)\geq 0$ and $u'(x)\geq 0$ for $x\geq 0$. Moreover, $\gamma u'(x)+u''(x)=1$ and $u(x)\leq \frac{x}{\gamma}$, $u'(x)\leq \frac{1}{\gamma}$, $u''(x)\leq 1$ for $x\geq 0$.
Defining
$$\tau_\kappa=\inf\{t\geq 0:\int_0^t \|Z_u\|^2 du\geq n\} ,\quad \inf \emptyset = +\infty,$$
and applying It\^o's lemma we get
\begin{equation*}
\begin{split}
  u(Y_0)&=u(Y_{t\wedge\tau_\kappa})+\int_0^{t\wedge\tau_\kappa}\left(u'(Y_s)g(Z_s)-\frac{1}{2}u''(Y_s)\|Z_s\|^2\right)ds-\int_0^{t\wedge\tau_\kappa} u'(Y_s)Z_s dW^0_s\\
  &\leq u(Y_{t\wedge\tau_\kappa})+\int_0^{t\wedge\tau_\kappa}u'(Y_s)a(s)-\int_0^{t\wedge\tau_\kappa}\frac{1}{2}\left(\gamma u'(Y_s)+ u''(Y_s)\right)\|Z_s\|^2ds\\
  &\mbox{    }-\int_0^{t\wedge\tau_\kappa} u'(Y_s)Z_s dW^0_s\\
  &=u(Y_{t\wedge\tau_\kappa})+\int_0^{t\wedge\tau_\kappa}u'(Y_s)a(s)-\int_0^{t\wedge\tau_\kappa}\frac{1}{2}\|Z_s\|^2ds-\int_0^{t\wedge\tau_\kappa} u'(Y_s)Z_s dW^0_s\\
\end{split}
\end{equation*}
therefore
\begin{equation*}
    \begin{split}
    \frac{1}{2}\int_0^{t\wedge\tau_\kappa}\|Z_s\|^2ds&\leq u(Y_{t\wedge\tau_\kappa})+\int_0^{t\wedge\tau_\kappa}u'(Y_s)a(s)ds-\int_0^{t\wedge\tau_\kappa} u'(Y_s)Z_s dW^0_s\\
    &\leq Y_{t\wedge\tau_\kappa}+\int_0^{t\wedge\tau_\kappa}u'(Y_s)a(s)ds+\sup_{0\leq t\leq T}\left|\int_0^{t\wedge\tau_\kappa} u'(Y_s)Z_s dW^0_s\right|
    \end{split}
\end{equation*}
and using the Burholder-Davis-Gundy inequalities we obtain
\begin{eqnarray*}
  E^0\left[\left(\int_0^{t\wedge\tau_\kappa}\|Z_s\|^2ds\right)^p\right] &\leq& CE^0\left[Y_{t\wedge\tau_\kappa}^p+\left(\int_0^{t\wedge\tau_\kappa}u'(Y_s)a(s)ds\right)^p\right]\\
  &&+CE^0\left[\left(\int_0^{t\wedge\tau_\kappa} u'(Y_s)^2\|Z_s\|^2 ds\right)^{p/2}\right]\\
  &\leq& CE^0\left[Y_{t\wedge\tau_\kappa}^p+\left(\int_0^{t\wedge\tau_\kappa}u'(Y_s)a(s)ds\right)^p+1\right]\\
  &&+\frac{1}{2}E^0\left[\left(\int_0^{t\wedge\tau_\kappa} \|Z_s\|^2 ds\right)^{p}\right]
\end{eqnarray*}
where we used Young's inequality in the last line. Therefore 
\begin{equation*}
\begin{split}
  E^0\left[\left(\int_0^{t\wedge\tau_\kappa}\|Z_s\|^2ds\right)^p\right]&\leq CE^0\left[\sup_{r\in[0,t]}(E^0_r[f])^p+\left(\int_0^{t}\|S_r\|^2dr\right)^p+1\right]\\
  &\leq CE^0\left[\left(\sup_{r\in[0,t]}\int_0^r\xi_s dW_s\right)^p+1\right]\\
  &\leq CE^0\left[\left(\int_0^t\|\xi_s\|^2 ds\right)^{p/2}+1\right]\\
\end{split}
\end{equation*}
where $\xi$ comes from the martingale representation of $f$ under $Q^0$. The result follows by Fatou's lemma.
\qed\end{proof}

\begin{lemma}   \label{brownian}
Let $W$ be a $\mathbb R^{n+d}$-valued Brownian Motion, $T>0$, $p>1$ and $0<\alpha<p/2$. Define $U_t=\int_0^t u (r) dW_r$, where $u$ is a $\mathbb R^{n+d}$-valued deterministic bounded process.
Then
$$E\left[\sup_{0\leq t\leq T}\frac{|U_t|^p}{t^{\alpha}}\right]<\infty .$$
\end{lemma}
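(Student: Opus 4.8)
The plan is to split the time interval $(0,T]$ into dyadic blocks and apply the Burkholder--Davis--Gundy (BDG) inequality on each block, using that the quadratic variation of $U$ grows at most linearly in $t$. Concretely, for $k\ge 0$ I would set $I_k=[T2^{-(k+1)},T2^{-k}]$, so that $(0,T]=\bigcup_{k\ge 0}I_k$ and hence, with the convention that the ratio equals $0$ at $t=0$ (legitimate since $U_0=0$),
$$\sup_{0\le t\le T}\frac{|U_t|^p}{t^{\alpha}}=\sup_{k\ge0}\ \sup_{t\in I_k}\frac{|U_t|^p}{t^{\alpha}}.$$
On $I_k$ one has $t^{-\alpha}\le (2^{k+1}/T)^{\alpha}$, so that $\sup_{t\in I_k}|U_t|^p/t^{\alpha}\le (2^{k+1}/T)^{\alpha}\sup_{0\le t\le T2^{-k}}|U_t|^p$; the point of the splitting is that each block is controlled by the supremum over a window whose \emph{length} shrinks geometrically.

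Next I would pass to expectations using subadditivity of the supremum over $k$:
$$E\left[\sup_{0\le t\le T}\frac{|U_t|^p}{t^{\alpha}}\right]\le\sum_{k\ge0}\left(\frac{2^{k+1}}{T}\right)^{\alpha}E\left[\sup_{0\le t\le T2^{-k}}|U_t|^p\right].$$
Then I apply the BDG inequality: since $\langle U\rangle_t=\int_0^t\|u(r)\|^2\,dr\le\|u\|_\infty^2\,t$, we get $E\big[\sup_{0\le t\le\tau}|U_t|^p\big]\le C_p\,E[\langle U\rangle_\tau^{p/2}]\le C_p\,\|u\|_\infty^p\,\tau^{p/2}$ for every $\tau>0$. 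Taking $\tau=T2^{-k}$ and summing yields
$$E\left[\sup_{0\le t\le T}\frac{|U_t|^p}{t^{\alpha}}\right]\le C_p\,\|u\|_\infty^p\,2^{\alpha}\,T^{p/2-\alpha}\sum_{k\ge0}2^{-k(p/2-\alpha)}<\infty,$$
the geometric series converging precisely because $\alpha<p/2$. That finishes the proof.

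The argument is essentially routine; the only delicate point — and the one the dyadic decomposition is designed to absorb — is the singularity of the weight $t^{-\alpha}$ at the origin. The factor $\tau^{p/2}$ gained from BDG on a window of length $\tau$ beats the loss $\tau^{-\alpha}$ exactly when $p/2>\alpha$, which is the stated hypothesis. I would also note in passing that the assumption that $u$ be deterministic is not really used: boundedness of $u$ alone gives $\langle U\rangle_t\le\|u\|_\infty^2 t$ and hence the whole estimate.
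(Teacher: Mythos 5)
Your proof is correct, and it takes a genuinely different route from the paper. The paper invokes the Dambis--Dubins--Schwarz theorem to write $U_t=\widetilde W_{\tau_t}$ with the deterministic time change $\tau_t=\int_0^t\|u(r)\|^2dr$, and then disposes of the supremum in one line via Brownian scaling, pulling $E[|\widetilde W_1|^p]$ out and bounding $\sup_t \tau_t^{p/2}/t^\alpha$ by $C\sup_t t^{p/2-\alpha}$. You instead decompose $(0,T]$ into dyadic blocks, apply Burkholder--Davis--Gundy on each window of length $T2^{-k}$, and sum a geometric series that converges exactly when $\alpha<p/2$. Your version is longer but arguably more robust: the paper's displayed equality $E[\sup_t|\widetilde W_{\tau_t}|^p/t^\alpha]=E[|\widetilde W_1|^p]\sup_t\tau_t^{p/2}/t^\alpha$ interchanges a supremum with an expectation, which scaling alone does not justify (scaling gives $\sup_t E[\cdot]$, not $E[\sup_t\cdot]$, and the two differ in the unfavourable direction); a careful completion of that route would need a maximal inequality anyway, which is essentially what your BDG step supplies. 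Your closing observation is also accurate: your argument uses only the bound $\langle U\rangle_t\le\|u\|_\infty^2 t$, so determinism of $u$ plays no role, whereas the paper's scaling argument does lean on $\tau_t$ being deterministic. The only cosmetic point is the behaviour at $t=0$, where the ratio is $0/0$; your convention that it equals $0$ there (consistent with $U_0=0$ and $\alpha<p/2$) is the natural reading of the statement and matches what the paper implicitly does.
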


\begin{proof} By Dumbis-Dubins-Schwarz representation of the martingale $U_t$, there exists a Brownian motion $\widetilde W$ such that $U_t = \widetilde W_{\tau_t}$ where $\tau_t = \langle U\rangle _t =  \int_0 ^t \| u(r) \|^2 dr$ is a deterministic bounded time change. Thus, using the scaling property of Brownian motion we have
\begin{equation*}
\begin{split}
E\left[\sup_{0\leq t\leq T}\frac{|U_t|^p}{t^{\alpha}}\right] &= E\left[\sup_{0\leq t\leq T}\frac{|\widetilde W_{\tau_t}|^p}{t^{\alpha}}\right] =E\left[|\widetilde W_{1}|^p\right] \sup_{0\leq t\leq T}\frac{\tau_t^{p/2}}{t^{\alpha}}\\
&\leq C E\left[|\widetilde W_{1}|^p\right] \sup_{0\leq t\leq T}t^{p/2-\alpha}<\infty,
\end{split}
\end{equation*}
for some constant $C>0$. This ends the proof.
\qed\end{proof}

\begin{lemma}	\label{unif_int}
Let $W$ be a $\mathbb R^{n+d}$-valued Brownian motion, $U$ be defined as in Lemma \ref{brownian} and let $K$ be a process in $\mathbb H^{q'} (\mathbb R)$ for some $q'\geq 1$. Suppose, moreover, that $|K_t|\leq F(t,W_t)$ for all $t\in [0,T)$ for some continuous function $F:[0,T)\times\mathbb R^{n+d}\to\mathbb R$. Then there exists $p'>1$ such that
  $$E_t\left[\left(\int_t^T\frac{U_r-U_t}{(r-t)} K_rdr\right)^{p'}\right]<\infty .$$
\end{lemma}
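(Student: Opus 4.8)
The plan is to split the integrand into a singular but Gaussian factor and a factor involving $K$, and then to take $p'>1$ close enough to $1$ that the exponent left on $K$ is covered by the hypotheses. It suffices to bound $J:=E_t\big[\big(\int_t^T \frac{|U_r-U_t|}{r-t}\,|K_r|\,dr\big)^{p'}\big]$. First I would fix once and for all $\theta\in(0,1/2)$ and set $G:=\sup_{t\le r\le T}(r-t)^{-\theta}|U_r-U_t|$. The process $r\mapsto U_r-U_t=\int_t^r u(s)\,dW_s$ depends only on the increments of $W$ after time $t$, hence is independent of $\mathcal F_t$, and Lemma \ref{brownian} applied to the shifted martingale $s\mapsto U_{t+s}-U_t$ with exponent $\alpha=p\theta<p/2$ (legitimate for every $p\ge 1$) gives $E[G^p]<\infty$ for all $p\ge 1$, so $E_t[G^p]$ is a.s.\ finite for all $p$. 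Since $\frac{|U_r-U_t|}{r-t}\le G\,(r-t)^{\theta-1}$ pointwise, a conditional Hölder inequality with conjugate exponents $a,a'>1$ yields
$$J\le \big(E_t[G^{p'a}]\big)^{1/a}\,\Big(E_t\Big[\Big(\int_t^T (r-t)^{\theta-1}|K_r|\,dr\Big)^{p'a'}\Big]\Big)^{1/a'},$$
whose first factor is a.s.\ finite by the above; and by the conditional Minkowski integral inequality, with $c:=p'a'$, the second factor is dominated by $\big(\int_t^T (r-t)^{\theta-1}\|K_r\|_{L^{c}(\mathcal F_t)}\,dr\big)^{p'}$. Note that $c$ can be made arbitrarily close to $1$ by taking $p'\downarrow 1$ and $a\uparrow\infty$.

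It then remains to show $\int_t^T (r-t)^{\theta-1}\|K_r\|_{L^{c}(\mathcal F_t)}\,dr<\infty$ a.s., and I would split the interval at $T':=(t+T)/2$. On $[t,T']$ the weight $(r-t)^{\theta-1}$ is integrable because $\theta>0$, while $\|K_r\|_{L^{c}(\mathcal F_t)}\le\|F(r,W_r)\|_{L^{c}(\mathcal F_t)}$, and writing $W_r=W_t+(W_r-W_t)$ with the increment centred Gaussian and independent of $\mathcal F_t$, the map $r\mapsto\|F(r,W_r)\|_{L^{c}(\mathcal F_t)}$ is finite and continuous on $[t,T)$ (dominated convergence, using continuity of $F$), hence bounded on $[t,T']$; so this piece is finite. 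On $[T',T]$ the weight is bounded by $(T'-t)^{\theta-1}$, so it suffices that $\int_0^T\|K_r\|_{L^{c}(\mathcal F_t)}\,dr<\infty$. Choosing $p'$ and $a$ so that $c\le q'$ — possible since $q'\ge 1$ and $c$ can be pushed near $1$ — we have $\|K_r\|_{L^{c}(\mathcal F_t)}\le\|K_r\|_{L^{q'}(\mathcal F_t)}$ (conditional $L^p$ norms are nondecreasing in $p$), and
$$\int_0^T\|K_r\|_{L^{q'}(\mathcal F_t)}\,dr\le T^{1-1/q'}\,\Big(E_t\Big[\int_0^T|K_r|^{q'}\,dr\Big]\Big)^{1/q'}<\infty\quad\text{a.s.,}$$
the last bound being finite because $K\in\mathbb H^{q'}(\mathbb R)$; this gives $J<\infty$ a.s. (When $q'=1$ one cannot attain $c\le q'$; near $T$ one then bounds $\|K_r\|_{L^{c}(\mathcal F_t)}$ by interpolating between $\|K_r\|_{L^{1}(\mathcal F_t)}$, integrable in $r$ by $K\in\mathbb H^1$, and $\|F(r,W_r)\|_{L^{m}(\mathcal F_t)}$ for $m$ large, which is immediate in every application of the lemma, where $F$ is a polynomial in the space variable times $(T-\cdot)^{-1/2}$.)

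The main obstacle is precisely this exponent bookkeeping: $p'>1$ and $a$ must be chosen jointly so that the power $c=p'a'$ falling on $K$ lies in $(1,q']$ (or is handled by interpolation when $q'=1$), while simultaneously the residual singularity $(r-t)^{-1}$ at $r=t$ is fully absorbed by the Gaussian estimate for $G$ — which forces $\theta\in(0,1/2)$, exactly the range available in Lemma \ref{brownian}. Everything else (the Gaussian moments of $G$, the continuity of $r\mapsto\|F(r,W_r)\|_{L^c}$, the conditional Minkowski and Hölder inequalities) is routine.
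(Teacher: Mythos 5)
Your opening reduction --- peeling off $G=\sup_{t\le r\le T}(r-t)^{-\theta}|U_r-U_t|$ with $\theta\in(0,1/2)$, controlling it via Lemma \ref{brownian}, and shifting the whole burden onto $\int_t^T(r-t)^{\theta-1}|K_r|\,dr$ --- is sound and coincides with the first half of the paper's proof. The gap is in how you then handle this integral near $r=t$. You bound $\|K_r\|_{L^{c}(\mathcal F_t)}$ by $\|F(r,W_r)\|_{L^{c}(\mathcal F_t)}$ and assert that the latter is finite and continuous in $r$ ``by dominated convergence, using continuity of $F$''. But $F$ is only assumed continuous: $E_t\left[|F(r,W_r)|^{c}\right]$ is a Gaussian integral of $|F(r,W_t+\cdot)|^{c}$ over all of $\mathbb R^{n+d}$, and a merely continuous $F$ need not be Gaussian-integrable (take $F(t,w)=\exp(e^{\|w\|^2})$). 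This is not a cosmetic issue: in the one place the lemma is invoked, the dominating function coming from \eqref{probchange} contains the factor $e^{\gamma C(1+\|s\|^q)}$ with $s$ a geometric Brownian motion, so $F$ grows like $\exp(Ce^{c\|w\|})$ in the Brownian variable and the conditional moments $\|F(r,W_r)\|_{L^{c}(\mathcal F_t)}$ you rely on are genuinely infinite, not merely unproven. (Your parenthetical claim that in the applications $F$ is polynomial in the space variable is accordingly also incorrect.)

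The paper's proof circumvents exactly this point by using only the \emph{local} boundedness that continuity provides: it introduces the exit time $\tau=\inf\{r>t:\|W_r-W_t\|\ge M\}$, so that on $[t,\tau\wedge(T-\varepsilon)]$ the pair $(r,W_r)$ stays in a compact subset of $[0,T)\times\mathbb R^{n+d}$ and hence $|K_r|\le F(r,W_r)\le\tilde M$ there; on the complementary piece the singular weight is dominated by $(\tau\wedge(T-\varepsilon)-t)^{-q'(1-\alpha')}$, whose conditional expectation (after a further H\"older step) is finite because negative moments of Bessel hitting times are finite (Lemma \ref{bessel}). Some version of this localization --- or an explicit integrability hypothesis on $F(\cdot,W_\cdot)$ added to the statement --- is needed to close your argument; as written, the step ``finite \dots hence bounded on $[t,T']$'' does not follow from the stated hypotheses, and the $q'=1$ patch appeals to properties of the application rather than to the lemma's assumptions.
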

\begin{proof}
We have, by choosing $0<\alpha'<1/2$ and applying H\"older's inequality
\begin{equation*}
\begin{split}
E_t \left[\left(\int_t^T\frac{U_r-U_t}{(r-t)} K_rdr\right)^{p'}\right]
&=E_t\left[\left(\int_t^T\frac{U_r-U_t}{(r-t)^{\alpha'}}\frac{K_r}{{(r-t)^{1-\alpha'}}} dr\right)^{p'}\right]\\
&\leq E_t\left[\left(\sup_{t\leq r\leq T}\frac{|U_r-U_t|}{(r-t)^{\alpha'}}\right)^{p'}\left(\int_t^T\frac{K_r}{{(r-t)^{1-\alpha'}}} dr\right)^{p'}\right]\\
&\leq E_t\left[\left(\sup_{t\leq r\leq T}\frac{|U_r-U_t|}{(r-t)^{\alpha'}}\right)^{pp'}\right]^{1/p}E_t\left[\left(\int_t^T\frac{K_r}{{(r-t)^{1-\alpha'}}} dr\right)^{p'q}\right]^{1/q}\\
&= E_t\left[\sup_{t\leq r\leq T}\frac{|U_r-U_t|^{pp'}}{(r-t)^{pp'\alpha'}}\right]^{1/p}E_t\left[\left(\int_t^T\frac{K_r}{{(r-t)^{1-\alpha'}}} dr\right)^{p'q}\right]^{1/q}\\
&\leq C E_t\left[\left(\int_t^T\frac{K_r}{{(r-t)^{1-\alpha'}}} dr\right)^{p'q}\right]^{1/q}\\
\end{split}
\end{equation*}
by Lemma \ref{brownian}, where the $p >1$ used above is arbitrary.
Now set $p'q=q'$ and recall that $q'>1$ and it can be chosen arbitrarily close to $1$.
Now define
$$\tau=\inf\{r>t: \|W_r-W_t \|\geq M\}, \quad \inf \emptyset =+\infty ,$$
and notice that, for any $0<\varepsilon < T-t$, when $t\leq r\leq \tau\wedge {(T-\varepsilon)}$ we have $|K_r|\leq \tilde M$, where $\tilde M$ is a constant depending on $M$ and on the function $F$. Thus we obtain
\begin{equation*}
\begin{split}
& E_t\left[\left(\int_t^T\frac{K_r}{{(r-t)^{1-\alpha'}}} dr\right)^{q'}\right]\\
&\leq E_t\left[\left(\int_t^{\tau\wedge {(T-\varepsilon)}}\frac{K_r}{{(r-t)^{1-\alpha'}}} dr\right)^{q'}+\left(\int_{\tau\wedge {(T-\varepsilon)}}^T\frac{K_r}{{(r-t)^{1-\alpha'}}} dr\right)^{q'}\right]\\
&\leq C+E_t\left[\left(\int_{\tau\wedge {(T-\varepsilon)}}^T\frac{K_r}{{(r-t)^{1-\alpha'}}} dr\right)^{q'}\right]\\
&\leq C+E_t\left[\frac{1}{{(\tau\wedge {(T-\varepsilon)}-t)^{q'(1-\alpha')}}}\left(\int_{\tau\wedge {(T-\varepsilon)}}^T|K_r| dr\right)^{q'}\right]\\
&\leq C+E_t\left[\frac{1}{{(\tau\wedge {(T-\varepsilon)}-t)^{lq'(1-\alpha')}}}\right]^{1/l}E_t\left[\left(\int_{\tau\wedge {(T-\varepsilon)}}^T|K_r| dr\right)^{q'\frac{l}{1-l}}\right]^{\frac{1-l}{l}}\\
&\leq C+CE_t\left[\frac{1}{{(\tau \wedge {(T-\varepsilon)}-t)^{lq'(1-\alpha')}}}\right]^{1/l}.\\
\end{split}
\end{equation*}
To conclude the proof it suffices to show that the expectation in the RHS of the last inequality is finite. This is a straightforward consequence of Lemma \ref{bessel} below since, conditionally to $\mathcal F_t$, the process $( \|W_{t+u}-W_t \| )_{u\geq 0}$ is clearly a Bessel process of dimension $n+d$ and $lq'(1-\alpha') >1$.
\qed\end{proof}

\begin{lemma}\label{bessel}
Let $R$ be a Bessel process of any positive integer dimension $k\geq 1$ with $R_0 = 0$. Let $\tau_b := \inf \{ t \geq 0 : R_t = b\}$ (with the convention $\inf \emptyset = \infty$) its first hitting time of a level $b>0$.
Then we have that $E[\tau_b ^{-p}] < \infty$ for any $p\geq 1$.
\end{lemma}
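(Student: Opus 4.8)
The plan is to reduce the claim to a small-time tail estimate for $\tau_b$. By the layer-cake representation of negative moments,
\begin{equation*}
E[\tau_b^{-p}] = p\int_0^\infty t^{-p-1}\,P(\tau_b \le t)\,dt ,
\end{equation*}
and the contribution of $t\in[1,\infty)$ is bounded by $p\int_1^\infty t^{-p-1}\,dt<\infty$; so everything comes down to showing that $P(\tau_b\le t)$ decays faster than every power of $t$ as $t\to 0^+$.

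To obtain such a bound I would exploit that, $k$ being a positive integer, $R$ may be realized as $R_t=\|\beta_t\|$ for a standard $k$-dimensional Brownian motion $\beta=(\beta^1,\dots,\beta^k)$ started at the origin. On the event $\{\tau_b\le t\}=\{\sup_{0\le s\le t}\|\beta_s\|\ge b\}$ at least one coordinate must exceed $b/\sqrt{k}$ in absolute value before time $t$, hence
\begin{equation*}
P(\tau_b\le t)\le\sum_{i=1}^k P\Big(\sup_{0\le s\le t}|\beta^i_s|\ge b/\sqrt{k}\Big)\le 4k\,e^{-b^2/(2kt)},
\end{equation*}
where the last inequality follows from the reflection principle, which gives $P(\sup_{0\le s\le t}|\beta^i_s|\ge a)\le 4\,P(\beta^i_t\ge a)$, combined with the Gaussian tail bound $P(\beta^i_t\ge a)\le e^{-a^2/(2t)}$.

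Finally I would plug $P(\tau_b\le t)\le Ce^{-c/t}$ (with $C=4k$, $c=b^2/(2k)$) back into the integral representation: on $(0,1]$ the function $t^{-p-1}e^{-c/t}$ is bounded, so $p\int_0^1 t^{-p-1}Ce^{-c/t}\,dt<\infty$, and together with the elementary bound on $[1,\infty)$ this yields $E[\tau_b^{-p}]<\infty$ for every $p\ge 1$ (in fact for every $p>0$).

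The argument is routine and I do not expect any genuine obstacle; the only points that deserve a line of care are the layer-cake representation of negative moments and the coordinatewise use of the reflection principle. If one wanted to allow a non-integer Bessel dimension — which is not needed here — one could instead invoke directly the known small-time estimate $P(\tau_b\le t)=O(e^{-b^2/(2t)})$ for Bessel processes and run the same final integration.
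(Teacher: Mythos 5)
Your proof is correct, but it takes a genuinely different route from the one in the paper. The paper computes $E[\tau_b^{-(n+1)}]$ via the identity $t^{-(n+1)} = n!\int_0^\infty x^n e^{-tx}\,dx$, plugs in the explicit Laplace transform $E[e^{-x\tau_b}]$ of the Bessel hitting time (quoting G\"oing-Jaeschke and Yor), and reduces everything to the integrability of $y^{\nu+1+2n}/I_\nu(y)$, which is checked from the small- and large-argument asymptotics of the modified Bessel function $I_\nu$. You instead use the layer-cake formula for negative moments together with the elementary small-time tail bound $P(\tau_b\le t)\le 4k\,e^{-b^2/(2kt)}$, obtained by writing $R_t=\|\beta_t\|$ for a $k$-dimensional Brownian motion and applying the reflection principle coordinatewise; all the steps (the layer-cake identity, the inclusion $\{\sup_s\|\beta_s\|\ge b\}\subset\bigcup_i\{\sup_s|\beta^i_s|\ge b/\sqrt k\}$, the Chernoff bound $P(\beta^i_t\ge a)\le e^{-a^2/(2t)}$, and the boundedness of $t^{-p-1}e^{-c/t}$ near $0$) check out. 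Your argument is more elementary and self-contained — it needs no special functions and in fact yields finiteness of all moments $E[\tau_b^{-p}]$ for every $p>0$ — but it uses the integer-dimension hypothesis essentially, whereas the paper's Laplace-transform computation would go through verbatim for an arbitrary (non-integer) Bessel index $\nu$. Since the lemma is stated only for positive integer dimension (and is applied to $\|W_{t+u}-W_t\|$ with $W$ an $(n+d)$-dimensional Brownian motion), your restriction is harmless, and your closing remark correctly identifies how one would handle the general case.
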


\begin{proof} First notice that $t^{-(n+1)} = n! \int_0 ^\infty x^n e^{-tx} dx$ for all $n\geq 0$. Replacing $t$ with $\tau_b$, taking expectations on both sides and using Fubini's theorem, we get
\[ E\left[ \tau_b ^{-(n+1)}\right] = n! \int_0 ^\infty x^n E\left[e^{-x \tau_b}\right] dx .\]
The Laplace transform for the hitting time $\tau_b$ ($b>0$) of a $k$-dimensional Bessel process starting from zero is given by (see, e.g., \cite{going-yor})
\[  E\left[e^{-x \tau_b}\right]  = \left(\frac{x}{2}\right)^{\nu/2} \Gamma^{-1}(\nu+1)\frac{b^\nu}{I_\nu (b\sqrt{2x})},\]
where $\nu = k/2 -1$ is the index of the Bessel process $R$, $\Gamma$ denotes the Gamma function and $I_\nu$ is the modified Bessel function of the first kind of order $\nu$. Thus, to conclude the proof it suffices to show that
\[ \int_0 ^\infty \frac{x^{n+\frac{\nu}{2}}}{I_\nu (b\sqrt{2x})} dx = C \int_0 ^\infty \frac{y^{\nu+1+2n}}{I_\nu (y)}dy < \infty,\]
for a constant $C>0$, which easily follows from the asymptotic behavior of the modified Bessel function $I_\nu (y)$ for small and large $y$ given in \cite{lebedev} (relations 5.16.4 and 5.16.5).
\qed\end{proof}

\begin{lemma}   \label{L2_selling}
Let $f$ be a payoff satisfying Assumption \ref{bounds} with super-replicating portfolio process $V_t := V^{v_1}_t(\pi_1)$ expressed under the MMM $Q^0$ as
$$V_t=\tilde f -\int_t^T L_s dW^{S,0}_s ,\quad \tilde f = V_T ,$$
where $L$ is some adapted process satisfying
$$E^0\left[\left(\int_0^{T}\|L_s\|^2ds\right)^p\right]<\infty$$
for some $p>1$. Then the solution $(Y,Z)$ of \eqref{BSDE_selling} also verifies
\begin{equation*}
\begin{split}
  E^0\left[\left(\int_0^{T}\|Z_s\|^2ds\right)^p\right]<\infty.\\
\end{split}
\end{equation*}
\end{lemma}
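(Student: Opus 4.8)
The plan is to follow the template of the a priori estimate in Lemma~\ref{L2}, adapting it to the ``unfavourable'' sign of the quadratic term in the selling BSDE~\eqref{BSDE_selling}.

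First I would reduce to a non‑negative terminal condition. Note that the pair $(V,(L,0_d))$ solves \eqref{BSDE_selling} with terminal value $\tilde f$: its $X$‑component vanishes, so the driver contributes nothing and the equation reads $V_t=\tilde f-\int_t^T L_s dW^{S,0}_s$. Since $f\geq\tilde f$ by Assumption~\ref{bounds}, the comparison theorem for quadratic BSDEs (applied along the bounded‑terminal approximants used to construct the solution of \eqref{BSDE_selling}, exactly as in Lemma~\ref{existence_BSDE} and Remark~\ref{rmk_selling}, each such approximant being automatically $\geq 0$ by taking conditional expectations) gives $Y_t\geq V_t$. Setting $\bar Y:=Y-V\geq 0$, $\bar Z:=Z-(L,0_d)=(Z^S-L,Z^X)$ and $\bar f:=f-\tilde f\geq 0$, the pair $(\bar Y,\bar Z)$ solves
\[ \bar Y_t=\bar f+\int_t^T\frac{\gamma}{2}\|\bar Z^X_s\|^2 ds-\int_t^T\bar Z_s dW^0_s . \]
Because $\|Z^S\|^2\leq 2\|\bar Z^S\|^2+2\|L\|^2$ and $E^0[(\int_0^T\|L_s\|^2 ds)^p]<\infty$ by hypothesis, it suffices to bound $E^0[(\int_0^T\|\bar Z_s\|^2 ds)^p]$.

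Then I would run the a priori estimate. Apply It\^o's formula to $u(\bar Y_t)$ with $u(x)=\frac{1}{\gamma^2}(e^{\gamma x}-1-\gamma x)$, the analogue, for the $+\frac{\gamma}{2}\|Z^X\|^2$ driver, of the function used in Lemma~\ref{L2}: on $[0,\infty)$ one has $u\geq 0$, $0\leq u'(x)=\frac1\gamma(e^{\gamma x}-1)$, $u''(x)=e^{\gamma x}\geq 1$ and the identity $u''(x)-\gamma u'(x)=1$. Using $d\bar Y_t=-\frac{\gamma}{2}\|\bar Z^X_t\|^2 dt+\bar Z_t dW^0_t$, the drift of $u(\bar Y)$ equals $\frac12(u''-\gamma u')(\bar Y_s)\|\bar Z^X_s\|^2+\frac12 u''(\bar Y_s)\|\bar Z^S_s\|^2\geq\frac12\|\bar Z_s\|^2$, so with the localizing times $\tau_\kappa=\inf\{t:\int_0^t\|\bar Z_s\|^2 ds\geq\kappa\}\wedge T$,
\[ \frac12\int_0^{t\wedge\tau_\kappa}\|\bar Z_s\|^2 ds\leq u(\bar Y_{t\wedge\tau_\kappa})-u(\bar Y_0)-\int_0^{t\wedge\tau_\kappa}u'(\bar Y_s)\bar Z_s dW^0_s\leq u(\bar Y_{t\wedge\tau_\kappa})+\Big|\int_0^{t\wedge\tau_\kappa}u'(\bar Y_s)\bar Z_s dW^0_s\Big|. \]
Raising to the power $p$, taking $E^0$, using Burkholder--Davis--Gundy on the stochastic integral and then Cauchy--Schwarz, one gets, with $\Phi_\kappa:=E^0[(\int_0^{t\wedge\tau_\kappa}\|\bar Z_s\|^2 ds)^p]$ (finite, $\leq\kappa^p$, by the choice of $\tau_\kappa$), an inequality of the form
\[ \Phi_\kappa\leq C\,E^0\big[\,u(\bar Y_{t\wedge\tau_\kappa})^p\,\big]+C\,\big(E^0[\sup_{s\leq T}u'(\bar Y_s)^{2p}]\big)^{1/2}\,\Phi_\kappa^{1/2}. \]
If the two expectations on the right are finite uniformly in $\kappa$, this quadratic inequality in $\Phi_\kappa^{1/2}$ gives $\sup_\kappa\Phi_\kappa<\infty$, and Fatou's lemma yields $E^0[(\int_0^T\|\bar Z_s\|^2 ds)^p]<\infty$, hence the claim.

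The main obstacle is the last point: since $u$ grows exponentially (in contrast to Lemma~\ref{L2}, where the favourable sign permits a $u$ with bounded derivative), controlling $E^0[\sup_{s\leq T}u(\bar Y_s)^p]$ and $E^0[\sup_{s\leq T}u'(\bar Y_s)^{2p}]$ amounts to exponential moments of $\bar Y$. The key fact to establish here is that for every $c\geq 1$ the process $e^{c\gamma\bar Y_t}$ is a non‑negative $Q^0$‑submartingale: It\^o gives $d(e^{c\gamma\bar Y_t})=e^{c\gamma\bar Y_t}\big[\frac{c\gamma^2}{2}\big((c-1)\|\bar Z^X_t\|^2+c\|\bar Z^S_t\|^2\big)dt+c\gamma\,\bar Z_t dW^0_t\big]$, which has non‑negative drift. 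By Doob's maximal inequality $E^0[\sup_{s\leq T}e^{c\gamma\bar Y_s}]$ is then bounded by a power of $E^0[e^{c'\gamma\bar f}]$ for a suitable $c'>c$, and this is finite under the standing growth hypotheses on the payoff (with $f$ of at most logarithmic growth in $s$, $e^{c'\gamma f}$ has polynomial growth and hence all moments under $Q^0$, while the sub‑replicating terminal $\tilde f=V^{v_1}_T(\pi_1)$ is taken with the requisite lower integrability). So the crux of the proof is precisely combining the submartingale property of $e^{c\gamma\bar Y}$ with enough exponential integrability of the terminal data to feed Doob's inequality; the rest is the standard BDG/Young absorption argument.
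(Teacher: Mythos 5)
There is a genuine gap, and it sits exactly where you flag ``the main obstacle'': the exponential moments you need are not available under the hypotheses of the lemma. Your reduction subtracts the \emph{sub}-replicating portfolio (reading $V^{v_1}$ literally, so $\tilde f\leq f$), which leaves you with a BSDE for $\bar Y=Y-V\geq 0$ that still carries the unfavourable driver $+\frac{\gamma}{2}\|\bar Z^X\|^2$. You are then forced into the Lyapunov function $u(x)=\frac{1}{\gamma^2}(e^{\gamma x}-1-\gamma x)$, whose derivatives grow exponentially, and the BDG/absorption step requires $E^0[\sup_s u'(\bar Y_s)^{2p}]<\infty$, i.e.\ $E^0[e^{c\gamma \bar f}]<\infty$ for some $c$ depending on $p$. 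But the lemma assumes only Assumption \ref{bounds} plus $L\in$ (an $L^{2p}$-type space); the whole point of the paper is to cover payoffs such as linear functions of $S_T$, for which $e^{c\gamma f}$ is \emph{not} integrable (geometric Brownian motion has no exponential moments). Your appeal to ``logarithmic growth in $s$'' imports Assumption \ref{Ass:f_noncont}, which is not a hypothesis of this lemma; and even granting it, $\bar f=f-\tilde f$ contains $-\tilde f=-V^{v_1}_T(\pi_1)$, a stochastic integral against $S$, whose positive exponential moments there is again no reason to have. So the quadratic inequality in $\Phi_\kappa^{1/2}$ cannot be closed.

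The paper's proof avoids this entirely by going the other way at the fork: despite the notation $V^{v_1}$ in the statement, it uses the \emph{super}-replicating side, setting $U_t=V_t-Y_t$ with $V_T=\tilde f\geq f$, so that $U\geq 0$ by \eqref{bounds_conditional}. Writing the dynamics of $U$ with $\tilde Z:=(L,0_d)-Z$, the quadratic term changes sign and one finds that $(U,\tilde Z)$ solves the \emph{buying}-type BSDE \eqref{BSDE_MMM} with non-negative terminal condition $\tilde f-f$. One is then exactly in the setting of Lemma \ref{L2}, whose Lyapunov function $\frac{1}{\gamma^2}(e^{-\gamma x}-1+\gamma x)$ has \emph{bounded} derivatives on $\mathbb R_+$, so the estimate closes using only the $L^p$ control of $\int_0^T\|L_s\|^2\,ds$ and no exponential integrability whatsoever. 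The moral is that for the selling price one must subtract $Y$ \emph{from} a dominating portfolio (rather than subtract a dominated portfolio from $Y$) precisely so as to flip the sign of the quadratic driver; your step 1, while correct as stated, normalizes the terminal condition without fixing the sign, and the sign is what makes the difference between needing bounded and exponential Lyapunov derivatives.
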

\begin{proof}
Define
$$U_t=V_t-Y_t=\tilde f-f+\frac{\gamma}{2}\int_t^T \| Z^X_s\|^2ds-\int_t^T((L_s,0)- Z_s) dW^0_s .$$
Clearly $U_t\geq 0$.
Now if the conditions are satisfied, then following the proof of Lemma \ref{L2} we deduce that
$$E^0\left[\left(\int_0^{t}\|(L_s,0)-Z_s\|^2ds\right)^p\right]\leq C E^0\left[\left(\int_0^t\| L_s\|^2 ds\right)^{p/2}+1\right]$$
for some constant $C$, which implies the result.
\qed\end{proof}

\end{document}